\newcounter{observ}
\newtheorem{observation}[observ]{Observation}
\newtheorem{brarule}{Branching Rule}
\newtheorem{redrule}{Reduction Rule}
\newcommand{\Oh}{\mathcal{O}}
\newcommand\NP{\ensuremath{\textsf{NP}}\xspace}
\begin{document}

\title{Enumerating Connected Dominating Sets}
\author{Faisal N. Abu-Khzam\inst1\orcidID{0000-0001-5221-8421} 
\and Henning Fernau\inst2\orcidID{0000-0002-4444-3220}
\and Benjamin Gras
\inst3\orcidID{0000-0003-3989-063X}
\and Mathieu Liedloff\inst3\orcidID{
0000-0003-2518-606X}
\and Kevin Mann\inst2\orcidID{0000-0002-0880-2513} 
}
\authorrunning{F. Abu-Khzam, H. Fernau, B. Gras, M. Liedloff and K. Mann.}
\institute{
Department of Computer Science and Mathematics\\
Lebanese American University, 
Beirut, Lebanon.\\
\email{faisal.abukhzam@lau.edu.lb}\\
\and
Universit\"at Trier, Fachbereich~4 -- Abteilung Informatikwissenschaften\\  
54286 Trier, Germany.\\
\email{\{fernau,mann\}@uni-trier.de}
\and
Université d'Orléans, INSA Centre Val de Loire, LIFO EA 4022, Orléans, France.\\
\email{\{bejamin.gras,mathieu.liedloff\}@univ-orleans.fr}
}
\maketitle 


\newcommand{\cdsbasis}{1.9767}
\newcommand{\genbasis}{1.9896}
\newcommand{\avalue}{0.106}
\newcommand{\dvalue}{0.106}

\newcommand{\oldcdslowerbound}{1.4422}
\newcommand{\newcdslowerbound}{1.4890}
\newcommand{\newcdsthreedegenlowerbound}{1.4723}
\newcommand{\newcdstwodegenlowerbound}{1.3195}

\begin{abstract}
The question to enumerate all (inclusion-wise) minimal connected dominating sets in a graph of order~$n$ in time significantly less than $2^n$ is an open question that was asked in many places.
We answer this question affirmatively, by providing an enumeration algorithm that runs in time $\Oh(\genbasis^n)$, using polynomial space only. The key to this result is the consideration of this enumeration problem on 2-degenerate graphs, which is proven to be possible in time $\Oh(\cdsbasis^n)$. Apart from solving this old open question, we also show new lower bound results. More precisely, we construct a family of graphs of order~$n$ with $\Omega(\newcdslowerbound^n)$ many minimal connected dominating sets, while previous examples achieved 
$\Omega(\oldcdslowerbound^n)$.
Our example happens to yield 4-degenerate graphs. 
Additionally, we give lower bounds for the previously not considered classes of 2-degenerate and of 3-degenerate graphs, which are $\Omega(\newcdstwodegenlowerbound^n)$ and
$\Omega(\newcdsthreedegenlowerbound^n)$, respectively. 

We also address essential questions concerning output-sensitive enumeration. Namely, we
give reasons why our algorithm cannot be turned into an enumeration algorithm that guarantees polynomial delay without much efforts. More precisely, we prove that it is \NP-complete to decide, given a graph~$G$ and a vertex set~$U$, if there exists a minimal connected dominating set~$D$ with $U\subseteq D$, even if $G$ is known to be 2-degenerate. Our reduction also shows that even any subexponential delay is not easy to achieve for enumerating minimal connected dominating sets. 
Another reduction shows that no \textsf{FPT}-algorithms can be expected for this extension problem concerning minimal connected dominating sets, parameterized by $|U|$. 
This also adds one more problem to the still rather few natural parameterized problems that are complete for the class \textsf{W}[3].
We also relate our enumeration problem  to the famous open \textsc{Hitting Set Transversal}  problem, which can be phrased in our context as the question to enumerate all minimal dominating sets of a graph with polynomial delay by showing that a polynomial-delay enumeration algorithm for minimal connected dominating sets implies an
affirmative algorithmic solution to the \textsc{Hitting Set Transversal}  problem.
\end{abstract}

\newpage

\section{Introduction}

The enumeration of objects that satisfy a given property has applications in many scientific domains including biology and artificial intelligence. Enumeration can also be used as part of an exact algorithm, \emph{e.g.,} confer the algorithm by Lawler \cite{Law76} to compute a coloring of an input graph using a minimum number of colors. The dynamic programming scheme used by this algorithm needs all the maximal independent sets of the input graph. It is worth noting that the running time depends mainly on a bound on the number of maximal independent sets as well as on the running time of an algorithm that would produce all these sets.

Clearly, the number of outputs of an enumeration algorithm can be exponential in the size of the given input. 
It is the case for the number of maximal independent sets: there are graphs with $3^{n/3}$ such sets \cite{MooMos65}, where $n$ is the number of vertices in the graph. The running time of enumeration algorithms can either be measured with respect to the size of the input plus the size of the outputted set of objects, which is called \emph{output-sensitive} analysis, or it can be measured according to the size of the input only, being called \emph{input-sensitive} analysis. In the latter, the running time upper bound often implies an upper bound on the number of enumerated objects, \emph{i.e.}, the maximum number of objects that can fulfill the given property. 

Given a graph $G=(V,E)$, the problem of computing a minimum dominating set asks for a smallest-cardinality subset $S\subseteq V$ such that each vertex not in $S$ has at least one neighbor in~$S$. This well studied \NP-hard problem attracted considerable attention for decades. Several exponential-time algorithms have been designed to solve the problem exactly, and the most recent are based on \emph{Measure-and-Conquer} techniques to analyze their running times \cite{FomGraKra2009,NedRooDij2014,Iwa1112}. The problem of enumerating all inclusion-minimal dominating sets has also caught attention for general graphs as well as for special graph classes \cite{Fometal2008a,CouLetLie2015,GolHKKV2017}.

Many variants of the dominating set problem have also gained attention~\cite{HHS98}. In particular, the minimum \emph{connected} dominating set problem requires that the graph induced by~$S$ be connected.
The problem has attracted great attention and various methods have been devised to solve it exactly \cite{AbuMouLie2011,FomGraKra2008}. A more challenging question has been posed about the enumeration of inclusion-minimal {connected} dominating sets. Already designing an algorithm faster than $\text{poly}(n) 2^n$ is known to be challenging, and this specific question has been asked several times as an open problem \cite{Utrecht-TR-2015-016,FerGolSag2018,GolHegKra2016}. 
A recent result by Lokshtanov \emph{et al.}~\cite{LokPilSau2018} shows that minimal {connected} dominating sets can be enumerated in time $2^{(1-\epsilon)n}\cdot n^{\mathcal{O}(1)}$, which broke the $2^n$-barrier for the first time. It is worth noting that $\epsilon$ is a tiny constant, around $10^{-50}$, and it has remained open whether an algorithm exists that can substantially break the $2^n$-barrier. The enumeration of minimal {connected} dominating sets also received notable interest when the input is restricted to special graph classes \cite{GolHegKra2016,GolHKS2020,Skj2017,Say2019,Say2019a}. 

On the other hand, the maximum number of minimal CDS in a graph was shown to be in $\Omega(3^{\frac{n}{3}})$ \cite{GolHegKra2016}, which is obviously very low compared to the currently best upper bound. This gap between upper and lower bounds is narrower when it comes to special graph classes. On chordal graphs, for example, the upper bound has been recently improved to $\Oh(1.4736^n)$ \cite{GolHKS2020}. Other improved lower/upper bounds have been obtained for AT-free, strongly chordal, distance-hereditary graphs, and cographs in \cite{GolHegKra2016}. Further improved bounds for split graphs, cobipartite and convex bipartite graphs have been obtained in \cite{Skj2017} and 
\cite{Say2019}. Moreover, although the optimization problem seems simpler, the best-known exact algorithm solves the problem in time $O(1.8619^n)$ \cite{AbuMouLie2011}. This is already much larger than the best-known lower bounds of $3^{(n-2)/3}$ \cite{GolHegKra2016} to enumerate all minimal connected dominated sets.

In this paper, we show that the enumeration of all inclusion-wise minimal connected dominating sets can be achieved in time $\Oh(\genbasis^n)$. Surprisingly, achieving this improvement was simply based on first considering the same enumeration on 2-degenerate graphs and proving it to be possible in time $\Oh(\cdsbasis^n)$. Achieving enumeration with polynomial delay is believed to be hard, since it would also lead to the same for the enumeration of minimal dominating sets, which has been open for several decades. We give further evidence of this (possible) hardness by showing that extending a subset of vertices into a minimal connected dominating set is \NP-complete and also hard in a natural parameterized setting. Furthermore, we narrow the gap between upper and lower bounds by showing that the maximum number of minimal connected dominating sets in a graph is in $\Omega(\newcdslowerbound^n)$, thus improving the previous lower bound of $\Omega(\oldcdslowerbound^n)$. Our construction yields new lower bounds on several special graph classes such as 3-degenerate planar bipartite graphs.

\section{Definitions, Preliminaries and Summary of Main Results}

In this paper, we deal with undirected simple finite graphs that can be specified as $G=(V,E)$, where $V$ is the finite vertex set and $E\subseteq\binom{V}{2}$ is the set of edges. The number of vertices $|V|$ is also called the \emph{order} of graph~$G$ and is denoted by $n$.
An edge $\{u,v\}$ is usually written as $uv$.  Alternatively, $E$ can be viewed as a symmetric binary relation, so that $E^*$ is then the transitive closure of $E$, which is an equivalence relation whose equivalence classes are also known as \emph{connected components}.
A graph is called \emph{connected} if it has only one connected component. A graph  $G'=(V',E')$ is a \emph{subgraph} of $G=(V,E)$ if $V'\subseteq V$ and $E'\subseteq E$; $G'$ is a \emph{partial graph} of~$G$ if $V=V'$.
A set of vertices~$S$ \emph{induces} the subgraph $G[S]=(S,E_S)$, where $E_S=\{uv\in E\mid u,v\in S\}$;
$S$ is called \emph{connected} if $G[S]$ is connected.
For a vertex $v\in V$, $N_G(v)=\{u\in V\mid uv\in E\}$ is the \emph{open neighborhood} of~$v$, collecting the vertices \emph{adjacent} to~$v$; its cardinality $|N_G(v)|$ is also called the \emph{degree} of~$v$, denoted as $\deg_G(v)$. We denote the \emph{closed neighborhood} of~$v$ by $N_G[v]=N_G(v)\cup\{v\}$. We can extend set-valued functions to set arguments; for instance,  $N_G[S]=\bigcup_{v\in S}N_G[v]$ for a set of vertices~$S$; $S$ is a \emph{dominating set} if $N_G[S]=V$. Whenever clear from context, we may drop the subscript $G$ from our notation. If $X\subseteq V$, we also write $N_X(v)$ instead of $N(v)\cap X$ and $\deg_X(v)$ for 
$|N(v)\cap X|$. For brevity, we write CDS for \emph{connected dominating set}.
Next, we collect some observations.

\begin{observation}\label{obs-CDS-on-partialgraphs}
If $S$ is a CDS of a partial graph $G'$ of $G$, then $S$ is a CDS of~$G$.
\end{observation}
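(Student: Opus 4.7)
The plan is to unfold the definition of a CDS and check both defining properties (domination and connectedness) separately, exploiting the fact that a partial graph $G'=(V,E')$ of $G=(V,E)$ has exactly the same vertex set while $E'\subseteq E$. Both properties should transfer from $G'$ to $G$ essentially for free, since adding edges can only help.

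For domination, I would first observe that for every vertex $v\in V$, the inclusion $E'\subseteq E$ gives $N_{G'}(v)\subseteq N_G(v)$, hence $N_{G'}[v]\subseteq N_G[v]$, and therefore $N_{G'}[S]\subseteq N_G[S]$ for the given set $S$. Since $S$ dominates $G'$, we have $N_{G'}[S]=V$, which forces $N_G[S]=V$, so $S$ dominates $G$.

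For connectedness, I would note that $G'[S]$ and $G[S]$ have the same vertex set $S$, and the induced edge set of $G'[S]$ is contained in that of $G[S]$; equivalently, $G'[S]$ is a spanning partial graph of $G[S]$. Since adding edges never disconnects a connected graph, connectedness of $G'[S]$ immediately implies connectedness of $G[S]$. Combining the two parts, $S$ is a CDS of $G$.

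There is no real obstacle here: the statement simply records that both defining properties of a CDS are monotone under edge addition within a fixed vertex set, so it reduces to two one-line checks from the definitions. The usefulness of the observation lies not in its depth but in its later application, where it will license passing to sparser partial graphs (e.g.\ spanning subgraphs of low degeneracy) without losing any minimal CDSs of the original graph.
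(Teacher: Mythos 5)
Your proposal is correct and follows essentially the same argument as the paper: domination transfers because $N_{G'}[S]\subseteq N_G[S]$, and connectedness transfers because adding edges cannot disconnect $G'[S]$. You merely spell out the two monotonicity checks in slightly more detail than the paper's two-line proof.
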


\begin{proof}
We can think of $G=(V,E)$ as being obtained from $G'$ by adding edges. Hence, if $N_{G'}[S]=V$, then $N_G[S]=V$. Moreover, adding edges cannot violate connectivity.
\qed \end{proof}

\begin{corollary}\label{cor-minCDS-on-partialgraphs}
Let  $S$ be a CDS both of $G$ and of a partial graph $G'$ of $G$. If $S$ is a minimal CDS of $G$, then it is a minimal CDS of~$G'$.
\end{corollary}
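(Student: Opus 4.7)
The plan is to prove this by contradiction, leveraging the previous observation as a black box. Suppose $S$ is a CDS of both $G$ and the partial graph $G'$, and that $S$ is a minimal CDS of $G$, but (towards contradiction) assume that $S$ is not a minimal CDS of $G'$.

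By non-minimality of $S$ in $G'$, there must exist some vertex $v\in S$ such that $S\setminus\{v\}$ is still a CDS of $G'$. The key step is then to observe that $G'$ is itself a partial graph of $G$ (same vertex set, fewer edges), so Observation~\ref{obs-CDS-on-partialgraphs} applies directly to the smaller set: since $S\setminus\{v\}$ is a CDS of the partial graph $G'$, it is a CDS of~$G$ as well. But this means a proper subset of $S$ dominates and stays connected in $G$, contradicting the assumed minimality of $S$ as a CDS of~$G$.

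There is no real obstacle here; the statement is essentially a one-line corollary of the previous observation, whose whole point is to exploit the fact that adding edges can neither destroy domination nor destroy connectivity. The only thing to be careful about is to state clearly that the hypothesis ``$S$ is a CDS of $G'$'' is used to even talk about minimality in $G'$, while the actual contradiction is driven by pulling a hypothetical smaller CDS of $G'$ back up to $G$ via Observation~\ref{obs-CDS-on-partialgraphs}.
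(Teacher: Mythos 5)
Your proof is correct and takes essentially the same route as the paper, which argues in contrapositive form: any proper subset of $S$ that is a CDS of $G'$ lifts to a CDS of $G$ by Observation~\ref{obs-CDS-on-partialgraphs}, contradicting minimality of $S$ in $G$. The only cosmetic difference is that you pass to a single-vertex deletion $S\setminus\{v\}$ (which implicitly uses that supersets of a CDS of $G'$ remain CDSs of $G'$), whereas the paper works directly with an arbitrary proper subset $X\subsetneq S$ and so avoids even that small extra step.
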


\begin{proof}
Consider such a CDS $S$ on $G$ and $G'$.
If $X\subsetneq S$ is a CDS on $G'$, then it is also a CDS on $G$ by
Observation~\ref{obs-CDS-on-partialgraphs}.
Hence, if $S$ is a minimal CDS on $G$, then it is also a minimal CDS on the partial graph~$G'$. 
\qed \end{proof}

A graph $G=(V,E)$ is \emph{$d$-degenerate} if there exists an \emph{elimination ordering} $(v_1,\dots,v_n)$, where $V=\{v_1,\dots,v_n\}$, such that $$\forall i=1,\dots,n:\deg_{G[\{v_i,\dots,v_n\}]}(v_i)\leq d\,.$$ In other words, we can subsequently delete $v_1,v_2,\dots$ from $G$, and at the time when $v_i$ is deleted, it has degree bounded by $d$ in the remaining graph. The decision problem \textsc{Connected Dominating Set Extension} expects as inputs a graph $G=(V,E)$ and a vertex set~$U$, and the question is if there exists a minimal CDS~$S$ that extends~$U$, \emph{i.e.}, for which $S\supseteq U$ holds.

In the next section, we develop a branching algorithm. It is classical to analyze its running-time by solving recurrences of type $T(\mu(\mathcal{I}) ) = \sum_{i=1}^{t} T(\mu(\mathcal{I})-r_i)$. Here, $\mu(\mathcal{I})$ is a measure on the size of the instance. The value of $t$ is the number of recursive calls ($t$ is equal to $1$ for \emph{reduction rules}) and each $r_i$ is (a lower bound on) the reduction of the measure corresponding to the recursive call. We simply denote by $(r_1, r_2, \dots, r_k)$ the \emph{branching vector} of the recurrence. We refer to the book by Fomin and Kratsch for further details on this standard analysis~\cite{FomKra2010}.

\medskip

As discussed in the introduction, we shall first prove that all minimal CDS can be enumerated in time $\mathcal{O}(\cdsbasis^n)$ on 2-degenerate graphs. This result is the key to our enumeration result for general graphs.
This is why we will first present the corresponding branching enumeration algorithm for 2-degenerate graphs in a simplified form and analyze it with a rather simple measure in order to explain its main ingredients, and only thereafter, 
we turn towards a refined analysis that finally leads to the claimed enumeration result on general graphs.
We shall prove that our input-sensitive enumeration algorithm cannot be turned into an enumeration algorithm with polynomial delay by simply interleaving the branching with tests for extendibility.
More precisely, we consider the following decision problem \textsc{Connected Dominating Set Extension}: instances are pairs $(G,U)$, where $G=(V,E)$ is a graph and $U\subseteq V$. The question is if there exists a minimal CDS $D$ with $U\subseteq D$ in~$G$.  
For possible applications of such extension algorithms, we refer to the discussions in~\cite{CasFGMS2022}.

\section{A CDS enumeration algorithm for 2-degenerate graphs}
\label{sec-simpleCDSalgo}

We are going to present an algorithm that enumerates all inclusion-wise minimal connected dominating sets (CDS) of a 2-degenerate graph $G=(V,E)$. Based on $G$, in the course of our algorithm, an instance is specified by 
$\mathcal{I}=\left(V';O_d,O_n;S\right)\,,$ consisting of four vertex sets that partition~$V$.
In the beginning, $\mathcal{I}=(V;\emptyset;\emptyset;\emptyset)$.
In general, $V'$ collects the vertices not yet decided by branching or reduction rules,
$O:=O_d\cup O_n$ are the vertices that have been decided \emph{not} to be put into the solution, while~$S$ is the set of vertices decided to go into the solution that is constructed by the branching algorithm. The set~$O$ is further refined into~$O_n$, the set of vertices that are not yet dominated,
\emph{i.e.}, if $x\in O_n$, then $\deg_S(x)=0$, and $O_d$, the set of vertices that are already dominated, \emph{i.e.}, if $x\in O_d$, then $\deg_S(x)>0$. Similarly, we will sometimes refine $V'=V'_d\cup V'_n$. 
In the leaves of the branching tree, only instances of the form $(\emptyset,O_d,\emptyset,S)$ are of interest. Yet, before outputting $S$ as a solution, one has to check if $S$ is connected and if it does not contain a smaller CDS.

The algorithm actually starts by creating $n$ different branches, in each a single vertex is put in $S$ as a starting point, so that $S$ is never empty. More precisely, the $n^\text{th}$ branch would decide \emph{not} to put the previously considered $n-1$ vertices into the solution but the $n^\text{th}$ one is put into~$S$. This binary branching avoids generating solutions twice. Also, it is trivial to check in each branch if the selected vertex already dominates the whole graph, so that we can henceforth assume that $G$ cannot be dominated by a single vertex.

\begin{remark}In the case of 2-degenerate graphs, we can also offer a good combinatorial understanding of graphs that are dominated by a single vertex. Namely, if $G$ is dominated by one single vertex $v$, then $G-v$ is $1$-degenerate, \emph{i.e.}, a forest. If $G-v$ has more than one connected component, then $v$ is the only minimal CDS of~$G$. If $G-v$ is a tree, \emph{i.e.}, it has one connected component, then deleting all leaves from this tree gives another minimal CDS of~$G$, but there are no other minimal CDS in~$G$.
\end{remark}

We denote by $c$ the number of connected components of $G[S]$.
Now, we are ready to define the measure that we use to analyze the running time of our algorithm, following a very simple version of  the \emph{measure-and-conquer}-paradigm, as explained in~\cite{FomGraKra2009,FomKra2010},  
$$\mu(\mathcal{I})=|V'|+ \alpha\cdot |O_n|+\delta \cdot c\,.$$

We decide that $0<\alpha,\delta<1$, but we will determine the concrete values later as to minimize the upper-bound on the running-time. 
At the beginning, $V'=V$, $O_n=\emptyset$ and $c=0$, so that then the measure equals~$|V|$.
At the end, $V'=O_n=\emptyset$ and the measure equals~$\delta$ if the solution is connected and is bigger than $\delta$ if the solution is not connected. 

For the possible branchings, we only consider vertices in a partial graph $G'$ of $G[V'\cup O_n]$. As~$G$ is 2-degenerate, $G'$ is also 2-degenerate, so that we can always find a vertex of degree at most two in~$G'$. Some of our branchings apply to vertices of arbitrary degree, though; in such a situation, we denote the vertex that we branch on as~$x$. If we branch on a small-degree vertex (due to 2-degeneracy), this vertex is called~$u$. Clearly, a binary branch that puts a selected vertex either into~$S$ or into~$O$ is a complete case distinction.

We are now explaining the conventions that we follow in our illustrations of subgraphs of~$G'$.
Vertices in~$V'$ are depicted by~\tikz[fill lower half/.style={path picture={\fill[#1] (path picture bounding box.south west) rectangle (path picture bounding box.east);}}]{\draw (0,0) node[circle,fill lower half=black!40,scale=0.8, draw] {};}
and more specifically by~\tikz{\draw[circle,fill=white, scale=0.8]  circle(1ex);} if in~$V'_n$
or by~\tikz{\draw[circle,fill=black!40, scale=0.8]  circle(1ex);} if in~$V'_d$. 
We use black squares~\tikz{\draw[rectangle,fill=black, scale=0.8] (0,0) rectangle (2ex,2ex);}
to depict vertices which are already decided to belong to the solution~$S$\footnote{It should be clear that one could always move to the graph where vertices in $S$ that belong to the same connected component in $G[S]$ are merged. In order to avoid drawing too many vertices from~$S$ in our pictures, we assume these mergings to have been performed, so that (in particular) when we draw two vertices from~$S$, they belong to different connected components.}. 
We use~\tikz{\draw[rectangle,fill=white,scale=0.8] (0,0) rectangle (2ex,2ex);} for vertices from~$O_n$.
So, circles are used for undecided vertices (these vertices might still be added to~$S$), whereas squares are used for vertices being already decided (to belong to the solution or to be discarded).
 Vertices in~$V'\cup O$ are depicted as half-filled diamonds~\tikz[fill lower half/.style={path picture={\fill[#1] (path picture bounding box.south west) rectangle (path picture bounding box.east);}}]{\draw (0,0) node[diamond,fill lower half=black!40,scale=0.7, draw] {};},
and if  the vertex is from $V'_n\cup O_n$, then we use an unfilled diamond~\tikz{\draw (0,0) node[diamond,scale=0.7, draw] {};} to represent it. A dashed line indicates an edge that may be present.

\medskip

In the following, the branching and reduction rules require to be executed in order, so that our instance will (automatically) satisfy some structural properties when we apply one of the later rules. We can separate our branching and reduction rules into three parts:
\begin{itemize}
    \item A first set of rules (Branching Rules~\ref{branching:dom-nbs-On}, \ref{branching:dom-diff-compos}, \ref{branching:dom-many-udo-nbs-with-one-On-nb}, \ref{branching:dom-nbs-diff-compos}, Reduction Rules~\ref{reduction:dom-2nonadjac-nbs-one-dom}, \ref{reduction:isolates}, \ref{reduction:edge-On}, \ref{reduction:udo-only-one-dom}) that deals with vertices of arbitrary degree that are (possibly) dominated, but only in some special cases. Those rules are applied first, so that whenever we apply a rule from the next two sets, we know that any dominated vertex is dominated by vertices from exactly one connected component of $G[S]$ and none of the vertices in its neighborhood are dominated, i.e., the set of vertices $V'_d$ as well as the set of vertices $O_n$ forms an independent set in $G'$, a partial graph of $G[V'\cup O_n]$. 
    \item A second set of rules (Branching Rules~\ref{branching:udo-2-no-On}, \ref{branching:On-2-no-On}, Reduction Rules~\ref{reduction:On-encircled}) that handles the cases where the small-degree vertex that exists by the $2$-degeneracy is undominated. If we apply a rule from the third set,  every undominated vertex is of degree at least~$3$. 
    \item The last set of rules (Branching Rules~\ref{branching:dom-one-udo-nb-with-no-On-nbs}, \ref{branching:dom-2nbs-udo-common-nb},  \ref{branching:dom-2nbs-udo-nb-y}, \ref{branching:dom-2nbs-with-many-nbs}) handles only the cases where the small-degree vertex that exists by the $2$-degeneracy is dominated.
\end{itemize}
Note that even inside those three sets, rules have to be executed in the given order.

\begin{brarule}\label{branching:dom-nbs-On}
Let $x\in V'_d$  with $\deg_{O_n}(x)\geq 1$ (\autoref{fig:branching:dom-nbs-On}). 
Then branch as follows.
\begin{enumerate}
    \item Put $x$ in $O_d$.
    \item Put $x$ in $S$ and every vertex in $N_{O_n}(x)$ in $O_d$.
\end{enumerate}
\end{brarule}

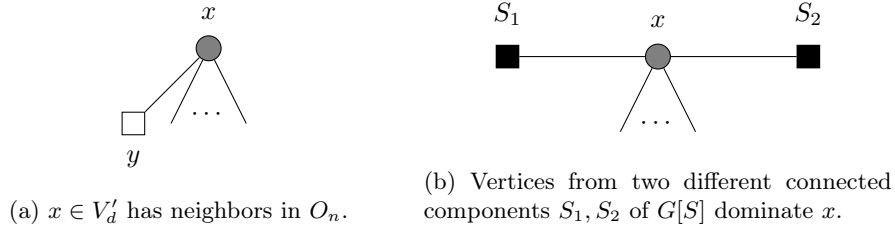
\begin{figure}[bt]
    \centering
    	
\begin{subfigure}[b]{.4\textwidth}
    \centering
    	
	\begin{tikzpicture}[transform shape]
			\tikzset{every node/.style={ fill = black,circle,minimum size=0.3cm}}
			\node[draw,fill=gray,label={above:$x$}] (u) at (7,0) {};
			\node[draw,rectangle,fill=white,label={below:$y$}] (v1) at (6,-1) {};
			\node[fill=none,label={below:$\cdots$}] at (7,-0.3) {};
			\path (u) edge[-] (v1);
			\path (u) edge (6.5,-1);
			\path (u) edge (7.5,-1);
        \end{tikzpicture}

    \subcaption{$x\in V'_d$ has neighbors in $O_n$.}
    \label{fig:branching:dom-nbs-On}
\end{subfigure}
\qquad
\begin{subfigure}[b]{.51\textwidth}
    \centering
    	
	\begin{tikzpicture}[transform shape]
			\tikzset{every node/.style={ fill = black,circle,minimum size=0.3cm}}
			\node[draw,fill=gray,label={above:$x$}] (x) at (7,0) {};
			\node[draw,rectangle,fill=black,label={above:$S_1$}] (s1) at (5,0) {};
			\node[draw,rectangle,fill=black,label={above:$S_2$}] (s2) at (9,0) {};
			\node[fill=none,label={below:$\cdots$}] at (7,-0.3) {};
			\path (x) edge[-] (s1);
			\path (x) edge[-] (s2);
			\path (x) edge (6.5,-1);
			\path (x) edge (7.5,-1);
			
        \end{tikzpicture}

    \subcaption{Vertices from two different connected components $S_1,S_2$ of $G[S]$ dominate~$x$.}
    \label{fig:branching:dom-diff-compos}
    \end{subfigure}
    \caption{Simple branchings for dominated vertices: Rules~\ref{branching:dom-nbs-On}  and~\ref{branching:dom-diff-compos}.}
\end{figure}

\begin{lemma}
The branching of rule~\ref{branching:dom-nbs-On} is a complete case distinction. Moreover, it leads to a branching vector that is not worse than
$(1,1+\alpha)\,.$
\end{lemma}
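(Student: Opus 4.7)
The plan splits naturally into two parts: first verify that the two branches cover every extension of the partial solution, then compute the per-branch drop of $\mu$.

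For completeness, I would fix an arbitrary minimal CDS $D$ compatible with the current instance $\mathcal{I}$ and split on whether $x\in D$. If $x\notin D$, then $x$ is discarded; since $x\in V'_d$ is already dominated (by the very definition of $V'_d$, some vertex of $S$ is a neighbor of $x$), the appropriate destination is $O_d$, and that is exactly branch~1. If $x\in D$, then $x$ joins $S$, as in branch~2; the accompanying relabelling of every $y\in N_{O_n}(x)$ from $O_n$ to $O_d$ is pure bookkeeping, justified by the fact that once $x$ enters $S$, each such $y$ acquires a neighbor in $S$ and hence satisfies $\deg_S(y)\geq 1$. The two branches thus exhaust all possibilities.

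For the branching vector, I compute $\Delta\mu$ using $\mu=|V'|+\alpha\,|O_n|+\delta\, c$. In branch~1, $x$ moves from $V'$ to $O_d$, so $|V'|$ drops by $1$ while $|O_n|$ and $c$ are unchanged, yielding $\Delta\mu\geq 1$. In branch~2, $|V'|$ again drops by $1$, and the hypothesis $\deg_{O_n}(x)\geq 1$ guarantees that at least one vertex leaves $O_n$ for $O_d$, so the term $\alpha\,|O_n|$ drops by at least $\alpha$. It only remains to show that $\delta\, c$ does not rise.

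The behaviour of $c$ in branch~2 is the one step I would treat most carefully, since it is where the hypothesis $x\in V'_d$ is truly used: inserting $x$ into $S$ could in principle create a fresh singleton component $\{x\}$ of $G[S]$, which would raise $c$ by $1$ and add $\delta$ to the measure. This, however, is ruled out precisely because $x\in V'_d$ means $\deg_S(x)\geq 1$ at the moment of branching, so $x$ attaches to at least one existing component of $G[S]$; hence $c$ either stays the same (when $x$ has neighbors in exactly one such component) or strictly decreases (when $x$ bridges several). Consequently $\delta\, c$ never increases, and combining with the previous estimates gives $\Delta\mu\geq 1+\alpha$ in branch~2, matching the claimed branching vector $(1,1+\alpha)$.
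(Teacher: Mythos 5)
Your proposal is correct and follows essentially the same argument as the paper: completeness via the binary split on whether $x$ belongs to the solution, a drop of $1$ in the first branch, and a drop of at least $1+\alpha$ in the second branch since $\deg_{O_n}(x)\geq 1$. Your extra remark that $\deg_S(x)\geq 1$ prevents the component count $c$ from increasing is exactly the detail the paper leaves implicit, so it is a welcome but not substantively different elaboration.
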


\begin{proof} 
As $\deg_{S}(x)\geq 1$, we  find that, if $x$ is not put into the solution, then it is put into $O_d$, which decreases the measure by~1 in the first component of the branching vector.  If~$x$ is put into the solution, then its neighbors are dominated and we decrease the measure by at least $1+ \alpha$ in the second component  of the branching vector, as $\deg_{O_n}(x)\geq 1$. 
\qed \end{proof}

\begin{brarule}
\label{branching:dom-diff-compos}
Let $x\in V'_d$ such that $x$ is adjacent to two different connected components of $G[S]$; see \autoref{fig:branching:dom-diff-compos}. 
Then branch as follows.
\begin{enumerate}
    \item Put $x$ in $O_d$.
    \item Put $x$ in $S$. 
\end{enumerate}
\end{brarule}

\begin{lemma}
The branching of rule~\ref{branching:dom-diff-compos} is a complete case distinction. Moreover, it leads to a branching vector that is not worse than
$(1,1+\delta)\,.$
\end{lemma}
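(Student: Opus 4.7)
The plan is to verify the two standard obligations for a branching rule: (i) the two branches together cover all possibilities, and (ii) in each branch the measure drops by at least the claimed amount.

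For completeness of the case distinction, I would observe that since $x\in V'$, the algorithm ultimately decides either $x\in S$ or $x\in O$. Because $x\in V'_d$, it already has a neighbor in $S$, so if $x$ is not selected for the solution, it belongs to $O_d$ rather than $O_n$. Hence the two stated branches exhaust all options.

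For the branching vector, I would exploit the rule ordering. Rule~\ref{branching:dom-nbs-On} has higher priority and would already have been applied if $x$ had any neighbor in $O_n$, so at the moment Rule~\ref{branching:dom-diff-compos} fires we may assume $\deg_{O_n}(x)=0$. In branch~1, moving $x$ from $V'_d$ into $O_d$ leaves $|O_n|$ and $c$ unchanged and decreases $|V'|$ by $1$, contributing exactly $1$ to the measure drop. In branch~2, moving $x$ from $V'_d$ into $S$ again leaves $|O_n|$ unchanged (since $\deg_{O_n}(x)=0$) and decreases $|V'|$ by $1$. The key observation is the effect on $c$: by assumption $x$ is adjacent to vertices in two distinct connected components $S_1,S_2$ of $G[S]$, so adding $x$ to $S$ fuses $S_1$, $S_2$ and $\{x\}$ into a single component of $G[S\cup\{x\}]$. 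Thus $c$ drops by at least $1$, contributing at least $\delta$ to the measure decrease, for a total of at least $1+\delta$.

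Combining both branches yields the branching vector $(1,1+\delta)$, as claimed. No step looks technically hard; the only thing one has to be careful about is invoking the priority of Rule~\ref{branching:dom-nbs-On} to rule out additional contributions from $O_n$ when $x$ enters $S$, which is the only place where the analysis could otherwise be spoiled.
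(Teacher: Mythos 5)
Your proof is correct and follows essentially the same argument as the paper: branch~1 drops the measure by $1$ since the dominated vertex $x$ moves from $V'$ to $O_d$, and branch~2 drops it by at least $1+\delta$ because adding $x$ to $S$ merges the two adjacent components of $G[S]$. Your appeal to the priority of Rule~\ref{branching:dom-nbs-On} is valid but not actually needed, since any $O_n$-neighbors dominated by $x$ would only decrease the measure further and thus cannot spoil the bound.
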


\begin{proof}
When $x$ is put into~$O_d$, the measure drops by one.
In the second branch, the number of connected components of $G[S]$ decreases by $1$, so the measure decreases by $1+\delta$ in total in this case.  
\qed \end{proof}

We are now presenting two branching rules that could be viewed as variations of the first two; they always give worse branchings.

\begin{brarule}\label{branching:dom-many-udo-nbs-with-one-On-nb}
Let $x\in V'_d, y\in N_{V'_n}(x), z\in N_{O_n}(y)$ (\autoref{fig:branching:dom-many-udo-nbs-with-one-On-nb}). 
Then branch as follows.
\begin{enumerate}
    \item Put $x$ in $O_d$.
    \item Put $x$ in $S$, $y$ in $O_d$.
        \item Put $x,y$ in $S$ and thus $z \in O_d$.
\end{enumerate}
\end{brarule}

\begin{lemma}
The branching of rule~\ref{branching:dom-many-udo-nbs-with-one-On-nb} is a complete case distinction. Moreover, it leads to a branching vector that is not worse than
$(1,2,2+\alpha)\,.$
\end{lemma}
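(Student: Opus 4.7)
My plan is to verify the lemma in two parts: (i) the three branches are exhaustive given the structural invariants already enforced by the rules that must precede this one, and (ii) each branch drops the measure $\mu(\mathcal{I})=|V'|+\alpha|O_n|+\delta c$ by at least the claimed amount.

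For completeness, I would case on the fate of $x$ and then, when relevant, on the fate of $y$. Since $x\in V'_d$ is already dominated, if $x$ is discarded it must land in $O_d$, which is Branch~1. Otherwise $x$ enters $S$. Now I case on $y$. If $y$ is discarded, then because $x\in S$ is a neighbor of $y$, the vertex $y$ becomes dominated the moment $x$ joins $S$, so $y$ must move to $O_d$; this is Branch~2. If $y$ also enters $S$, then its neighbor $z\in O_n$ is dominated and therefore moves to $O_d$, which is Branch~3.

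For the measure drops I would explicitly invoke the two earlier rules that are guaranteed to have been applied before this one. Rule~\ref{branching:dom-nbs-On} guarantees $N_{O_n}(x)=\emptyset$, so placing $x$ alone into $S$ does not dominate any vertex of $O_n$; only in Branch~3 does an $O_n$-vertex get killed (namely $z$, through $y$). Rule~\ref{branching:dom-diff-compos} guarantees that $x$ is adjacent to exactly one component of $G[S]$, so adding $x$ to $S$ leaves $c$ unchanged. In Branch~3, the additional vertex $y$ has no neighbor in $S$ (since $y\in V'_n$ is undominated), hence its only connection into $S$ is via $x$, and $c$ again stays put. Assembling the bookkeeping: Branch~1 moves $x$ from $V'$ to $O_d$ (drop $1$); Branch~2 removes both $x$ and $y$ from $V'$ with no effect on $|O_n|$ or $c$ (drop $2$); Branch~3 removes $x,y$ from $V'$ and moves $z$ from $O_n$ to $O_d$ (drop $2+\alpha$). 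This yields the branching vector $(1,2,2+\alpha)$.

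The one subtle point I expect to be the actual obstacle, and which I would state in the proof rather than leave implicit, is the argument that $c$ neither increases nor contributes a spurious $\delta$-term in Branches~2 and~3. Precisely this is where the hypothesis $x\in V'_d$ combined with Rule~\ref{branching:dom-diff-compos} is used; without those prior rules one would have to worry either about a new component being created by $x$ or about a merge that would in fact give a better (but harder to guarantee) drop. Since the branching vector is a lower bound on the measure reduction, ignoring any merge bonus is safe and matches the stated vector.
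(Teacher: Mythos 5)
Your proof is correct and follows essentially the same route as the paper's: the same case distinction on $x$ and then $y$, with the same measure accounting giving drops of $1$, $2$, and $2+\alpha$. The only difference is that you additionally verify that the $\delta\cdot c$ term cannot spoil the bound (using $x\in V'_d$ and the priority of Branching Rule~\ref{branching:dom-diff-compos}), a point the paper leaves implicit; this is a harmless and indeed welcome extra check, not a divergence in approach.
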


\begin{proof}
Let $M$ be a minimal CDS of $G$ such that $M \setminus V' =S$. Either we put $x$ in $O_d$ (as it is already dominated, this decreases the measure by~1) or in~$S$. Assume $x\in S$. Then $y$ would be dominated by~$x$. Therefore, $y$ is either in $O_d$ (decreasing the measure by~2) or in~$S$. For $y\in S$, $z$ is dominated. Thus, $z\in O_d$ has to hold and the measure is decreased by $2+\alpha$.   
\qed \end{proof}

\begin{figure}
    \centering
    	
\begin{subfigure}[b]{.4\textwidth}
    \centering
    	
	\begin{tikzpicture}[transform shape,fill lower half/.style={path picture={\fill[#1] (path picture bounding box.south west) rectangle (path picture bounding box.east);}}]
			\tikzset{every node/.style={ fill = black,circle,minimum size=0.3cm}}
			\node[draw,fill=gray,label={above:$x$}] (u) at (7.5,0) {};
			\node[draw,fill=white,label={above:$y$}] (v1) at (6,0) {};
			\node[draw,fill=white,rectangle,label={above:$z$}] (y) at (5,0) {};
			\node[fill=none,label={below:$\cdots$}] at (6,-0.3) {}; 
			\node[fill=none,label={below:$\cdots$}] at (7.5,-0.3) {};
			\path (u) edge[-] (v1);
			\path (v1) edge[-] (y);
			\path (v1) edge (5.5,-1);
			\path (v1) edge (6.5,-1);
			\path (u) edge (7,-1);
			\path (u) edge (8,-1);
        \end{tikzpicture}

    \subcaption{A vertex from $O_n$  in the second neighborhood of $x\in V_d'$ gives still an advantage.}
    \label{fig:branching:dom-many-udo-nbs-with-one-On-nb}
\end{subfigure}
\quad 
\begin{subfigure}[b]{.55\textwidth}
\centering
	\begin{tikzpicture}[transform shape]

			\tikzset{every node/.style={ fill = black,circle,minimum size=0.3cm}}
			\node[draw,fill=gray,label={above:$x$}] (x) at (7,0) {};
			\node[draw,fill=gray,label={above:$y$}] (v) at (9,0) {};
			\node[draw,rectangle,fill=black,label={above:$S_x$}] (s1) at (5.5,0) {};
			\node[draw,rectangle,fill=black,label={above:$S_y$}] (s2) at (10.5,0) {};
			\node[fill=none,label={below:$\cdots$}] at (7,-0.3) {};
			\node[fill=none,label={below:$\cdots$}] at (9,-0.3) {};
			\path (x) edge[-] (s1);
			\path (x) edge (v);
			\path (v) edge[-] (s2);
			\path (x) edge (6.5,-1);
			\path (x) edge (7.5,-1);
			\path (v) edge (8.5,-1);
			\path (v) edge (9.5,-1);
        \end{tikzpicture}

    \subcaption{$N_S(x)$ belongs to the same connected component $S_x$ of $G[S]$, and so does $N_S(y)$ belong to $S_y$, but $S_x\neq S_y$.}
    \label{fig:branching:dom-nbs-diff-compos}
\end{subfigure}
    
    \caption{Branching Rules~\ref{branching:dom-many-udo-nbs-with-one-On-nb} and~\ref{branching:dom-nbs-diff-compos}}
\end{figure}
\begin{brarule}
\label{branching:dom-nbs-diff-compos}
Let $x,y\in V'_d$, $xy\in E$, such that $z\in\{x,y\}$ is adjacent to a connected component $S_z$ of $G[S]$, with $S_x\neq S_y$, 
as illustrated in \autoref{fig:branching:dom-nbs-diff-compos}. 
Then branch as follows.
\begin{enumerate}
    \item Put $x$ in $O_d$.
    \item Put $x$ in $S$ and $y$ in $O_d$.
    \item Put $x$ in $S$ and $y$ in $S$.
\end{enumerate}
\end{brarule}

\begin{lemma}
The branching of rule~\ref{branching:dom-nbs-diff-compos} is a complete case distinction. Moreover, it leads to a branching vector that is not worse than
$(1,2,2+\delta).$
\end{lemma}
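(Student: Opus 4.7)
The plan is to verify completeness of the case distinction and then to track the change in each term of $\mu(\mathcal{I})=|V'|+\alpha\cdot|O_n|+\delta\cdot c$ along each of the three branches. Completeness is immediate: both $x$ and $y$ lie in $V'_d$ and are therefore already dominated, so neither can ever be placed in $O_n$; each must end up either in $S$ or in $O_d$. The three branches thus cover exactly the cases (i) $x\in O_d$, (ii) $x\in S$ and $y\in O_d$, and (iii) $x\in S$ and $y\in S$; the orthogonal case ``$x\in O_d$, $y\in S$'' is not missed, it is simply postponed and will be resolved when a later rule picks up $y$ in the recursive call of branch~1.

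For the decrements, I would invoke the ordering of the rules to keep the bookkeeping light. In branch~1 the only change is that $x$ moves from $V'_d$ to $O_d$, so $|V'|$ drops by~$1$ while $|O_n|$ and $c$ are unaffected; the decrease is~$1$. In branch~2 both $x$ and $y$ leave $V'$, contributing~$2$. Since Branching Rule~\ref{branching:dom-nbs-On} has already been applied exhaustively, $x\in V'_d$ has no neighbor in $O_n$, so no vertex becomes newly dominated and $|O_n|$ is unchanged. Moreover, because Branching Rule~\ref{branching:dom-diff-compos} is no longer applicable, $x$ is adjacent to exactly the one component $S_x$ of $G[S]$, so adding $x$ to $S$ merely extends $S_x$ without merging distinct components, leaving $c$ unchanged. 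The total decrease is therefore~$2$.

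The key step is branch~3: once again $|V'|$ drops by~$2$. Using the same consequences of Rules~\ref{branching:dom-nbs-On} and~\ref{branching:dom-diff-compos} (now applied to $y$), adding $y$ to $S$ connects it only to $S_y$. Crucially, the edge $xy$ now links the augmented components $S_x\cup\{x\}$ and $S_y\cup\{y\}$ into a single connected component, so $c$ drops by~$1$ and contributes an additional~$\delta$ to the measure decrease. The total drop is thus $2+\delta$, giving the branching vector $(1,2,2+\delta)$. The only non-routine point is observing this single-component merge via $xy$ in the third branch; everything else follows mechanically from the invariants guaranteed by the earlier rules.
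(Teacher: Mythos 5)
Your proof is correct and takes essentially the same approach as the paper's: the paper also treats the completeness of the three-way case distinction as immediate and obtains the $2+\delta$ decrease in the third branch from the fact that the edge $xy$ merges the two distinct components $S_x$ and $S_y$ of $G[S]$, dropping $c$ by one. Your extra bookkeeping for branches one and two and the appeal to the priority of the earlier rules only spell out details the paper leaves implicit.
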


\begin{proof}
    The case distinction is clearly complete. In the third branch that corresponds to the last case in the case distinction, the number of connected components of $G[S]$ decreases by~$1$, so that  the measure decreases by $2+\delta$ 
    in the last component of the branching vector.  
\qed \end{proof}

\begin{redrule}\label{reduction:dom-2nonadjac-nbs-one-dom}
If $x,y\in V'_d$
and $xy\in E$, then delete the edge $xy$; see \autoref{fig:reduction:dom-2nonadjac-nbs-one-dom}.
\end{redrule}

\begin{figure}[bt]
    \centering
\begin{subfigure}[b]{.31\textwidth}
    \centering
	\begin{tikzpicture}[transform shape,fill lower half/.style={path picture={\fill[#1] (path picture bounding box.south west) rectangle (path picture bounding box.east);}}]
			\tikzset{every node/.style={ fill = black,circle,minimum size=0.3cm}}
			\node[draw,fill=gray,label={above:$x$}] (u) at (7,0) {};
			\node[draw,fill=gray,label={left:$y$}] (v1) at (5.6,-.7) {};
			\node[draw,rectangle,fill=black,label={left:$S$}] (s) at 
			(5,.3) {};
			\node[fill=none,label={below:$\cdots$}] at (5.6,-1.3) {};
			\node[fill=none,label={below:$\cdots$}] at (7,-0.3) {};

			\path (u) edge[-] (v1);
			\path (u) edge[-] (s);
			\path (v1) edge[-] (s);
			\path (v1) edge (5.1,-2) edge (6.1,-2);
			\path (u) edge (6.5,-1) edge (7.5,-1);
			
        \end{tikzpicture}

    \subcaption{$x,y\in V_d'$ are dominated by the same connected component~$S$ in $G[S]$.}
    \label{fig:reduction:dom-2nonadjac-nbs-one-dom}
    \end{subfigure}
\qquad     
\begin{subfigure}[b]{.095\textwidth}
    \centering
    	
	\begin{tikzpicture}[transform shape]
			\tikzset{every node/.style={draw, fill = black,circle,minimum size=0.3cm}}
			\node[rectangle,fill=white,label={above:$x$}] (u) at (7,0) {};
			\node[rectangle,fill=white,label={below:$y$}] (v1) at (7,-1) {};
			\path (u) edge[-] (v1);
		
        \end{tikzpicture}

    \subcaption{Two neighbors $x,y\in O_n$ get apart.}
    \label{fig:reduction:edge-On}
\end{subfigure}
\qquad 
\begin{subfigure}[b]{.145\textwidth}
    \centering
    	
	\begin{tikzpicture}[transform shape]
			\tikzset{every node/.style={ fill = black,circle,minimum size=0.3cm}}
			\node[draw,fill=white,label={above:$x$}] (u) at (7,0) {};
			\node[draw,rectangle,fill=white,label={}] (v1) at (6.3,-1) {};
			\node[draw,rectangle,fill=white,label={}] (v2) at (7.7,-1) {};
			\node[fill=none,label={below:$\cdots$}] at (7,-0.5) {};
			\path (x) edge (v1);
			\path (x) edge (v2);
        \end{tikzpicture}

    \subcaption{$x\in V_n'$ is fully encircled by~$O$.}
    \label{fig:reduction:On-encircled}
\end{subfigure}
\qquad 
\begin{subfigure}[b]{.235\textwidth}
    \centering
    	
	\begin{tikzpicture}[transform shape]
			\tikzset{every node/.style={ fill = black,circle,minimum size=0.3cm},fill lower half/.style={path picture={\fill[#1] (path picture bounding box.south west) rectangle (path picture bounding box.east);}}}
			\node[draw,diamond,fill=white,label={above:$x$}] (u) at (6.4,0) {};
			\node[draw,rectangle,fill=white,label={}] (v1) at (6.3,-1) {};
			\node[draw,fill=white,fill lower half=black!40,label={$y$}] (y) at (5.4,-1) {};
			\node[draw,rectangle,fill=white,label={}] (v2) at (7.7,-1) {};
			\node[fill=none,label={below:$\cdots$}] at (7,-0.5) {};
			\path (u) edge (v1);
			\path (u) edge (y);
			\path (u) edge (v2);
        \end{tikzpicture}

    \subcaption{There is only one way to dominate vertex~$x$.}
    \label{fig:reduction:udo-only-one-dom}
\end{subfigure}

    \caption{Illustrating Reduction Rules~\ref{reduction:dom-2nonadjac-nbs-one-dom},~\ref{reduction:edge-On},~\ref{reduction:On-encircled} and~\ref{reduction:udo-only-one-dom}.}
\end{figure}

\begin{lemma}
Reduction Rule~\ref{reduction:dom-2nonadjac-nbs-one-dom} is sound and the measure does not change.
\end{lemma}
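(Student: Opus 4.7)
The plan is to prove two things separately: soundness, meaning that deleting the edge $xy$ preserves the set of minimal connected dominating sets of $G$ that the algorithm can still reach, and that the measure $\mu(\mathcal{I})$ is unaffected. The measure part is immediate, and I will do it last; the soundness is where the work sits.

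First I would unpack what the ordering of rules forces at the moment Reduction Rule~\ref{reduction:dom-2nonadjac-nbs-one-dom} fires. Both $x$ and $y$ lie in $V'_d$ and are therefore dominated, so each is adjacent to at least one connected component of $G[S]$. Because Branching Rule~\ref{branching:dom-diff-compos} has already been exhausted, each of them is adjacent to \emph{exactly} one such component. Because Branching Rule~\ref{branching:dom-nbs-diff-compos} has also been exhausted and $x,y$ are adjacent with both in $V'_d$, these components must coincide; call this common one $S_1$. I then pick $s_x,s_y\in S_1$ with $s_xx,s_yy\in E$, and observe that $s_x$ and $s_y$ are connected by a path inside $G[S_1]$.

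Next I would argue the key invariant: any minimal CDS $D$ of $G$ with $S\subseteq D$ that is a CDS of the current partial graph $G'$ remains a CDS of the graph $G''$ obtained by deleting the edge $xy$. Domination is immediate, since the only neighborhoods that shrink are those of $x$ and $y$, and each of them is still dominated through $s_x$ or $s_y\in S\subseteq D$ via an edge that was not deleted. For connectivity of $G''[D]$, the only danger is that $xy$ was a cut edge of $G'[D]$ with both endpoints in $D$; in that case the path $x\,s_x\,\dots\,s_y\,y$ through $G[S_1]$ still lies entirely inside $G''[D]$, since $S\subseteq D$ and none of the edges internal to $G[S_1]$ nor the edges $s_xx$, $s_yy$ were touched. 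Combined with Observation~\ref{obs-CDS-on-partialgraphs} in the other direction (any CDS of the smaller graph $G''$ is automatically a CDS of $G$), this shows that the algorithm after the rule still enumerates exactly the same minimal CDS of $G$ it would have before.

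Finally, for the measure, removing $xy$ alters neither $V'$ nor $O_n$, and since $x,y\in V'_d$ lie outside $S$, it alters neither the induced subgraph $G[S]$ nor its number of connected components $c$; hence $\mu(\mathcal{I})=|V'|+\alpha|O_n|+\delta c$ is unchanged. The main obstacle I expect is the connectivity rerouting in the soundness step, and it is precisely this obstacle that explains why the preceding branching rules had to be executed first: they guarantee the existence of the common component $S_1$ that provides the reroute.
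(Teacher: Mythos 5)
Your proposal is correct and follows essentially the same route as the paper's proof: domination is untouched because $x,y\in V'_d$ are dominated by $S$, connectivity is restored by rerouting any use of the edge $xy$ through the common connected component of $G[S]$ guaranteed by the priority of Branching Rules~\ref{branching:dom-diff-compos} and~\ref{branching:dom-nbs-diff-compos}, and minimality carries over via Corollary~\ref{cor-minCDS-on-partialgraphs} (your ``Observation~\ref{obs-CDS-on-partialgraphs} in the other direction''). The measure argument is likewise the same trivial observation that only an edge, not any vertex assignment or component of $G[S]$, changes.
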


\begin{proof}
Let $M$ be a minimal CDS of $G$ such that $M \setminus V' =S$. Define $e =xy$ and  $\widetilde{G} = (V,E\setminus\lbrace e \rbrace)$. Now we want to show that~$M$ is also a minimal CDS in~$\widetilde{G}$. Since $x$ and~$y$ are already dominated by~$S$, the deletion of the edge~$e$ would not affect domination, nor could $x$ ever be the private neighbor of $y$ or vice versa. 
The connectivity is only important if $x,y\in M$. Vertices $x,y$ are dominated by the same connected component of~$S$, as otherwise Branching Rule~\ref{branching:dom-nbs-diff-compos} would have applied with priority. Hence, there exists a path $p = (x,p_1, \ldots, p_l, y)$, with internal vertices in~$S$.  Let $q=(q_1,\ldots,q_k)$ be a path in $G[M]$ such that there exists an $i\in \lbrace 1,\ldots, k-1\rbrace$ with $q_i=x$ and $q_{i+1}= y$. Then $\widetilde{q}= (q_1,\ldots, q_i, p_1,\ldots,p_l, q_{i+1},\ldots, q_{k})$ is a walk in $\widetilde{G}[M]$. 
Thus, $\widetilde{G}[M]$ is connected and $M$ is a CDS of $\widetilde{G}$. As $\widetilde{G}$ is a partial graph, 
$M$ is also a minimal CDS of $\widetilde{G}$ by \autoref{cor-minCDS-on-partialgraphs}.
\qed \end{proof}

\begin{redrule}\label{reduction:isolates}
If $x$ is an isolated vertex in $G[V'\cup O_n]$, do the following:
\begin{itemize}
    \item If $x$ is dominated, put~$x$ into $O_d$.
    \item If $x$ is not dominated, skip this branch. (This will always happen if $x\in O_n$.)
\end{itemize}
\end{redrule}

\begin{lemma}
Reduction Rule~\ref{reduction:isolates} is sound and never increases the measure.
\end{lemma}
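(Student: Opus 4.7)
The plan is a case distinction on which part of $V'\cup O_n$ contains the isolated vertex~$x$, exploiting the hypothesis that all earlier rules fail to apply on the current instance.

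The key case is $x\in V'_d$. Because Branching Rules~\ref{branching:dom-nbs-On} and~\ref{branching:dom-diff-compos} no longer fire, $x$ has no neighbor in~$O_n$ and all $S$-neighbors of~$x$ lie in a single connected component~$C$ of $G[S]$; combined with $x$ being isolated in $G[V'\cup O_n]$, this yields $N_G(x)\subseteq S\cup O_d$ with $N_G(x)\cap S\subseteq C$. I would then argue that no minimal CDS $M\supseteq S$ can contain~$x$. Neither domination nor connectivity force $x\in M$: every $G$-neighbor of~$x$ is already dominated without~$x$ (vertices of~$S$ by their own component, vertices of~$O_d$ by $S\subseteq M$), so $x$ has no private neighbor in~$M$; and since $M\cap O_d=\emptyset$, we have $N_G(x)\cap M\subseteq C\subseteq M$, so any path in~$G[M]$ through~$x$ can be rerouted through~$C$, leaving $G[M\setminus\{x\}]$ connected. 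Hence $M\setminus\{x\}$ would be a strictly smaller CDS, contradicting minimality, so moving~$x$ into~$O_d$ loses no solution.

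For $x\in V'_n$ or $x\in O_n$ the plan is to show that the current branch admits no CDS at all, which justifies skipping it. In both sub-cases isolation gives $N_G(x)\subseteq S\cup O_d$, and since~$x$ is undominated we have $N_G(x)\cap S=\emptyset$, so $N_G(x)\subseteq O_d$. If $x\in V'_n$ belongs to some CDS $M\supseteq S$, then~$x$ has no $M$-neighbor (as $M\cap O_d=\emptyset$), so~$x$ is isolated in~$G[M]$, contradicting connectedness (recall that we may assume $|M|\geq 2$ because $G$ is not dominated by a single vertex). If $x\in O_n$, then $x\notin M$ must be dominated through some neighbor in $M\cap N_G(x)\subseteq M\cap O_d=\emptyset$, which is impossible.

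The measure bookkeeping is then straightforward: the only actual instance modification occurs in the first case, where moving~$x$ from~$V'$ to~$O_d$ decreases $|V'|$ by one while leaving $|O_n|$ and~$c$ unchanged, so $\mu$ strictly drops by~$1$; the other two cases abort the branch and spawn no recursive call, so the measure certainly does not increase. The main subtlety I expect is the connectivity argument in the $V'_d$ case: it genuinely needs the uniqueness of the $S$-component~$C$ adjacent to~$x$, and hence really relies on the rule being applied strictly after Branching Rule~\ref{branching:dom-diff-compos} in the prescribed order.
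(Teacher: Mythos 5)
Your proof is correct and follows essentially the same route as the paper: it uses the rule ordering (in particular that Branching Rule~\ref{branching:dom-diff-compos} has already handled multi-component adjacency) to argue that a dominated isolated vertex can serve neither domination nor connectivity in a minimal CDS, hence goes to $O_d$, while an undominated isolated vertex makes the branch infeasible. Your write-up is simply a more detailed version of the paper's two-sentence argument, spelling out the removal/rerouting contradiction and the skip cases explicitly, and the measure bookkeeping is right.
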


\begin{proof}
As we are looking for a CDS, we cannot put an isolated vertex into a solution for reasons of domination (as 
a single vertex does not dominate the whole graph). As we have checked if $x\in V'_d$ should be put into the solution for  connectivity reasons by Branching Rule~\ref{branching:dom-diff-compos}, the only choice we have is to put~$x$ into $O_d$ if $x\in V'_d$. 
 \qed \end{proof}

\begin{redrule}\label{reduction:edge-On}
Let $x,y\in O_n$ be with $xy\in E$; see \autoref{fig:reduction:edge-On}.
We delete the edge $xy$.
\end{redrule}

\begin{lemma}
Reduction Rule~\ref{reduction:edge-On}  
is sound and does not change the measure.
\end{lemma}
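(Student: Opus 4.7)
The plan is to mirror the structure used for Reduction Rule~\ref{reduction:dom-2nonadjac-nbs-one-dom}: let $\widetilde{G}=(V,E\setminus\{xy\})$ and show that the minimal CDS of $G$ that are consistent with the commitments made so far (i.e., those $M$ with $M\cap O=\emptyset$, hence in particular $x,y\notin M$) coincide with the minimal CDS of $\widetilde{G}$ consistent with these commitments. Since $\widetilde{G}$ is a partial graph of $G$, Observation~\ref{obs-CDS-on-partialgraphs} and Corollary~\ref{cor-minCDS-on-partialgraphs} will transfer properties in one direction essentially for free; the work is therefore concentrated in the other direction.

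The key observation I would rely on is that since $x,y\in O_n\subseteq O$, neither endpoint of the deleted edge can ever belong to a solution generated in this branch. Consequently, the edge $xy$ plays no role in (a) the connectivity of $G[M]$, because no endpoint of $xy$ lies in $M$, so removing $xy$ cannot disconnect $G[M]$ nor can any path inside $G[M]$ use $xy$; (b) domination, because $xy$ contributes to dominating $x$ only via $y$ (and vice versa), but $y\notin M$ (and $x\notin M$), so the set of neighbors of $x$ in $M$ is the same in $G$ and in $\widetilde{G}$, and analogously for $y$, and for all other vertices the neighborhoods are unchanged; (c) private neighbor arguments for minimality, because the private neighbor structure of any $m\in M$ only changes if $xy$ was the edge witnessing that $x$ (or $y$) is privately dominated, but this requires $x$ or $y$ to be adjacent in $M$ to a different vertex than one of $x,y$, which is unaffected by deleting $xy$.

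Putting these three points together, $M\subseteq V\setminus O$ is a CDS of $G$ if and only if it is a CDS of $\widetilde{G}$, and in either case the minimality certificates (private neighbors and connectivity of every proper subset) carry over verbatim. This establishes soundness. For the measure, I simply note that deleting an edge does not modify $V'$, $O_n$, or $c$ (the number of connected components of $G[S]$ is untouched since neither $x$ nor $y$ belongs to $S$), so each of the three summands in $\mu(\mathcal{I})=|V'|+\alpha|O_n|+\delta c$ is preserved. The main (mild) obstacle is just being careful that the private-neighbor argument indeed goes through; but because both endpoints of the removed edge are in $O$, there is no way this edge could have been the unique witness of privacy for any vertex of $M$, and no subtlety arises.
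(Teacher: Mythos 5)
Your proposal is correct and follows essentially the same route as the paper: observe that $x,y\in O_n$ forces $x,y\notin M$ for any minimal CDS $M$ consistent with the current instance, so deleting $xy$ affects neither domination nor the connectivity of $G[M]$, and minimality transfers via Corollary~\ref{cor-minCDS-on-partialgraphs}; the measure is untouched since no vertex changes its set. The extra private-neighbor discussion you add is subsumed by that corollary, so it is harmless but not needed.
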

\begin{proof}
    Let $M$ be a minimal CDS of $G$ such that $M \setminus V' =S$. Hence, $x,y\notin M$ and $M$ is a CDS of $\widetilde{G}:=G-xy = (V,E\setminus \{uv\})$. By \autoref{cor-minCDS-on-partialgraphs}, $M$ is also a minimal CDS of~$\widetilde{G}$.
\qed \end{proof}

\begin{redrule}\label{reduction:On-encircled}
If $x\in V'_n$ obeys 
$N(x)\cap V'=\emptyset$,
then skip this branch (\autoref{fig:reduction:On-encircled}).
\end{redrule}

\begin{lemma}
Reduction Rule~\ref{reduction:On-encircled} is sound.
\end{lemma}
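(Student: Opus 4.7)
The plan is to show that in any branch where the rule's hypothesis holds, no minimal CDS extending the partial solution $S$ can exist, so aborting the branch loses nothing.

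First, I would unpack the hypothesis. Since $x\in V'_n$, the vertex $x$ is undominated, which means $N(x)\cap S=\emptyset$. Combined with the assumption $N(x)\cap V'=\emptyset$, every neighbor of $x$ must lie in $O=O_d\cup O_n$, i.e., every neighbor has already been decided not to enter the solution. I would then do a two-case analysis on the status of $x$ in any hypothetical minimal CDS $M$ with $M\setminus V'=S$.

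In the first case, $x\notin M$. Then $x$ must be dominated by $M\cap N(x)\subseteq S\cup(V'\cap N(x))$. Both sets are empty (the former since $x\in V'_n$, the latter by hypothesis), so $x$ cannot be dominated, contradicting that $M$ is a CDS. In the second case, $x\in M$. Since all neighbors of $x$ in $G$ lie in $O$, none of them belongs to $M$, so $x$ is isolated in $G[M]$. Because we have assumed from the outset that $G$ cannot be dominated by a single vertex, and because after the initial branching $S$ already contains a vertex distinct from $x$, the set $M$ has size at least two. Then $G[M]$ has an isolated vertex plus further vertices and cannot be connected, again contradicting that $M$ is a CDS.

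Combining both cases, there is no minimal CDS $M$ of $G$ with $M\setminus V'=S$ that is compatible with the current branch, so skipping it is sound. The only subtle point, and the one I would take most care over, is the second case: I have to justify explicitly that $S\neq\{x\}$, which relies on the initial preprocessing that seeds $S$ with a starting vertex and discards graphs that admit a single-vertex CDS. Everything else is a direct check against the definitions of $V'_n$, $O_d$, and $O_n$.
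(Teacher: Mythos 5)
Your proof is correct and follows essentially the same route as the paper: in both cases ($x\in M$ or $x\notin M$) vertex $x$ would need a neighbor in $M$, which is impossible since $N(x)\subseteq O$ and $M\cap O=\emptyset$, using the standing assumption that no single-vertex CDS is sought. The explicit justification that $|M|\ge 2$ (via the nonempty seed $S$ and $x\notin S$) is exactly the point the paper also invokes, so nothing is missing.
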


\begin{proof}
If $x\in V'_n$ is put into the solution~$S$, then (as we excluded CDS with only one vertex) it needs a neighbor from $N_{V'\cup O_n}(x)$ within $S$ for connectivity reasons, as $x\notin N_G(S)$ and as we are looking for a CDS of size more than one. Similarly, vertex~$x$ needs a neighbor from $N_{V'\cup O_n}(x)$ within $S$ if $x$ is not put
into the solution. As we assume $N_{V'\cup O_n}(x)\subseteq O_n$, this is impossible, which is justifying to discard this branch.
\qed \end{proof}

\begin{redrule}\label{reduction:udo-only-one-dom}
Let $x\in V'_n\cup O_n$, with $N_{V'}(x)=\{y\}$.
Then, put $y$ into~$S$.
\end{redrule}

\begin{lemma}
Reduction Rule~\ref{reduction:udo-only-one-dom} is sound and the measure decreases by at least $1-\delta$.
\end{lemma}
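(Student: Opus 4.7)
The plan is to split the statement into two independent claims---soundness of forcing $y$ into~$S$, and the bound $\mu$ decreases by at least $1-\delta$---and to handle them in this order. For soundness, I would fix an arbitrary minimal CDS $M$ of $G$ extending the current partial solution (so that $M\setminus V'=S$) and argue $y\in M$. For the measure, I would simply read off how each of the three terms of $\mu=|V'|+\alpha|O_n|+\delta c$ changes when $y$ is moved from $V'$ into $S$.

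For soundness, the starting observation is that since $x\in V'_n\cup O_n$, vertex~$x$ is currently undominated, i.e.\ $\deg_S(x)=0$. Hence every neighbor of $x$ in $G$ lies in $V'\cup O$. By hypothesis $y$ is the unique neighbor of $x$ in $V'$, so $N(x)\setminus\{y\}\subseteq O=O_d\cup O_n$; the vertices in $O$ are already committed to being outside any extending minimal CDS and therefore cannot lie in~$M$. I would then do a short case split. If $x\in O_n$, then $x\notin M$, and domination of~$x$ in~$M$ requires some neighbor of $x$ in $M$; the only candidate not forbidden by membership in~$O$ is $y$, so $y\in M$. If $x\in V'_n$ and $x\notin M$, the same domination argument forces $y\in M$. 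If $x\in V'_n$ and $x\in M$, then because the algorithm already handled the case where a single vertex dominates $G$, we know $|M|>1$, so connectivity of $G[M]$ requires $N_G(x)\cap M\neq\emptyset$, and again $y$ is the only admissible neighbor.

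For the measure decrease, I would check the three terms of $\mu$. Moving~$y$ from $V'$ into $S$ decreases $|V'|$ by exactly one. The set $O_n$ can only shrink, since any neighbor of~$y$ in $O_n$ becomes dominated and migrates to $O_d$; so its contribution to $\mu$ is non-increasing. Finally, inserting a single vertex into $S$ can raise the number of connected components of $G[S]$ by at most one, the worst case being when $y$ has no current neighbor in~$S$ and therefore creates a fresh singleton component. Summing the three contributions, $\mu$ changes by at most $-1+0+\delta=-(1-\delta)$, which yields the claimed bound.

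The only mildly subtle point is the soundness argument in the sub-case $x\in V'_n$, $x\in M$, where one must invoke the standing assumption (made at the very start of the algorithm after the initial $n$-way binary branching) that no single vertex dominates $G$, so that connectivity really does force a neighbor of $x$ into~$M$. Apart from that, every step is a routine bookkeeping exercise, so I expect no real obstacle.
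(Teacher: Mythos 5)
Your proposal is correct and follows essentially the same route as the paper: fix a minimal CDS $M$ with $M\setminus V'=S$, split on whether $x\in M$ (using domination of $x$ in one case and connectivity plus the standing assumption $|M|>1$ in the other) to force $y\in M$, and then bound the measure change by $-1$ for $|V'|$, a non-increase for the $O_n$-term, and at most $+\delta$ for a possibly new component of $G[S]$. The paper phrases the measure part as a case split on $y\in V'_d$ versus $y\in V'_n$, but this is the same bound of $1-\delta$ in the worst case.
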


\begin{proof}
    Let $M$ be a minimal CDS of $G$ such that $M \setminus V' =S$. Vertex~$x$ is dominated by~$M$. Either $x\notin M$ and then $y$ is the only vertex of $V'\cup S$ that can dominate~$x$, therefore $y\in M$, or $x\in M$ and since $G[M]$ is connected, $x$ has a neighbor in $M$, but since $y$ is the only vertex of  $V'\cup S$ in $N(x)$, we conclude $y\in M$.  If $y\in V'_d$, then the measure decreases by $1$, while if $y\in V'_n$, then the measure decreases by $1-\delta$.
\qed \end{proof}

\begin{brarule}\label{branching:udo-2-no-On}
Let $u\in V'_n$ with $\deg_{V'}(u)=2$, $\deg_{O_n}(u)=0$ and $N_{V'}(u)=\{v_1,v_2\}$; see \autoref{fig:branching:udo-2-no-On}. Then branch as follows.

\begin{enumerate}
    \item Put $u$ in $O_d$, $v_1$ in $S$.
    \item Put $u$ in $O_d$, $v_1$ in $O$, $v_2$ in $S$.
    \item Put $u$ in $S$, $v_1$ in $S$.
    \item Put $u$ in $S$, $v_1$ in $O_d$, $v_2$ in $S$.
\end{enumerate}
\end{brarule}

\noindent
More precisely, in the second branch, we put $v_1$ into $O_d$ if $v_1v_2\in E$ or if $v_1$ was already dominated and we put $v_1$ into $O_n$, otherwise. The same is done in the Branching Rules \ref{branching:On-2-no-On}, \ref{branching:dom-one-udo-nb-with-no-On-nbs}, \ref{branching:dom-2nbs-udo-common-nb}, \ref{branching:dom-2nbs-udo-nb-y} and \ref{branching:dom-2nbs-with-many-nbs}, as it can be decided whether the vertex goes into $O_n$ or into $O_d$.

\begin{figure}[tb]
    \centering
    	
\begin{subfigure}[b]{.3\textwidth}
    \centering
	\begin{tikzpicture}[transform shape,fill lower half/.style={path picture={\fill[#1] (path picture bounding box.south west) rectangle (path picture bounding box.east);}}]
			\tikzset{every node/.style={draw, fill = black,circle,minimum size=0.3cm}}
			\node[fill=white,label={above:$u$}] (u) at (7,0) {};
			\node[fill=white,fill lower half=black!40,label={below:$v_1$}] (v1) at (6,-1) {};
			\node[fill=white,fill lower half=black!40,label={below:$v_2$}] (v2) at (8,-1) {};
			\path (u) edge[-] (v1);
			\path (u) edge[-] (v2);
		\path(v2) edge[dashed] (v1);
        \end{tikzpicture}

    \subcaption{Branching if $u\in V_n'$ has two neighbors in $V'$ and none in $O_n$.}
    \label{fig:branching:udo-2-no-On}
    \end{subfigure}
        \quad 
\begin{subfigure}[b]{.3\textwidth} 
    \centering
    	
	\begin{tikzpicture}[transform shape,fill lower half/.style={path picture={\fill[#1] (path picture bounding box.south west) rectangle (path picture bounding box.east);}}]
			\tikzset{every node/.style={draw, fill = black,circle,minimum size=0.3cm}}
			\node[rectangle,fill=white,label={above:$u$}] (u) at (7,0) {};
			\node[fill=white,fill lower half=black!40,label={below:$v_1$}] (v1) at (6,-1) {};
			\node[fill=white,fill lower half=black!40,label={below:$v_2$}] (v2) at (8,-1) {};
			\path (u) edge[-] (v1);
			\path (u) edge[-] (v2);
\path(v2) edge[dashed] (v1);
        \end{tikzpicture}

    \subcaption{A similar rule for $u\in O_n$.}
    \label{fig:branching:On-2-no-On}
\end{subfigure}
\quad
\begin{subfigure}[b]{.31\textwidth}
    \centering
    	
	\begin{tikzpicture}[transform shape,fill lower half/.style={path picture={\fill[#1] (path picture bounding box.south west) rectangle (path picture bounding box.east);}}]
			\tikzset{every node/.style={ fill = black,circle,minimum size=0.3cm}}
			\node[draw,fill=gray,label={above:$u$}] (u) at (6,-1) {};
			\node[draw,fill=white,label={above:$v$}] (v1) at (7,-1) {};
			\node[fill=none,label={below:$\cdots$}] at (7,-1.3) {};

			\node[draw,fill=white,fill lower half=black!40] (y1) at (6.5,-2) {};
			\node[draw,fill=white,fill lower half=black!40] (y2) at (7.5,-2) {};
			\path (u) edge[-] (v1);
			\path (v1) edge[-] (y2);
			\path (v1) edge[-] (y1);
			
        \end{tikzpicture}

    \subcaption{We branch on $u$ and on~$v$ if $u\in S$. Note: $\deg_{V'\cup O_n}(v)\geq 3$ by Observation~\ref{obs:udo-deg-3}.}
    \label{fig:branching:dom-one-udo-nb-with-no-On-nbs}
\end{subfigure}

     \caption{Branching Rules~\ref{branching:udo-2-no-On},~\ref{branching:On-2-no-On} and~\ref{branching:dom-one-udo-nb-with-no-On-nbs}.}
\end{figure}

\begin{lemma}
The branching of rule~\ref{branching:udo-2-no-On} is a complete case distinction. Moreover, it leads to a branching vector that is not worse than
$(2-\delta,3-\delta-\alpha,2-\delta,3-\delta)\,.$
\end{lemma}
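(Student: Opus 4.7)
The plan is first to verify completeness of the case distinction, and then to compute for each branch a lower bound on the drop of $\mu=|V'|+\alpha|O_n|+\delta c$ in the worst case.

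For completeness, I would observe that $u\in V'_n$ has, by the rule's hypothesis, no neighbors in $O_n$, and, being undominated, no neighbors in $S$; hence its only neighbors in $V'\cup S\cup O_n$ are $v_1$ and $v_2$. In any minimal CDS $M$ of $G$ extending $S$, either $u\notin M$, in which case $M$ must dominate $u$, forcing one of $v_1,v_2$ to lie in $M$; or $u\in M$, in which case $u$ must have a neighbor in $M$ for connectivity (single-vertex CDS have been ruled out by the initial branching), which again forces one of $v_1,v_2$ into $M$. Splitting further on whether $v_1\in M$ or (if not) $v_2\in M$ yields exactly the four branches listed.

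For the branching vector, I would compute each decrease by tracking the fate of every affected vertex and focusing on the configuration that minimises the drop. The guiding observations are: (i)~adding a vertex from $V'_n$ to $S$ starts a fresh connected component, raising $c$ by $1$ and costing $+\delta$, whereas adding a vertex from $V'_d$ to $S$ leaves $c$ unchanged, because Branching Rule~\ref{branching:dom-diff-compos} has already been applied and so such a vertex is adjacent to at most one component of $G[S]$; (ii)~when $v_1$ is placed in $O$ in Branch~2 and $v_1\in V'_n$ with $v_1v_2\notin E$, $v_1$ ends up in $O_n$, costing $+\alpha$, as spelled out in the rule's convention; (iii)~in Branches~3 and~4 the two vertices added to $S$ are themselves adjacent, so they contribute only a single new component; and (iv)~in Branch~4, $v_1$ is necessarily dominated by the freshly added $u\in S$, so it safely enters $O_d$ with no $\alpha$-term.

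Assembling these pieces: Branch~1 removes $u,v_1$ from $V'$ (drop~$2$) and pays at most $\delta$, giving $2-\delta$; Branch~2 removes three vertices from $V'$ and pays at most $\delta+\alpha$, giving $3-\delta-\alpha$; Branch~3 is symmetric to Branch~1, giving $2-\delta$; and Branch~4 removes three vertices and pays only one $\delta$ for the shared new component, giving $3-\delta$. The main bookkeeping difficulty lies in tracking the component count $c$ correctly across the various sub-cases $v_i\in V'_n$ vs.~$v_i\in V'_d$; this is where the earlier rules (most notably Branching Rule~\ref{branching:dom-diff-compos} together with Reduction Rule~\ref{reduction:dom-2nonadjac-nbs-one-dom}) play their silent but essential role by ruling out any hidden merging or splitting of components.
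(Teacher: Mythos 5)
Your proposal is correct and follows essentially the same route as the paper: completeness via the observation that $u$'s only relevant neighbors are $v_1,v_2$ (needed either for dominating $u$ or for connectivity once $u\in S$, single-vertex solutions having been excluded by the initial branching), and the branching vector via worst-case bookkeeping in which each branch pays at most one $\delta$ for a new component and Branch~2 pays an extra $\alpha$ when $v_1$ lands in $O_n$. The additional details you give (e.g., that a $V_d'$-vertex added to $S$ does not create a new component thanks to Branching Rule~\ref{branching:dom-diff-compos}) are consistent refinements of the paper's shorter worst-case argument.
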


\begin{proof}
    Let $M$ be a minimal CDS of $G$ such that $M \setminus V' =S$. Then, either $u\notin M$ or $u\in M$. If $u\notin M$, as $u$ is dominated by $M$, and no vertex of $S$ dominates $u$, the vertex dominating $u$ is either $v_1$, or if it is not, it is~$v_2$.
    If $u\in M$, as $M$ is connected and not of size one, then vertex~$u$ is adjacent to another vertex of $M$. As no vertex of~$S$ is adjacent to $u$, either $v_1\in M$, or  $v_1\notin M$ and $v_2\in M$. 
     In all branches, in the worst case, a new connected component of~$S$ is created, which is why a $\delta$ is subtracted in each component of the branching vector. In the second branch, in the worst case, $v_1$ is not dominated by $S$ nor by~$v_2$, and thus moving~$v_1$ to~$O_n$ reduces the measure by $1-\alpha$. 
\qed \end{proof}

\begin{brarule}
\label{branching:On-2-no-On}
Let $u\in O_n$  with $N_{V'}(u)=\{v_1,v_2\}$ (\autoref{fig:branching:udo-2-no-On}). 
Then branch as follows.
\begin{enumerate}
    \item Put  $v_1$ in $S$, and thus $u$ in $O_d$.
    \item Put  $v_1$ in $O$, $v_2$  in $S$, and thus $u$ in $O_d$.
\end{enumerate}
\end{brarule}

\begin{lemma}
The branching of rule~\ref{branching:On-2-no-On} is a complete case distinction. Moreover, it leads to a branching vector that is not worse than
$(1+\alpha-\delta,2-\delta)\,.$
\end{lemma}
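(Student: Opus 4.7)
The plan is to verify completeness of the case distinction first and then to compute the decrease of the measure $\mu=|V'|+\alpha\cdot|O_n|+\delta\cdot c$ branch by branch.

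For completeness, I would fix a minimal CDS $M$ of $G$ with $M\setminus V'=S$ and show that $M$ must agree with one of the two branches. Since $u\in O_n$, I have $u\notin M$ and $N_S(u)=\emptyset$, so any vertex dominating $u$ in $M$ must lie outside~$S$. By Reduction Rule~\ref{reduction:edge-On}, $u$ has no neighbour in $O_n$, and vertices in $O_d$ are never in~$M$. Hence the dominator of $u$ in $M$ belongs to $N_{V'}(u)=\{v_1,v_2\}$, giving the dichotomy ``$v_1\in M$'' (covered by Branch~1) versus ``$v_1\notin M$ and $v_2\in M$'' (covered by Branch~2). Moving $u$ into $O_d$ in both branches is justified because the freshly added $v_1$ or $v_2$ dominates~$u$.

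For the branching vector, I would just track the changes to $|V'|$, $|O_n|$ and $c$. In Branch~1, moving $v_1$ from $V'$ to $S$ subtracts $1$ from~$\mu$, moving $u$ from $O_n$ to $O_d$ subtracts~$\alpha$, and in the worst case $v_1$ starts a new connected component of $G[S]$, adding at most $\delta$; the other possible changes (merging components) can only decrease $\mu$ further, so the total drop is at least $1+\alpha-\delta$. In Branch~2, putting $v_2$ into $S$ subtracts $1$, moving $u$ to $O_d$ subtracts $\alpha$, at most one new component of $G[S]$ is created (cost $\delta$), and placing $v_1$ into~$O$ in the worst case, namely $v_1\in V'_n$ and $v_1v_2\notin E$, moves $v_1$ to $O_n$ and thereby subtracts $1-\alpha$; summing gives a total decrease of at least $1+\alpha+(1-\alpha)-\delta=2-\delta$.

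No step is really an obstacle; the proof is essentially a bookkeeping exercise. The one point that merits an explicit argument is that the worst behaviour of $v_1$ in Branch~2 is indeed the transition $V'_n\to O_n$ rather than $V'_d\to O_d$ or $V'_n\to O_d$: since $0<\alpha<1$, the swap $V'\to O_n$ loses less measure than $V'\to O_d$, making it the conservative choice. With this one observation the claimed branching vector $(1+\alpha-\delta,\,2-\delta)$ follows.
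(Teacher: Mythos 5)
Your proof is correct and follows essentially the same route as the paper: the same dichotomy ($v_1\in M$ versus $v_1\notin M$ and $v_2\in M$) establishes completeness, and the same worst-case bookkeeping (a new component costing $\delta$, $u$ moving from $O_n$ to $O_d$ for $\alpha$, and $v_1$ landing in $O_n$ losing only $1-\alpha$) yields the vector $(1+\alpha-\delta,2-\delta)$. Your explicit remark that the transition $V'\to O_n$ is the conservative worst case (since $1-\alpha<1$) is a detail the paper leaves implicit but is exactly the right justification.
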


\begin{proof}

    Let $M$ be a minimal CDS of $G$ such that $M \setminus V' =S$. So $u\notin M$, $M$ dominates $u$, but $S$ does not, so either $v_1\in M$, or $v_1 \notin M$ and $v_2\in M$. In both cases $u\notin S$ and $u$ is dominated, thus $u\in O_d$.
     
    In the first branch, in the worst case, $v_1$ is not dominated by $S$, so it creates a new connected component for~$S$, $v_1$ goes from $V'$ to $S$ and $u$ from $O_n$ to $O_d$, so the measure decreases by $1+\alpha -\delta$. In the second branch, in the worst case, $v_1$ and $v_2$ are not dominated by $S$, so it creates a new connected component for $S$, $v_1$ goes from $V'$ to $O_n$, $v_2$ from $V'$ to~$S$ and $u$ from $O_n$ to~$O_d$, so the measure decreases by $1-\alpha+1+\alpha -\delta=2-\delta$. 
\qed \end{proof}

\begin{observation}\label{obs:udo-deg-3}
For each rule below, as it is applied in particular after Reduction Rule~\ref{branching:udo-2-no-On} and Reduction Rule~\ref{reduction:udo-only-one-dom}, we observe: for any vertex $v\in V'_n\cup O_n$, we have $\deg_{V'\cup O_n}(v) \geq 3$.
\end{observation}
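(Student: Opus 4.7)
The plan is to argue by contradiction: assume that, after all the earlier branching and reduction rules have been applied exhaustively, there still exists a vertex $v\in V'_n\cup O_n$ with $\deg_{V'\cup O_n}(v)\le 2$, and derive a contradiction by exhibiting an applicable earlier rule. Note first that any $v\in V'_n\cup O_n$ is undominated, so $N_S(v)=\emptyset$ and hence $N(v)\subseteq V'\cup O_d\cup O_n$; only the neighbors in $V'\cup O_n$ count towards $\deg_{V'\cup O_n}(v)$. I would perform a case distinction on this degree in $\{0,1,2\}$ and, within each case, on whether $v$ belongs to $V'_n$ or to $O_n$ and on where its (at most two) neighbors lie.

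If $\deg_{V'\cup O_n}(v)=0$, then $v$ is isolated in $G[V'\cup O_n]$ and undominated, so Reduction Rule~\ref{reduction:isolates} should have either skipped the branch (since $v$ is not dominated, putting it into $O_d$ is not allowed). If $\deg_{V'\cup O_n}(v)=1$, let $u$ be the unique such neighbor. When $u\in V'$, we have $N_{V'}(v)=\{u\}$, which triggers Reduction Rule~\ref{reduction:udo-only-one-dom}. When $u\in O_n$ and $v\in O_n$, the edge $uv$ contradicts Reduction Rule~\ref{reduction:edge-On}. When $u\in O_n$ and $v\in V'_n$, we get $N(v)\cap V'=\emptyset$, so Reduction Rule~\ref{reduction:On-encircled} applies.

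For $\deg_{V'\cup O_n}(v)=2$ with neighbors $u_1,u_2$, I split on the type of $v$. If $v\in V'_n$: if both $u_i\in V'$, then $\deg_{V'}(v)=2$ and $\deg_{O_n}(v)=0$, so Branching Rule~\ref{branching:udo-2-no-On} fires; if exactly one of them is in $V'$, then $|N_{V'}(v)|=1$ and Reduction Rule~\ref{reduction:udo-only-one-dom} fires; if both lie in $O_n$, then $N(v)\cap V'=\emptyset$ and Reduction Rule~\ref{reduction:On-encircled} applies. If $v\in O_n$: if both $u_i\in V'$, then $N_{V'}(v)=\{u_1,u_2\}$ and Branching Rule~\ref{branching:On-2-no-On} fires; in all remaining subcases at least one of $u_1,u_2$ lies in $O_n$, giving an $O_n$-$O_n$ edge contradicting Reduction Rule~\ref{reduction:edge-On}. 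Every subcase therefore contradicts the exhaustive application of an earlier rule, finishing the proof.

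The bookkeeping is routine; the only mild subtlety I expect is making sure to invoke the correct variant of Reduction Rule~\ref{reduction:udo-only-one-dom} (which is phrased in terms of $N_{V'}(v)$ rather than $N_{V'\cup O_n}(v)$), so that a neighbor in $O_n$ is not confused with one in $V'$. Once the correct partition of the neighborhood is tracked, each subcase matches exactly one of the earlier rules.
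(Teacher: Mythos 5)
Your proof is correct and follows essentially the same reasoning the paper relies on: the observation is justified by the priority ordering of the rules, and your case distinction on $\deg_{V'\cup O_n}(v)\in\{0,1,2\}$ together with the location of the neighbors ($V'$ versus $O_n$) is exactly the analysis the paper makes explicit for undominated vertices in Case~1 of the proof of Lemma~\ref{lem:covers-all}. The subtlety you flag about Reduction Rule~\ref{reduction:udo-only-one-dom} being phrased via $N_{V'}(x)$ is handled correctly in your subcases, so nothing is missing.
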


\begin{brarule}\label{branching:dom-one-udo-nb-with-no-On-nbs}
Let $u\in V'_d$ with  $N_{V'_n}(u)=\{v\}$ and  $\deg_{O_n}(v)=0$; see \autoref{fig:branching:dom-one-udo-nb-with-no-On-nbs}. Then branch as follows.
\begin{enumerate}
    \item Put $u$ in $O_d$.
    \item Put $u$ in $S$ and $v$ in $O_d$ and thus all the vertices of $N_{V'}(v)\setminus\{u\}$ in $O$. 
    \item Put $u$ in $S$ and $v$ in $S$.
\end{enumerate}
\end{brarule}

\begin{lemma}
The branching of rule~\ref{branching:dom-one-udo-nb-with-no-On-nbs} is a complete case distinction. Moreover, it leads to a branching vector that is not worse than
$(1,4-2\alpha,2)\,.$
\end{lemma}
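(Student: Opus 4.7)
The plan is to verify the three-way case distinction on the membership of $u$ and $v$ in a minimal CDS $M$ extending $S$, and then to compute the measure drop in each branch. Branch~1 captures all cases where $u \notin M$; branch~3 captures $u, v \in M$; and branch~2 is intended for the remaining case $u \in M, v \notin M$. The nontrivial step is to show that in this last case, branch~2's extra stipulation---that every vertex of $N_{V'}(v) \setminus \{u\}$ lies in $O$---is actually forced by the minimality of $M$.

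I will establish this stipulation by showing that if some $w \in M \cap (N_{V'}(v) \setminus \{u\})$ existed, then $M \setminus \{u\}$ itself would be a CDS, contradicting minimality. The argument hinges on structural facts imposed by earlier rules. Reduction Rule~\ref{reduction:dom-2nonadjac-nbs-one-dom}, applied exhaustively, makes $V'_d$ independent, so $N_{V'}(u) = N_{V'_n}(u) = \{v\}$. Branching Rule~\ref{branching:dom-nbs-On} ensures $\deg_{O_n}(u) = 0$, so every non-$S$, non-$v$ neighbor of $u$ lies in $O_d$ and therefore has a dominator in $S \subseteq M$ distinct from $u$. Branching Rule~\ref{branching:dom-diff-compos} places $N(u) \cap S$ inside a single connected component of $G[S]$. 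Now the candidate private neighbors of $u$ are among $\{v\} \cup (N(u) \cap O_d)$: the vertex $v$ has the additional $M$-neighbor $w$ and so is not private, and every $x \in N(u) \cap O_d$ already has an $S$-dominator distinct from $u$ in $M$. Hence $M \setminus \{u\}$ still dominates all vertices. For connectivity, the $M$-neighbors of $u$ are exactly $N(u) \cap S$ (because $v \notin M$ and $u$ has no other $V'$-neighbor), and they lie inside one component of $G[S] \subseteq G[M \setminus \{u\}]$; any path in $G[M]$ passing through $u$ as an internal vertex $\cdots x u y \cdots$ can thus be rerouted within $G[S]$, proving $G[M \setminus \{u\}]$ connected. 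This contradicts minimality of $M$, so $M \cap (N_{V'}(v) \setminus \{u\}) = \emptyset$, and branch~2 is complete.

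For the branching vector, I tally the measure drop $\Delta\mu$ in each branch. In branch~1, $u$ moves from $V'_d$ to $O_d$, so $\Delta\mu = 1$. In branch~3, $u$ enters $S$ and attaches to its unique $G[S]$-component by Branching Rule~\ref{branching:dom-diff-compos}, and $v$ then enters $S$ through $u$; $c$ is unchanged and $\Delta\mu = 2$. In branch~2, moving $u$ into $S$ contributes $1$ (with $c$ unchanged, as above), moving $v$ from $V'_n$ to $O_d$ contributes $1$ (no change in $|O_n|$ since $v$ was not in $O_n$), and each $w \in N_{V'}(v) \setminus \{u\}$ is moved from $V'$ to $O$; the worst case is $w \in V'_n$ with $w \notin N(u)$, which is forced by $N_{V'}(u) = \{v\}$, so $w$ enters $O_n$ and contributes $1 - \alpha$ to $\Delta\mu$. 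By Observation~\ref{obs:udo-deg-3} together with $\deg_{O_n}(v) = 0$ we have $\deg_{V'}(v) \geq 3$ and hence $|N_{V'}(v) \setminus \{u\}| \geq 2$, giving a total drop of at least $1 + 1 + 2(1 - \alpha) = 4 - 2\alpha$. This yields the branching vector $(1, 4 - 2\alpha, 2)$. The main obstacle, as noted, is justifying the extra restriction imposed in branch~2; once that soundness step is in place, the measure accounting is routine.
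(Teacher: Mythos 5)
Your proposal is correct and follows essentially the same route as the paper's proof: you use the priority of the earlier rules (no $O_n$-neighbors of $u$, a single adjacent $S$-component, independence of $V'_d$) to argue that $u$'s removal preserves connectivity and domination unless $v$ is its private neighbor, which forces $N_{V'}(v)\setminus\{u\}$ out of $M$ in the second branch, and then the measure accounting via Observation~\ref{obs:udo-deg-3} is identical. The only (cosmetic) difference is that you phrase the private-neighbor step contrapositively and spell out the rerouting/domination check for $M\setminus\{u\}$, which the paper leaves terse.
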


\begin{proof}
    Let $M$ be a minimal CDS  of $G$ such that $M \setminus V' =S$. Either $u$ is not in $M$ or $u$ is in~$M$ and $v$ is not in $M$ or $u$ and $v$ are in $M$. In the case where $u\in M$ and $v\notin M$,  in $G[M]$ vertex~$u$ is only adjacent to vertices of the same connected component of $G[S]$ due to Reduction Rule~\ref{branching:dom-diff-compos}; so $u$ must have a private neighbor, this private neighbor has to be~$v$, as  $N_{V'_n}(u)=\{v\}$ and $N_{O_n}(u)=\emptyset$. So $N(v)\cap M = \{u\}$, thus none of the vertices of $N_{V'}(v)$ is in~$M$, as $M$ is a minimal CDS. 
    Thus in the second branch, in the worst case, the vertices of $N_{V'}(v)$, a set which is of size at least $2$ due to Observation~\ref{obs:udo-deg-3},
    are not dominated by $S$ and the measure then reduces by $1-\alpha$ for each vertex in~$N_{V'}(v)$. 
\qed \end{proof}

\begin{brarule}\label{branching:dom-2nbs-udo-common-nb}
Let $u\in V'_d$, with $N_{V'}(u)=\{v_1,v_2\}$ and $\deg_{O_n}(u)=0$ (\autoref{fig:branching:dom-2nbs-udo-common-nb}). If $N_{V'}(v_1)\cap N_{V'}(v_2)$ contains a vertex $y\in V'$ different from~$u$, then branch as follows.

\begin{enumerate}
    \item Put $u$ in $O_d$.
    \item Put $u$ in $S$, $v_1$ in $S$.
    \item Put $u$ in $S$, $v_1$ in $O_d$, $v_2$ in $S$.
    \item Put $u$ in $S$, $v_1$ in $O_d$, $v_2$ in $O_d$ and thus $y$ in $O$. 
\end{enumerate}
\end{brarule}

\begin{lemma}
The branching of rule~\ref{branching:dom-2nbs-udo-common-nb} is a complete case distinction. Moreover, it leads to a branching vector that is not worse than
$(1,2,3,4-\alpha)\,.$
\end{lemma}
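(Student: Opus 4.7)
My plan is to follow the same template as the previous branching-rule lemmas: first verify that the four branches exhaust every possible minimal CDS $M$ extending $S$, and then tally up the measure decrease along each branch. I fix, once and for all, a minimal CDS $M$ of $G$ with $M\setminus V' = S$ and do a case analysis on how $M$ intersects $\{u,v_1,v_2\}$. The three easy cases are: $u\notin M$ (branch 1); $u\in M$ and $v_1\in M$ (branch 2); $u\in M$, $v_1\notin M$, $v_2\in M$ (branch 3).

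The substantive case is $u\in M$ with $v_1,v_2\notin M$, and this is where I expect the main obstacle to lie: I need to show that $y\notin M$. Since $N_{V'}(u)=\{v_1,v_2\}$ and both are outside $M$, every neighbor of $u$ lying in $M$ belongs to $S$. By Branching Rule~\ref{branching:dom-diff-compos} (applied before) all of $N_S(u)$ sits in a single connected component of $G[S]$, so $G[M\setminus\{u\}]$ remains connected: the relevant $S$-component is intact, and no other vertex of $M$ relied on $u$ for its connection (otherwise that vertex would be a neighbor of $u$ outside $S$, hence in $\{v_1,v_2\}$, contradiction). Minimality of $M$ therefore forces $u$ to have a private neighbor $p\notin M$ with $N(p)\cap M=\{u\}$, and $p$ has to be a neighbor of $u$; since $S\subseteq M$ and $N_{V'}(u)=\{v_1,v_2\}$, we get $p\in\{v_1,v_2\}$. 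Now $y$ is a common neighbor of $v_1$ and $v_2$ by hypothesis, so if $y\in M$ then $p$ would have a second neighbor in $M$ besides $u$, contradicting privacy. Hence $y\notin M$, and branch 4 correctly captures this case.

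For the branching vector, I just have to track which vertices leave $V'$ and whether any of them ends up contributing to $\alpha\cdot|O_n|$, using that after Rules~\ref{branching:dom-nbs-On}--\ref{branching:dom-nbs-diff-compos} the moves of $u$, $v_1$, $v_2$ into $S$ or $O_d$ do not create new $G[S]$-components (each of these vertices, when added to $S$, is adjacent to the unique $S$-component dominating it or to $u$, which has just joined that component). Branch 1 drops the measure by $1$. Branch 2 moves $u$ and $v_1$ from $V'$ to $S$ with no change in $c$, so the drop is $2$. In branch 3, $u$ and $v_2$ join $S$ without creating a new component, and $v_1$ is dominated by $u\in S$ so it can legitimately be placed in $O_d$; the drop is $3$. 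In branch 4, $u$ joins $S$, $v_1$ and $v_2$ are dominated by $u$ and so can be placed in $O_d$, contributing $-1$ each; the only subtle term is $y$, which in the worst case lies in $V'_n$ and remains undominated, so it moves to $O_n$ and contributes $-(1-\alpha)$. The total is $4-\alpha$, giving the claimed branching vector $(1,2,3,4-\alpha)$.
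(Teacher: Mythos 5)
Your proposal is correct and follows essentially the same route as the paper: the identical four-way case split, the observation that in the case $u\in M$, $v_1,v_2\notin M$ the vertex $u$ is attached to a single component of $G[S]$ so minimality forces a private neighbor in $\{v_1,v_2\}$, whence the common neighbor $y$ cannot lie in $M$, and the same worst-case accounting giving $(1,2,3,4-\alpha)$. You merely spell out in more detail why $G[M\setminus\{u\}]$ stays connected, which the paper asserts implicitly.
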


\begin{proof}
Let $M$ be a minimal CDS of $G$ such that $M \setminus V' =S$.  Either $u\notin M$ or $u\in M$. If $u\in M$, then either $v_1\in M$ or $v_1\notin M$, and in the latter case, either $v_2\in M$ or $v_2\notin M$.
Hence, the case distinction is complete. Assume now that $u\in M$ and $v_1,v_2\notin M$. As vertex~$u$ is adjacent to only one connected component of $G[S]$  due to Reduction Rule~\ref{branching:dom-diff-compos}, and as $N_M(u)=N_S(u)$, $u$ has a private neighbor. This private neighbor is either $v_1$ or $v_2$ and thus either $N(v_1)\cap M = \{u\}$ or $N(v_2)\cap M = \{u\}$, in both cases, since $y\in N(v_2)\cap N(v_1)$, $y\notin M$.
In the worst case, $y$ is not dominated and the measure in the last branch decreases by a total amount of $4-\alpha$.
\qed \end{proof}

\begin{figure}[tb]
    \centering
    	\begin{subfigure}[b]{.28\textwidth}
    \centering
	\begin{tikzpicture}[transform shape,fill lower half/.style={path picture={\fill[#1] (path picture bounding box.south west) rectangle (path picture bounding box.east);}}]
			\tikzset{every node/.style={ fill = black,circle,minimum size=0.3cm}}
			\node[draw,fill=gray,label={above:$u$}] (u) at (7,0) {};
			\node[draw,fill=white,label={left:$v_1$}] (v1) at (6,-1) {};
			\node[draw,fill=white,label={right:$v_2$}] (v2) at (8,-1) {};
			\node[draw,fill=white,fill lower half=black!40,label={above:$y$}] (y) at (7,-2) {};
			\node[fill=none,label={below:$\cdots$}] at (6,-1.3) {}; 
			\node[fill=none,label={below:$\cdots$}] at (8,-1.3) {};
			\path (u) edge[-] (v1);
			\path (u) edge[-] (v2);
			\path (v1) edge[-] (y);
			\path (v2) edge (y);
			\path (v1) edge (5.5,-2);
			\path (v1) edge (6.5,-2);
			\path (v2) edge (7.5,-2);
			\path (v2) edge (8.5,-2);
			\path (v2) edge[dashed] (v1);
        \end{tikzpicture}
        \subcaption{Possibly adjacent $v_1,v_2$ have $\geq 2$ common neighbors.}    \label{fig:branching:dom-2nbs-udo-common-nb}
\end{subfigure}
\quad 
\begin{subfigure}[b]{.26\textwidth}
    \centering
	\begin{tikzpicture}[transform shape,fill lower half/.style={path picture={\fill[#1] (path picture bounding box.south west) rectangle (path picture bounding box.east);}}]
			\tikzset{every node/.style={ fill = black,circle,minimum size=0.3cm}}
			\node[draw,fill=gray,label={above:$u$}] (u) at (7,0) {};
			\node[draw,fill=white,label={left:$v_1$}] (v1) at (6,-1) {};
			\node[draw,fill=white,label={right:$v_2$}] (v2) at (8,-1) {};
			\node[draw,fill=white,label={left:$y$},fill lower half=black!40] (y) at (6,-2) {};
			\node[fill=none,label={below:$\cdots$}] at (8,-1.3) {}; ;
			\path (u) edge[-] (v1);
			\path (u) edge[-] (v2);
			\path (v1) edge[-] (y);
			\path(v2) edge (v1);
			\path (v2) edge (7.5,-2);
			\path (v2) edge (8.5,-2);
        \end{tikzpicture}
    \subcaption{$v_1$ of adjacent  $v_1,v_2$ has only one more $V'$-neighbor:~$y$.}
    \label{fig:branching:dom-2nbs-udo-nb-y}
\end{subfigure}
\quad
\begin{subfigure}[b]{.36\textwidth}
    \centering
    	
	\begin{tikzpicture}[transform shape,fill lower half/.style={path picture={\fill[#1] (path picture bounding box.south west) rectangle (path picture bounding box.east);}}]
			\tikzset{every node/.style={ fill = black,circle,minimum size=0.3cm}}
			\node[draw,fill=gray,label={above:$u$}] (u) at (7,0) {};
			\node[draw,fill=white,label={left:$v_1$}] (v1) at (5.7,-1) {};
			\node[draw,fill=white,label={right:$v_2$}] (v2) at (8.3,-1) {};
			\node[draw,fill=white,fill lower half=black!40] (y1) at (4.9,-2) {};
			\node[draw,fill=white,fill lower half=black!40] (y2) at (6.6,-2) {};
			\node[draw,fill=white,fill lower half=black!40] (y3) at (7.4,-2) {};
			\node[draw,fill=white,fill lower half=black!40] (y4) at (9.1,-2) {};
			\node[fill=none,label={below:$\cdots$}] at (8.3,-1.3) {}; 
			\node[fill=none,label={below:$\cdots$}] at (5.7,-1.3) {};
			\path (u) edge[-] (v1);
			\path (u) edge[-] (v2);
			\path (v1) edge[-] (y1);
			\path (v1) edge[-] (y2);
			\path (v2) edge[-] (y3);
			\path (v2) edge[-] (y4);
			\path(v2) edge[dashed] (v1);
			
			\path (v1) edge (5.2,-2);
			\path (v1) edge (6.2,-2);
			\path (v2) edge (7.8,-2);
			\path (v2) edge (8.8,-2);
        \end{tikzpicture}

    \subcaption{When $v_1,v_2$ have $\geq 2$ `outside neighbors' each for better branching.}
    \label{fig:branching:dom-2nbs-with-many-nbs}
\end{subfigure}

    \caption{Branching Rules~\ref{branching:dom-2nbs-udo-common-nb},~\ref{branching:dom-2nbs-udo-nb-y} and~\ref{branching:dom-2nbs-with-many-nbs}}

\end{figure}

\begin{brarule}\label{branching:dom-2nbs-udo-nb-y}
Let $u\in V'_d$, $N_{V'_n}(u)=\{v_1,v_2\}$, 
$N_{V'_d}(u)=N_{O_n}(u)=\emptyset$ and $N_{V'}(v_1)\setminus \{u,v_2\}=\{y\}$; see \autoref{fig:branching:dom-2nbs-udo-nb-y}.
Then  branch as follows.

\begin{enumerate}
    \item Put $u$ in $O_d$.
    \item Put $u$ in $S$, $v_1$ in~$S$.
    \item Put $u$ in $S$, $v_1$ in $O_d$, $v_2$ in~$S$.
    \item Put $u$ in $S$, $v_1$ in $O_d$, $v_2$ in $O_d$ and $y$ in $O$.
    \item Put $u$ in $S$, $v_1$ in $O_d$, $v_2$ in $O_d$ and $y$ in $S$ and thus the vertices of $N_{V'}(v_2)\setminus \{u,v_1\}$ in~$O$. 
\end{enumerate}
\end{brarule}

\begin{lemma}
The branching of rule~\ref{branching:dom-2nbs-udo-nb-y} is a complete case distinction. Moreover, it leads to a branching vector that is not worse than
$(1,2,3,4-\alpha,5-\delta-\alpha)\,.$
\end{lemma}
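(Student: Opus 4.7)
The plan is to verify both that the five branches form a complete case distinction over minimal CDS $M$ of $G$ extending $S$, and that the measure $\mu = |V'| + \alpha|O_n| + \delta c$ drops by at least $(1, 2, 3, 4-\alpha, 5-\delta-\alpha)$ in the respective branches.

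For completeness, I would fix a minimal CDS $M$ with $M \setminus V' = S$ and first split on $u \in M$ (branches 2--5) versus $u \notin M$ (branch 1). When $u \in M$, I refine by $v_1 \in M$ (branch 2), $v_1 \notin M$ but $v_2 \in M$ (branch 3), and finally $v_1, v_2 \notin M$ (branches 4--5). In the latter case I reuse the private-neighbour argument from the proof of the lemma for Rule~\ref{branching:dom-2nbs-udo-common-nb}: since $u$ is adjacent to only one connected component of $G[S]$ (by the prior Branching Rule~\ref{branching:dom-diff-compos}), removing $u$ cannot disconnect $G[M]$, so minimality forces $u$ to have a private neighbour. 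That private neighbour must lie in $\{v_1, v_2\}$, as $S$- and $O_d$-neighbours of $u$ each carry another $M$-vertex in their closed neighbourhood and $N_{O_n}(u) = \emptyset$ rules out $O_n$-candidates. Splitting further on whether $y \in M$ gives branches 5 and 4; in branch 5 the fact $y \in N(v_1) \cap M$ forbids $v_1$ from being the private neighbour, so $v_2$ must be, and $N[v_2] \cap M = \{u\}$ forces $N_{V'}(v_2) \setminus \{u, v_1\} \subseteq V \setminus M$.

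For the branching vector, I would tally the change of $\mu$ per move. Branches 1--3 yield drops of $1$, $2$, $3$: whenever a vertex is added to $S$, it either joins $u$'s unique incident $S$-component (by Rule~\ref{branching:dom-diff-compos}) or attaches through $u$, keeping $c$ unchanged. In branch 4, the move $y \to O$ reduces $\mu$ by at least $1 - \alpha$ (worst case $y \in V'_n \to O_n$), giving $4 - \alpha$. In branch 5, $y \to S$ costs $1 - \delta$ in the worst case where $y \in V'_n$ forms a fresh $S$-component (note $y \notin N(u)$, since $y \notin \{v_1, v_2\} = N_{V'}(u)$), and each vertex of $N_{V'}(v_2) \setminus \{u, v_1\}$ pushed to $O$ contributes at least $1 - \alpha$; combined, this yields $5 - \delta - \alpha$.

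The main subtlety, which I would spell out carefully, is the non-emptiness of $N_{V'}(v_2) \setminus \{u, v_1\}$, needed to secure the $-\alpha$ term in the last component of the branching vector. This follows from the priority of earlier rules: if $v_1$ or $v_2$ had any $O_n$-neighbour, Branching Rule~\ref{branching:dom-many-udo-nbs-with-one-On-nb} would already have fired with $u$ as its $V'_d$-vertex, so at the point Rule~\ref{branching:dom-2nbs-udo-nb-y} is applied we have $N_{O_n}(v_1) = N_{O_n}(v_2) = \emptyset$. Observation~\ref{obs:udo-deg-3} then yields $\deg_{V'}(v_1), \deg_{V'}(v_2) \geq 3$. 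Combined with the hypothesis $|N_{V'}(v_1) \setminus \{u, v_2\}| = 1$, the lower bound on $\deg_{V'}(v_1)$ forces $v_1 v_2 \in E$; and then $\deg_{V'}(v_2) \geq 3$ supplies at least one further $V'$-neighbour of $v_2$ outside $\{u, v_1\}$, as required.
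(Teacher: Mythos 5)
Your case analysis and measure accounting follow the paper's proof almost verbatim (same split on $u$, $v_1$, $v_2$, same private-neighbour argument via the priority of Branching Rule~\ref{branching:dom-diff-compos}, same worst-case tally), and your careful derivation of $N_{O_n}(v_1)=N_{O_n}(v_2)=\emptyset$ from the priority of Branching Rule~\ref{branching:dom-many-udo-nbs-with-one-On-nb}, hence $v_1v_2\in E$ and $N_{V'}(v_2)\setminus\{u,v_1\}\neq\emptyset$, is correct and even a bit more explicit than the paper. However, there is one genuine gap in the analysis of the fifth branch: you never rule out that the guaranteed third $V'$-neighbour of $v_2$ is the vertex $y$ itself. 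Your tally for branch~5 adds $1-\delta$ for ``$y$ into $S$'' and $1-\alpha$ for ``some vertex of $N_{V'}(v_2)\setminus\{u,v_1\}$ into $O$''; if $y\in N(v_2)$, these two contributions come from the same vertex (and the branch's instructions would even be contradictory, placing $y$ both in $S$ and in $O$), so the drop could degrade to $4-\delta$, which is strictly worse than the claimed $5-\delta-\alpha$ since $\alpha<1$.

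The missing step is exactly the point the paper makes explicitly: because Branching Rule~\ref{branching:dom-2nbs-udo-common-nb} has priority over Rule~\ref{branching:dom-2nbs-udo-nb-y}, at this stage $N_{V'}(v_1)\cap N_{V'}(v_2)$ contains no vertex other than $u$, and since $y\in V'$ this yields $y\notin N(v_2)$. With this, the vertex supplied by $\deg_{V'}(v_2)\geq 3$ lies in $N_{V'}(v_2)\setminus\{u,v_1,y\}$, the contributions in branch~5 are over distinct vertices, and the bound $3+(1-\delta)+(1-\alpha)=5-\delta-\alpha$ is justified. This is a one-line repair, but without it the last component of the branching vector is not established.
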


\begin{proof} 
Note that $\{v_1,v_2\}\in E$, as drawn in \autoref{fig:branching:dom-2nbs-udo-nb-y}, since $\deg_{V'\cup O_n}(v_1)\geq 3$ by Observation~\ref{obs:udo-deg-3}. 
Let $M$ be a minimal CDS of~$G$ with $M \setminus V' =S$.  Either $u\notin M$ or $u\in M$. In the second case, we make further case distinctions: either $v_1\in M$ or $v_1\notin M$, and in the latter case, either $v_2\in M$ or $v_2\notin M$. Assume $u\in M$ and $v_1,v_2\notin M$. As $u$ is adjacent to only one connected component of $G[S]$ and $N_M(u)=N_S(u)$, the private neighbor of~$u$ is either $v_1$ or $v_2$ and thus either $N(v_1)\cap M = \{u\}$ or $N(v_2)\cap M = \{u\}$. This yields the last two branches. Consider $N(v_1)\cap M = \{u\}$: as  we have $y\in N(v_1)$,   $y\notin M$ follows.  If $N(v_1)\cap M \neq \{u\}$, we have  $N(v_2)\cap M = \{u\}$, so any vertex in $N(v_2)\setminus \{u\}$ is not in~$M$. But also, as $y$ is the only possible vertex of $N(v_1)$, apart from $u$, that can be in $M$, so it has to be in~$M$. 

Now, after Reduction Rule~\ref{branching:dom-2nbs-udo-common-nb}, we know that $y\notin N(v_2)$. Moreover, 
by Observation~\ref{obs:udo-deg-3},
$N(v_2)\setminus \{u,v_1\}\neq\emptyset$. 
As, in the worst case, $y$ is not dominated and none of the vertices of $N_{V'}(v_2)\setminus \{u,v_1\}$, the measure in the last two branches decreases by a total of $4-\alpha$ or by $ 3+|N_{V'}(v_2)\setminus \{u,v_1\}|\times(1-\alpha)+1-\delta \geq 3+1-\alpha+1-\delta = 5-\alpha-\delta$,  respectively.
 \qed \end{proof}

\begin{brarule}\label{branching:dom-2nbs-with-many-nbs}
(\autoref{fig:branching:dom-2nbs-with-many-nbs})
Let $u\in V'_d$, $N_{V'_n}(u)=\{v_1,v_2\}$, $N_{V'_d}(u)=N_{O_n}(u)=\emptyset$,  such that  $\left| N_{V'}(v_1)\setminus \{u,v_2\} \right|\geq 2$, as well as $\left| N_{V'}(v_2)\setminus \{u,v_1\} \right|\geq 2$. Then branch as follows.

\begin{enumerate}
    \item Put $u$ in $O_d$.
    \item Put $u$ in $S$, $v_1$ in $S$.
    \item Put $u$ in $S$, $v_1$ in $O_d$, $v_2$ in $S$.
    \item Put $u$ in $S$, $v_1$ in $O_d$, $v_2$ in $O_d$ and the vertices of $N_{V'}(v_1)\setminus \{u,v_2\}$ in $O$. 
    \item Put $u$ in $S$, $v_1$ in $O_d$, $v_2$ in $O_d$ and the vertices of $N_{V'}(v_2)\setminus \{u,v_1\}$ in $O$. 
\end{enumerate}
\end{brarule}

\begin{lemma}
The branching of rule~\ref{branching:dom-2nbs-with-many-nbs} is a complete case distinction. Moreover, it leads to a branching vector that is not worse than
$(1,2,3,5-2\alpha,5-2\alpha)\,.$
\end{lemma}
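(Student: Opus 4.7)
The plan is to follow the template of the proof of Branching Rule~\ref{branching:dom-2nbs-udo-nb-y}: pick an arbitrary minimal CDS~$M$ of~$G$ with $M\setminus V'=S$ extending the current partial choice, decide the branches by case analysis on $\{u,v_1,v_2\}\cap M$, and handle the single delicate subcase via a private-neighbor argument.

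First, I split on whether $u\in M$. If $u\notin M$, vertex~$u$ is already dominated, so it may safely be placed into~$O_d$; this is branch~1. Assuming $u\in M$, I further split on whether $v_1\in M$ (branch~2), whether $v_1\notin M$ but $v_2\in M$ (branch~3), or neither of $v_1,v_2$ lies in~$M$ (branches~4 and~5). For this last subcase the argument mirrors Rule~\ref{branching:dom-2nbs-udo-nb-y}: since Branching Rules~\ref{branching:dom-diff-compos} and~\ref{branching:dom-nbs-diff-compos} have priority, $u$'s neighbors in~$S$ all belong to a single connected component of $G[S]$, so removing~$u$ from~$M$ does not disconnect $G[M]$ through its $S$-side, and minimality of~$M$ forces $u$ to have a private neighbor. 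Because $N(u)\subseteq S\cup\{v_1,v_2\}$ and $v_1,v_2\notin M$, this private neighbor must be $v_1$ or $v_2$. If it is $v_1$, then $N(v_1)\cap M=\{u\}$, so every vertex of $N_{V'}(v_1)\setminus\{u,v_2\}$ lies outside~$M$, giving branch~4; branch~5 is the symmetric conclusion for~$v_2$.

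For the branching vector I would exploit that, again by the priority of Branching Rule~\ref{branching:dom-diff-compos}, $u\in V'_d$ is adjacent to exactly one connected component of $G[S]$, so placing~$u$ into~$S$ does not increase~$c$; when $v_1$ or $v_2$ is subsequently placed into~$S$ it attaches to $u$'s component, so~$c$ does not grow in any branch. Counting only movements out of~$V'$, branches~1,~2,~3 thus yield drops of $1$, $2$, and~$3$. In branches~4 and~5, the triple $u,v_1,v_2$ contributes~$3$, and each of the at least two vertices of $N_{V'}(v_1)\setminus\{u,v_2\}$ (respectively $N_{V'}(v_2)\setminus\{u,v_1\}$) shifted into~$O$ contributes at least $1-\alpha$ in the worst case, namely when it sits in~$V'_n$ and is not dominated by~$S$, yielding the claimed drop of at least $5-2\alpha$.

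The subtle point, which I would treat with most care, is the private-neighbor argument in the $u\in M$, $v_1,v_2\notin M$ subcase: its correctness relies silently on the ordering discipline of the rules that ensures $u$ is adjacent to a single component of $G[S]$. Once this is in place, the remaining measure accounting is routine bookkeeping exploiting that~$c$ never increases in any of the five branches.
\qed
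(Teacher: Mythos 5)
Your proposal is correct and follows essentially the same route as the paper's proof: the same case distinction on $u$, $v_1$, $v_2$, the same private-neighbor argument in the subcase $u\in M$, $v_1,v_2\notin M$ (relying on $u$ being adjacent to a single component of $G[S]$ due to the rule priorities and $N_M(u)=N_S(u)$), and the same measure accounting giving drops of $1$, $2$, $3$, and $3+2(1-\alpha)=5-2\alpha$ in the last two branches. No gaps; the explicit remark that $c$ never increases is a detail the paper leaves implicit but is consistent with its analysis.
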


\begin{proof}
Let $M$ be a minimal CDS of~$G$ such that $M \setminus V' =S$.  Either $u\notin M$ (decreasing the measure by 1) or $u\in M$, and then either $v_1\in M$ (decreasing the measure by 2) or $v_1\notin M$. In this last case, either $v_2\in M$ (decreasing the measure by~3) or $v_2\notin M$. Assume $u\in M$ and $v_1,v_2\notin M$. As $u$ is adjacent to only one connected component of $G[S]$ and as $N_M(u)=N_S(u)$, $u$ has a private neighbor which is either $v_1$ or $v_2$ and thus either $N(v_1)\cap M = \{u\}$ or $N(v_2)\cap M = \{u\}$. This yields the last two branches: assume that $N(v_1)\cap M = \{u\}$. This implies, for each $y\in N(v_1)\setminus\lbrace u\rbrace$, that  $y\notin M$ and thus  $y\in O$ (decreasing the measure by at least $5-2\alpha$).  Assume that $N(v_1)\cap M \neq \{u\}$. This implies, for each $y\in N(v_2)\setminus\lbrace u\rbrace$,
that  $y\notin M$,  \emph{i.e.}, $y\in O$ (decreasing the measure by at least $5-2\alpha$). 
\qed \end{proof}

The next and final rule will never be applied when this algorithm is applied to a $2-$degenerate graph. It rather prepares the ground for the general case. However, with our current measure, this would yield an $\mathcal{O}(2^n)$ algorithm in the general case. With a modified measure, as described in the next section, we will achieve a better running time.

\begin{brarule}\label{branching:dom-general}
Let $x\in V'_d$. Then we branch as follows.
 
\begin{enumerate}
    \item Put $x$ in $O_d$.
    \item Put $x$ in $S$ and thus the vertices of $N_{V'}(x)$ in $V'_d$.
\end{enumerate}
\end{brarule}

\begin{lemma}
The branching of rule~\ref{branching:dom-general} is a complete case distinction. Moreover, it leads to a branching vector that is not worse than
$(1,1)\,.$
\end{lemma}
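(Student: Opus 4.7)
The plan is to first verify completeness and then track how the measure $\mu(\mathcal{I})=|V'|+\alpha\cdot|O_n|+\delta\cdot c$ changes in each branch, exploiting the structural invariants guaranteed by the fact that the earlier rules (in particular Branching Rules~\ref{branching:dom-nbs-On} and~\ref{branching:dom-diff-compos}) have priority and are no longer applicable.

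Completeness is immediate: for any minimal CDS $M$ of $G$ with $M\setminus V'=S$, either $x\notin M$ or $x\in M$. Since $x\in V'_d$ is already dominated by $S$, if $x\notin M$ we may safely put it in $O_d$, matching the first branch; if $x\in M$ we put it in $S$ and consequently all vertices in $N_{V'}(x)\cap V'_n$ become dominated, i.e., migrate from $V'_n$ to $V'_d$, matching the second branch.

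For the branching vector, the first branch moves~$x$ from $V'_d$ to $O_d$. Hence $|V'|$ drops by~$1$, while $|O_n|$ and~$c$ are unchanged, so $\mu$ decreases by~$1$. For the second branch, $x$ moves from~$V'_d$ to $S$, so again $|V'|$ drops by~$1$. The key point is that the $\alpha|O_n|$ and $\delta c$ terms do not increase. Indeed, since Branching Rule~\ref{branching:dom-nbs-On} is no longer applicable, we have $N_{O_n}(x)=\emptyset$, so no vertex of $O_n$ is affected and $|O_n|$ is unchanged. Neighbors of~$x$ in $V'_n$ merely migrate within $V'$ (to $V'_d$), leaving $|V'|$ unaffected beyond the removal of~$x$ itself. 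Finally, because $x\in V'_d$, vertex~$x$ has at least one neighbor in~$S$; and since Branching Rule~\ref{branching:dom-diff-compos} no longer applies, all neighbors of~$x$ in~$S$ lie in a single connected component of $G[S]$. Consequently, adding~$x$ to~$S$ attaches it to that one component without splitting or creating components, so $c$ does not grow (it may only drop by merging, which can only help). Thus the measure decreases by at least~$1$ in the second branch as well, yielding the vector $(1,1)$.
\qed
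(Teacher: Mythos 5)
Your proof is correct and follows essentially the same route as the paper: completeness is the trivial dichotomy $x\in M$ or $x\notin M$, and the $(1,1)$ vector comes from $x$ leaving $V'$ in both branches. You are in fact a bit more explicit than the paper's one-line argument about why the $\alpha\cdot|O_n|$ and $\delta\cdot c$ terms cannot increase (using the priority of Branching Rules~\ref{branching:dom-nbs-On} and~\ref{branching:dom-diff-compos} and the fact that $x\in V'_d$ already has a neighbor in $S$), whereas the paper instead stresses $\deg(x)\geq 3$ and $N(x)\subseteq V'_n$, facts that only become relevant for the refined measure with $\beta$ in the next section.
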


\begin{proof}
The case distinction is obviously complete. As it is applied after every previous rule, we have that $\deg(x)\geq 3$ and every vertex of $N(x)$ is in $V'_n$. In the second branch, this means at least $2$ vertices are going from $V'_n$ to $V'_d$.
\qed \end{proof}

\noindent We can summarize the lemmas that we proved so far with the following two statements.

\begin{lemma}\label{lem:branchingrules}
Each of the mentioned branching rules covers all cases in the described situation. The branching will lead to a branching vector as listed in \autoref{tab:branching-vectors-2degen}.
\end{lemma}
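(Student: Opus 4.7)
The plan is to observe that Lemma~\ref{lem:branchingrules} is essentially a bookkeeping statement that collects the completeness claims and branching-vector computations that were already proved, one by one, in the lemmas accompanying Branching Rules~\ref{branching:dom-nbs-On}, \ref{branching:dom-diff-compos}, \ref{branching:dom-many-udo-nbs-with-one-On-nb}, \ref{branching:dom-nbs-diff-compos}, \ref{branching:udo-2-no-On}, \ref{branching:On-2-no-On}, \ref{branching:dom-one-udo-nb-with-no-On-nbs}, \ref{branching:dom-2nbs-udo-common-nb}, \ref{branching:dom-2nbs-udo-nb-y}, \ref{branching:dom-2nbs-with-many-nbs} and \ref{branching:dom-general}. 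Hence the proof has only to stitch these together, with the verification that when a rule is invoked, the structural hypotheses its companion lemma relies upon are indeed in force because of the prescribed ordering of rules.

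First, I would recall that completeness of each individual case distinction has already been established inside the corresponding lemma: in every branching rule the cases are exhaustive up to the membership pattern of the branched-on vertices (and possibly their relevant neighbors) in $S$ versus $O$, and the deductions that propagate decisions to further vertices (e.g.\ moving $N_{O_n}(x)$ into $O_d$ in Branching Rule~\ref{branching:dom-nbs-On}, or a private-neighbor argument as used in Rules~\ref{branching:dom-one-udo-nb-with-no-On-nbs},~\ref{branching:dom-2nbs-udo-common-nb},~\ref{branching:dom-2nbs-udo-nb-y} and~\ref{branching:dom-2nbs-with-many-nbs}) are each justified by the minimality of the CDS~$M$ together with Reduction Rule~\ref{branching:dom-diff-compos} forbidding that $u$ be adjacent to two components of $G[S]$ at that point. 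Therefore completeness transfers verbatim.

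Second, I would argue that each branching vector asserted in the companion lemma is valid at the moment the corresponding rule fires, because the structural preconditions assumed in that lemma are guaranteed by the execution order described in the bulleted triple of rule groups. For instance, when Rule~\ref{branching:dom-one-udo-nb-with-no-On-nbs} is applied we already know, by the earlier group, that $V'_d$ and $O_n$ each form independent sets in $G'$, which is exactly what was used to conclude that the neighbors of $v$ in $V'$ must lie in $V'_n$ and contribute $1-\alpha$ each to the measure drop; similarly, Observation~\ref{obs:udo-deg-3} is available for Rules~\ref{branching:dom-one-udo-nb-with-no-On-nbs} through~\ref{branching:dom-2nbs-with-many-nbs} because Reduction Rules~\ref{branching:udo-2-no-On} (in the sense of its ``no lower-degree undominated vertex'' guarantee) and~\ref{reduction:udo-only-one-dom} have been exhausted. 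For Rule~\ref{branching:dom-general}, the fact that $\deg(x)\ge 3$ and that $N(x)\subseteq V'_n$ was already argued from the preceding rules having been applied.

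The only possibly delicate step, and the one I would treat most carefully, is to check that the ordered application of rules leaves no unhandled configuration: that is, that whenever none of Rules~\ref{branching:dom-nbs-On}--\ref{branching:dom-2nbs-with-many-nbs} applies on a $2$-degenerate instance (and hence Rule~\ref{branching:dom-general} is not needed), every vertex has been placed into $S$, $O_d$ or $O_n$, or triggered a reduction. This follows by case-splitting on a minimum-degree witness $u$ of $2$-degeneracy in $G'$: if $u\in V'_n\cup O_n$ then either Rule~\ref{branching:udo-2-no-On} or Rule~\ref{branching:On-2-no-On} (or Reduction Rule~\ref{reduction:udo-only-one-dom} in the degree-one case, or Reduction Rule~\ref{reduction:On-encircled}/\ref{reduction:isolates}) triggers; if $u\in V'_d$, then depending on whether $u$ has neighbors in $O_n$, in distinct components of $G[S]$, in $V'_n$, etc., one of Rules~\ref{branching:dom-nbs-On}--\ref{branching:dom-2nbs-with-many-nbs} fires, after the earlier reductions have cleaned up edges inside $V'_d$ and inside $O_n$. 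Finally, reading off the branching vectors listed in the companion lemmas and copying them into \autoref{tab:branching-vectors-2degen} completes the proof.
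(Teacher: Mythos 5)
Your proposal matches the paper's treatment: Lemma~\ref{lem:branchingrules} is presented there purely as a summary of the lemmas proved alongside each individual branching rule (completeness of each case distinction plus the stated branching vector), so the proof is exactly the stitching-together you describe. The additional ``no unhandled configuration'' check that you treat as the delicate step is in the paper factored out into a separate statement (Lemma~\ref{lem:covers-all}) with its own case analysis, so it is not strictly part of this lemma, but including it does no harm.
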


\begin{lemma}\label{lem:reductionrules}
The  mentioned reduction rules are sound and their application never increases the measure.
\end{lemma}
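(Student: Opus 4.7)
The plan is to observe that Lemma~\ref{lem:reductionrules} is purely a summary statement that bundles together the individual per-rule lemmas already proven earlier in the section. Consequently, the proof I would write is a short verification that each of Reduction Rules~\ref{reduction:dom-2nonadjac-nbs-one-dom}, \ref{reduction:isolates}, \ref{reduction:edge-On}, \ref{reduction:On-encircled} and \ref{reduction:udo-only-one-dom} is covered by a lemma that asserts both soundness and a suitable bound on the measure change.

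I would go through them in order. Reduction Rules~\ref{reduction:dom-2nonadjac-nbs-one-dom} and~\ref{reduction:edge-On} each come with a lemma stating the measure does not change, which in particular means it does not increase. Reduction Rule~\ref{reduction:isolates} explicitly comes with a lemma asserting the measure never increases (moving a vertex from $V'_d$ to $O_d$ reduces the measure by $1$, while skipping terminates the branch). Reduction Rule~\ref{reduction:udo-only-one-dom} comes with a lemma stating the measure strictly decreases by at least $1-\delta>0$ since $0<\delta<1$. These four cases are therefore immediate citations.

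The only slightly delicate case is Reduction Rule~\ref{reduction:On-encircled}, whose lemma states soundness but does not explicitly mention the measure. Here the rule skips the current branch, so there is no resulting instance on which to evaluate $\mu$; the action is equivalent to discarding the computation, which trivially cannot increase any remaining measure. I would add one sentence making this convention explicit, so that the summary claim is formally covered.

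The main obstacle, if one can call it that, is purely bookkeeping: ensuring that all five reduction rules have actually been addressed and that the wording of the summary (``never increases the measure'') is compatible with each of the individual statements (``does not change,'' ``decreases,'' or ``skip branch''). No new argument is required, so the proof is simply a short aggregation referencing the per-rule lemmas and observing that in each case the measure is either unchanged, strictly decreased, or the branch is discarded.
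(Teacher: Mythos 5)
Your proposal is correct and matches the paper's treatment: the paper offers no separate argument for this lemma, explicitly presenting it as a summary of the per-rule lemmas proved beforehand, which is exactly the aggregation you carry out. Your extra sentence on Reduction Rule~\ref{reduction:On-encircled} (a skipped branch leaves no instance whose measure could increase) is a reasonable clarification that the paper leaves implicit.
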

A case analysis shows the correctness of our algorithm in the following sense: 
\begin{lemma}\label{lem:covers-all}
The Reduction and Branching Rules cover all possible cases.
\end{lemma}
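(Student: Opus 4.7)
The plan is to process the rules in the prescribed order and use the fact that inapplicability of the earlier rules imposes clean structural invariants on the remaining instance. First I would observe that if none of the Set~1 rules (Branching Rules~\ref{branching:dom-nbs-On}--\ref{branching:dom-nbs-diff-compos} together with Reduction Rules~\ref{reduction:dom-2nonadjac-nbs-one-dom}--\ref{reduction:udo-only-one-dom}) applies, then the current instance satisfies the following invariants: no $V'_d$ vertex has an $O_n$-neighbor; every $V'_d$ vertex is adjacent to at most one connected component of $G[S]$; for any $x\in V'_d$ and $y\in N_{V'_n}(x)$, $y$ has no $O_n$-neighbor (from Rule~\ref{branching:dom-many-udo-nbs-with-one-On-nb}); there is no edge inside $V'_d$ and none inside $O_n$; there is no isolate in $G[V'\cup O_n]$; and $|N_{V'}(v)|\neq 1$ for every $v\in V'_n\cup O_n$. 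Together with Reduction Rule~\ref{reduction:On-encircled}, every $v\in V'_n\cup O_n$ surviving in the branch thus has $|N_{V'}(v)|\geq 2$.

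Given these invariants, I would invoke 2-degeneracy of the current partial graph $G'$ to produce a vertex $u\in V'\cup O_n$ with $\deg_{V'\cup O_n}(u)\leq 2$, then split on the type of $u$. If $u\in V'_n$, combining $|N_{V'}(u)|\geq 2$ with $\deg_{V'\cup O_n}(u)\leq 2$ forces $|N_{V'}(u)|=2$ and $\deg_{O_n}(u)=0$, exactly matching Branching Rule~\ref{branching:udo-2-no-On}; the $u\in O_n$ case is symmetric and triggers Branching Rule~\ref{branching:On-2-no-On}. If $u\in V'_d$, the Set~1 invariants give $N(u)\cap(V'\cup O_n)=N_{V'_n}(u)$, so $|N_{V'_n}(u)|\leq 2$: sizes $0$, $1$, $2$ trigger respectively Reduction Rule~\ref{reduction:isolates}, Branching Rule~\ref{branching:dom-one-udo-nb-with-no-On-nbs} (using that the unique $V'_n$-neighbor has no $O_n$-neighbor, by the invariant noted above), or one of the three branchings of Rules~\ref{branching:dom-2nbs-udo-common-nb}--\ref{branching:dom-2nbs-with-many-nbs}.

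The main subtlety lies in this final trichotomy for $N_{V'_n}(u)=\{v_1,v_2\}$. Here I would appeal to Observation~\ref{obs:udo-deg-3}, which certifies $\deg_{V'\cup O_n}(v_i)\geq 3$ once the Set~2 rules have already been checked; combined with the absence of $O_n$-neighbors of $v_i$, this gives $|N_{V'}(v_i)|\geq 3$ and hence $|N_{V'}(v_i)\setminus\{u\}|\geq 2$. If $v_1,v_2$ share a $V'$-neighbor other than $u$, Rule~\ref{branching:dom-2nbs-udo-common-nb} fires. Otherwise the sets $N_{V'}(v_i)\setminus\{u,v_{3-i}\}$ are disjoint, and each has size at least~$1$, since otherwise $N_{V'}(v_i)\subseteq\{u,v_{3-i}\}$ would contradict $|N_{V'}(v_i)|\geq 3$. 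If one of these sets has size exactly~$1$, Rule~\ref{branching:dom-2nbs-udo-nb-y} applies (after swapping $v_1\leftrightarrow v_2$ if necessary); if both have size at least~$2$, Rule~\ref{branching:dom-2nbs-with-many-nbs} applies. For the general-graph setting addressed in the next section, the catch-all Branching Rule~\ref{branching:dom-general} handles any residual $V'_d$ vertex when 2-degeneracy no longer supplies a low-degree witness, which closes the case analysis.
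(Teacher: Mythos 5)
Your case analysis for the 2-degenerate situation is sound and essentially reproduces the paper's argument: the paper likewise reduces to a vertex $u$ of degree at most two in $G'$ and splits on whether $u$ is dominated and on its degree, only it interleaves the structural consequences of the earlier rules into the case distinction instead of listing them up front as invariants the way you do. The individual steps you give --- the invariants extracted from the first set of rules, the forced shapes for $u\in V'_n$, $u\in O_n$ and $u\in V'_d$, the use of the invariant from Branching Rule~\ref{branching:dom-many-udo-nbs-with-one-On-nb} to meet the precondition $\deg_{O_n}(v)=0$ of Branching Rule~\ref{branching:dom-one-udo-nb-with-no-On-nbs}, and the trichotomy via Observation~\ref{obs:udo-deg-3} among Rules~\ref{branching:dom-2nbs-udo-common-nb}, \ref{branching:dom-2nbs-udo-nb-y} and \ref{branching:dom-2nbs-with-many-nbs} --- all check out.

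However, your closing sentence for the general case leaves a genuine gap. Branching Rule~\ref{branching:dom-general} only applies when some vertex lies in $V'_d$. For a graph that is not 2-degenerate it can happen that no earlier rule applies, every vertex of $V'\cup O_n$ has degree at least~$3$ in $G'$, \emph{and} $V'_d=\emptyset$; then none of the rules fires, so ``handles any residual $V'_d$ vertex'' does not close the analysis. The paper treats exactly this configuration in its proof of the lemma: since $S\neq\emptyset$ throughout (thanks to the initial branching) and $N(S)=V'_d\cup O_d=O_d$, no undecided vertex is adjacent to~$S$, hence no CDS $M$ with $M\setminus V'=S$ can exist --- any vertices of $V'$ added to $S$ could never be connected to it, while $M=S$ fails to dominate the nonempty set $V'_n\cup O_n$ --- so this branch is correctly discarded. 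You need this (or an equivalent) argument to justify that the rules, together with branch-discarding, cover every reachable instance; as written, your proposal establishes coverage only for the 2-degenerate case.
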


\begin{proof} 
As explained below, Reduction Rule~\ref{branching:dom-general} serves as a final catch-all. For the  2-degenerate case, we should focus on all other rules and prove that they cover all cases.
This means that we have to show that the proposed algorithm will resolve each 2-degenerate graph completely. Our rule priorities might remove vertices of arbitrary degree from the graph $G'$ that is a partial graph of the graph $G[V'\cup O_n]$. This way, we again arrive at a 2-degenerate graph. Yet, what is important for dealing with 2-degenerate graphs is to consider all cases of a vertex~$u$ of degree at most~2.  The degree conditions in the following case distinction refer to~$G'$. Degree-0 vertices are treated with Reduction Rule~\ref{reduction:isolates}. We now discuss vertices~$u$ of degrees one or two.
    There are two different cases: either $u$ is in $V'_n\cup O_n$, or $u\in V'_d$.
    \\[1ex]
    \underline{Case 1}: $u$ is not dominated by $S$. Now we  discuss its degree in $G'$, either it is $1$ or it is $2$.
    \\
     \underline{Case 1.1}: $\deg(u) = 1$ and we denote by $v$ the neighbor in $G'$ of $u$. If $v\in O_n$, then either $u$ is in $O_n$ and then satisfies the conditions of Reduction Rule~\ref{reduction:edge-On}, or it is in $V_n$ and thus satisfies the conditions of Reduction Rule~\ref{reduction:On-encircled}. If $v\notin O_n$, $v$ satisfies the conditions of Reduction Rule~\ref{reduction:udo-only-one-dom}.
 \\
     \underline{Case 1.2}: $\deg(u) = 2$. Now we discuss the number of neighbors of $u$ in $O_n$. If both neighbors are in $O_n$, Reduction Rule~\ref{reduction:On-encircled} applies if $u\in V'_n$ and Reduction Rule~\ref{reduction:edge-On} applies otherwise. If only one neighbor of $u$ is in $O_n$, then this means that if $u\in O_n$ Reduction Rule~\ref{reduction:edge-On} applies, and otherwise Reduction Rule~\ref{reduction:udo-only-one-dom} applies.
        If none of the neighbors are in $O_n$, then either $u\in V'_n$ and then Reduction Rule~\ref{branching:udo-2-no-On} applies, or otherwise $u\in O_n$ and thus Reduction Rule~\ref{branching:On-2-no-On} applies.
     \\[1ex]
    \underline{Case 2}: $u$ is dominated by $S$.
    \\
        \underline{Case 2.1}: $\deg(u) = 1$. We denote by $v$ the neighbor in $G'$ of $u$. If $v$ is dominated by $S$ then Reduction Rule~\ref{reduction:dom-2nonadjac-nbs-one-dom} applies. If $v$ is not dominated by $S$, then either it has no neighbors in $O_n$ and then Reduction Rule~\ref{branching:dom-one-udo-nb-with-no-On-nbs} applies, or it has at least one neighbor in $O_n$ and Reduction Rule~\ref{branching:dom-many-udo-nbs-with-one-On-nb} applies.
         \\
        \underline{Case 2.2}: $\deg(u) = 2$. We denote $\{v_1,v_2\} = N_{V'}(u)$. If either $v_1$ or $v_2$ is dominated, then Reduction Rule~\ref{reduction:dom-2nonadjac-nbs-one-dom} applies. If both of them are not dominated by $S$, either they have at least one neighbor in $O_n$ and then Reduction Rule~\ref{branching:dom-many-udo-nbs-with-one-On-nb} applies or they do not have any neighbor in~$O_n$. So we know that $deg_{V'}(v_1) \geq 3$, and $deg_{V'}(v_2) \geq 3$, otherwise we could apply one of the rules of the previous case to $v_1$ or~$v_2$. Now, either $v_1$ and $v_2$ have a common 
        neighbor outside of $u$ and Reduction Rule~\ref{branching:dom-2nbs-udo-common-nb} applies, or they do not and then either at least one of them has exactly one neighbor that is not $u$ or $v_1$ (or $v_2$, respectively) and Reduction Rule~\ref{branching:dom-2nbs-udo-nb-y} applies, or they both have at least two neighbors outside of $u,v_1$ and $v_2$, so that Reduction Rule~\ref{branching:dom-2nbs-with-many-nbs} applies.

We finally have to prove the correctness of the algorithm for a general graph. If any vertex satisfies the conditions of any Branching  or Reduction Rule apart from Reduction Rule~\ref{branching:dom-general}, then we apply such a rule, otherwise no such rule applies, which means that the minimum degree of $G'[V'\cup O_n]$ is $3$. If at least one vertex is in $V_d$, then we can apply Reduction Rule~\ref{branching:dom-general}. Assume it is not the case, that means $V_d$ is empty, so every vertex of $V'\cup O_n$ is not dominated (and it is not empty, otherwise the algorithm would have ended). Since in the beginning, $S$ was not empty, $S$ is never empty. Moreover, at any point in the algorithm, $N(S)= V_d\cup O_d$. So in our case, we have $N(S) = O_d$. This means that $S$ is not a dominating set (there are vertices in $V_n\cup O_n$) and $S$ cannot have any neighbor added, so there is no CDS $M\subset V$ such that $M \setminus V' = S$. This branch has to be discarded.
\qed \end{proof}

\begin{table}[tb]
    \centering
    \begin{tabular}{l|l|l}
        Branching Rule \# & Branching vector & Branching number\\\hline
        \ref{branching:dom-nbs-On} & $(1,1+\alpha)$ & always better than \ref{branching:dom-many-udo-nbs-with-one-On-nb}
\\
\ref{branching:dom-diff-compos} & $(1,1+\delta)$& always better than \ref{branching:dom-nbs-diff-compos}\\\ref{branching:dom-many-udo-nbs-with-one-On-nb} & $(1,2,2+\alpha)$& $ < 1.9766$
\\
\ref{branching:dom-nbs-diff-compos} & $(1,2,2+\delta)$& $< 1.9766$
\\
        \ref{branching:udo-2-no-On} &  $(2-\delta,3-\delta-\alpha,2-\delta,3-\delta)$ & $<1.8269$\\
\ref{branching:On-2-no-On} & $(1+\alpha-\delta,2-\delta)$& $<1.6420$\\

\ref{branching:dom-one-udo-nb-with-no-On-nbs} & $(1,4-2\alpha,2)$& $< 1.7691$
\\
\ref{branching:dom-2nbs-udo-common-nb} & $(1,2,3,4-\alpha)$& $ < 1.9333$
\\\ref{branching:dom-2nbs-udo-nb-y} & $(1,2,3,4-\alpha,5-\delta-\alpha)$& $ < 1.9767$
\\\ref{branching:dom-2nbs-with-many-nbs} & $(1,2,3,5-2\alpha,5-2\alpha)$& $ < 1.9420$
\\\ref{branching:dom-general} & $(\beta,3-2\beta)$ where $\beta=1$ & $= 2$ \qquad \tiny not for 2-degenerate graphs 
\end{tabular}
    \caption{Collection of all branching vectors for the 2-degenerate case; the branching numbers are displayed for the different cases with $\alpha=\avalue$ and $\delta=\dvalue$.
    }
    \label{tab:branching-vectors-2degen}
\end{table}

\begin{theorem}
\textsc{Connected Dominating Set Enumeration} can be solved in time $\mathcal{O}(\cdsbasis^n)$ on 2-degenerate graphs of order~$n$, using polynomial space only.
The claimed branching number is attained by setting $\alpha=\avalue$ and $\delta=\dvalue$; see \autoref{tab:branching-vectors-2degen}.
\end{theorem}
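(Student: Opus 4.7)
The plan is to put together the machinery already developed in this section. By Lemma~\ref{lem:covers-all}, the listed Reduction and Branching Rules, applied in the prescribed priority order, cover every possible configuration that can arise on a 2-degenerate instance (Branching Rule~\ref{branching:dom-general} is only needed in the general-graph case and will not fire here, since 2-degeneracy guarantees a vertex~$u$ with $\deg_{G'}(u)\le 2$ to which one of the earlier rules applies). Lemma~\ref{lem:reductionrules} ensures the reduction rules are sound and do not increase~$\mu$, and Lemma~\ref{lem:branchingrules} ensures each branching rule is a complete case distinction with branching vector as listed in \autoref{tab:branching-vectors-2degen}. Combining these with Observation~\ref{obs-CDS-on-partialgraphs} and Corollary~\ref{cor-minCDS-on-partialgraphs} (which justify that deleting edges during reductions preserves the set of minimal CDS one is searching for), the branching tree produces, at its relevant leaves, exactly the family of supersets~$S$ of the initial seed that can be extended into minimal CDS of~$G$. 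At each such leaf one verifies in polynomial time that $G[S]$ is connected and that $S$ is inclusion-minimal (by checking, for each $v\in S$, whether $v$ has a private neighbor with respect to the CDS property); only sets passing both tests are output, which guarantees correctness and avoids duplicates.

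For the running time, I would substitute $\alpha=\avalue$ and $\delta=\dvalue$ into every branching vector of \autoref{tab:branching-vectors-2degen} and compute the associated branching numbers (the unique positive real roots of the characteristic polynomials $1=\sum_i x^{-r_i}$). The relevant observation is that several rules are dominated: Branching Rule~\ref{branching:dom-nbs-On} is strictly better than Rule~\ref{branching:dom-many-udo-nbs-with-one-On-nb}, and Rule~\ref{branching:dom-diff-compos} is strictly better than Rule~\ref{branching:dom-nbs-diff-compos}, so the binary vectors of the dominating rules need not be considered for the overall bound. Among the remaining rules the worst branching number for the chosen $(\alpha,\delta)$ is attained by Rule~\ref{branching:dom-2nbs-udo-nb-y} with vector $(1,2,3,4-\alpha,5-\delta-\alpha)$, whose branching number is bounded by $\cdsbasis$; every other rule evaluates to a strictly smaller number, as recorded in the last column of the table. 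This is exactly why the particular values $\alpha=\delta=0.106$ are chosen: they simultaneously keep the branching number of Rule~\ref{branching:dom-2nbs-udo-nb-y} below $\cdsbasis$ while not pushing any other rule (in particular the rules in the first set, whose vectors depend linearly on $\alpha$ and~$\delta$) past this bound.

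The initial measure satisfies $\mu(\mathcal{I}_0)=|V|=n$, since $V'=V$, $O_n=\emptyset$, and $c=0$ at the start of each of the $n$ initial binary branches, and the measure is always nonnegative. By a standard induction on the branching tree, the number of leaves is $\mathcal{O}(\cdsbasis^{\mu(\mathcal{I}_0)})=\mathcal{O}(\cdsbasis^n)$, and the work done along any root-to-leaf path (applying rules, maintaining~$G'$ and the partition $(V',O_d,O_n,S)$, and performing the final minimality and connectivity check) is polynomial in~$n$. The $n$-fold multiplicative overhead from the initial seed branching is absorbed into this polynomial factor. Finally, the algorithm is a depth-first branching procedure, so at any point in time it stores only the current root-to-node path of the recursion together with the current partition and partial subgraph, which needs $\mathcal{O}(\mathrm{poly}(n))$ space. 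The main obstacle in carrying out this plan is not conceptual but arithmetic, namely verifying (numerically, for the given $\alpha$ and~$\delta$) that every branching number in \autoref{tab:branching-vectors-2degen} is indeed below $\cdsbasis$ and identifying Rule~\ref{branching:dom-2nbs-udo-nb-y} as the binding constraint; these are the calculations summarized in the third column of the table.
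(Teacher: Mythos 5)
Your overall route matches the paper's: rely on Lemmas~\ref{lem:branchingrules}, \ref{lem:reductionrules} and~\ref{lem:covers-all} for correctness, note that Branching Rule~\ref{branching:dom-general} never fires on 2-degenerate instances, plug $\alpha=\delta=\avalue$ into the vectors of \autoref{tab:branching-vectors-2degen}, identify Branching Rule~\ref{branching:dom-2nbs-udo-nb-y} as the binding worst case with branching number below $\cdsbasis$, and conclude the time bound from $\mu(\mathcal{I}_0)=n$ and the space bound from the depth-first, linear-depth recursion. All of that is sound and is essentially the paper's argument.

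The genuine gap is your claim that the connectivity and minimality tests at the leaves ``guarantee correctness and avoid duplicates.'' These tests are purely local to a leaf: they cannot detect that the same set $S$ is reached at two different leaves, and with polynomial space you cannot store previously output solutions to filter repetitions, so duplicate-freeness must be argued structurally. The paper does this by observing that the branches correspond to binary in/out decisions on vertices, and it explicitly flags the one place where this fails: in Branching Rule~\ref{branching:dom-2nbs-with-many-nbs}, the fourth and fifth branches both put $u$ into $S$ and $v_1,v_2$ into $O_d$ and differ only in which neighborhood is forced into $O$; after further branching, the configuration in which both $N_{V'}(v_1)\setminus\{u,v_2\}$ and $N_{V'}(v_2)\setminus\{u,v_1\}$ end up in $O$ is reachable in both subtrees, so the same minimal CDS would be emitted twice. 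The fix is to pass an extra constraint into the fifth branch (at least one vertex excluded in the fourth branch must be in the solution). Your proposal neither makes the structural no-duplication argument nor notices this caveat, so that step of the write-up fails as stated, even though the repair is small and does not affect the running-time or space analysis.
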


\begin{proof}
For the correctness of the reduction and branching rules, we refer to the previous lemmas. For the running time, consider  \autoref{tab:branching-vectors-2degen} where we collect all branching vectors that occur in our analysis. In the table, one can also see the branching numbers for the different cases; sometimes, it is clear that other cases are always better. Namely, $(1,2,2+\gamma)$ can be viewed as a $(1,1)$-branch followed by one $(1,1+\gamma)$-branch in one of the two branches. As we do binary branchings only (put a vertex in the solution or out), it is clear that in our search tree, no solution can be enumerated twice. So, as the depth of our search tree is linear, only polynomial space is needed. There is caveat concerning Reduction Rule~\ref{branching:dom-2nbs-with-many-nbs}: the fourth and fifth branch is not binary in the strict sense. Therefore, our recursion would need additional information when calling this fifth branch, as one of the neighbors considered in the fourth branch not to be inside the solution must now be in the solution to prevent double enumeration.
\qed \end{proof}

\begin{corollary}
2-degenerate graphs have no more than $\mathcal{O}(\cdsbasis^n)$  minimal CDS.
\end{corollary}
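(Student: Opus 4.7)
The plan is to derive the corollary as an immediate consequence of the preceding theorem. First, I would argue that the branching algorithm described in the section enumerates each minimal CDS \emph{at least} once: by Lemma~\ref{lem:covers-all} the reduction and branching rules cover all cases, and by Lemma~\ref{lem:branchingrules} each branching rule is a complete case distinction, so on every 2-degenerate input graph the search tree explores every way of partitioning the vertices into $S$ and $O$, and in particular, every minimal CDS appears in the set $S$ of some leaf of the search tree.

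Next, I would note that the algorithm also enumerates each minimal CDS \emph{at most} once: as was pointed out in the proof of the theorem, every branching step is (essentially) a binary decision that places a selected vertex either into~$S$ or into~$O$, and no two distinct leaves can therefore produce the same output set~$S$. Combined with the post-processing check that verifies whether $S$ is connected and whether it is inclusion-wise minimal, each minimal CDS corresponds to exactly one leaf that outputs it.

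Hence the number of minimal CDS of a 2-degenerate graph~$G$ of order~$n$ is upper-bounded by the number of leaves of the search tree, which in turn is bounded by the running time of the enumeration algorithm, namely $\mathcal{O}(\cdsbasis^n)$ by the theorem. This gives the claimed combinatorial bound. No obstacle is anticipated here; the corollary is merely a combinatorial reading of the algorithmic statement, and the only subtlety to mention is the uniqueness of enumeration, for which one relies on the binary nature of the branching (with the minor caveat, noted in the theorem's proof, that Branching Rule~\ref{branching:dom-2nbs-with-many-nbs} requires carrying a little extra information between its fourth and fifth branch to avoid repeats).
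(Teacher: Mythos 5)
Your argument is correct and follows the same route as the paper: the completeness of the rules guarantees every minimal CDS is output, so the running-time bound of the enumeration algorithm immediately bounds the number of minimal CDS. The extra discussion of uniqueness of enumeration is harmless but not needed for the upper bound (duplicates could only make the count of distinct sets smaller than the number of leaves).
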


\begin{proof}
As our algorithm for 2-degenerate graphs produces all minimal connected dominating sets of such graph, the upper-bound on its running-time implies an upper-bound on the number of outputted objects.
\qed \end{proof}

\begin{remark}
We could deduce a corresponding CDS enumeration result for subcubic graphs, as they can be dealt with as 2-degenerate graphs after the initial branching.
There is a clear indication that the bound that we derive in this way is not tight.
Namely, Kangas \emph{et al.} have shown in~\cite{KanKKK2018} that there are no more than $1.9351^n$ many connected sets of vertices in a subcubic graph of order~$n$.
\end{remark}

\section{Getting $\beta$ into the game: the general case}\label{sec-algobeta}

As indicated above, we are now refining the previous analysis by splitting the set of vertices of the currently considered graph further.
More precisely, the set of vertices $V'$ that have not been decided to come into or to be out of the solution~$S$ is split into the set of vertices~$V'_n$ that is still undominated and the set $V'_d$ of vertices that is already dominated. From the viewpoint of the original graph, the neighbors of the solution set $S$ are therefore partitioned into the sets $V'_d$ and $O_d$.
However, observe that although we do not consider $O_d$ anymore in the present graph, we do care about $V'_d$, since $V'_d$-vertices can still be either placed into $S$ or into $O_d$ by future branching or reduction rules. This is also reflected in the measure of the instance~$I$, which is now defined as:

$$\mu(I)=|V'_n|+\alpha\cdot |O_n|+\beta\cdot |V'_d|+\delta \cdot c$$

We set $\alpha=0.110901$, $\beta = 0.984405 $ and $\delta = 0.143516 $.
We are next working through the branching rules one by one, as in particular the branching vectors will now split off into different cases, as in our preliminary analysis, we only took care of the worst cases, not differentiating between 
$V'_n$ or $V'_d$ (which was summarized under the set name $V'$ before).
For the convenience of the reader, we repeat the formulation of the branching rules in the following, adapting the notations.

We start with a table stating the branching vectors according to this new measure for some branching rules for which it is straightforward. The last two branchings are worst cases.

\begin{center}

\begin{tabular}{l|l|l||l|l|l}
    Rule & Branching vector & Br. number
    &Rule & Branching vector & Br. number\\\hline
     \# \ref{branching:dom-nbs-On}&  $(\beta,\beta+\alpha)$ &  $< 1.9489$ &
     \# \ref{branching:dom-diff-compos} & $(\beta,\beta+\delta)$  & $<1.9297$ \\
     \# \ref{branching:dom-many-udo-nbs-with-one-On-nb} &$(\beta,\beta+1,\beta+1+\alpha) $ & $<1.9896$ &
     \# \ref{branching:dom-nbs-diff-compos} & $(\beta,2\beta,2\beta+\delta)$ & $< 1.9896$ \\

\end{tabular}
\end{center}

\noindent\underline{Reduction Rule~\ref{branching:udo-2-no-On}} Let $u\in V'_n$ with  $\deg_{V'}(u)=2$, $\deg_{O_n}(u)=0$ and $N_{V'}(u)=\{v_1,v_2\}$. The branching vector is different when $v_1,v_2\in V'_d$ or $v_1,v_2\in V'_n$. We will assume $v_1v_2\notin E$, as this is always the worst case. A similar analysis applies to 
\underline{Reduction Rule~\ref{branching:On-2-no-On}}.

\begin{center}\scalebox{.85}{\begin{tabular}{l||l|l||l|l}
Condition & Branching vector for \# \ref{branching:udo-2-no-On} & Br. 
number & Br. vector for \# \ref{branching:On-2-no-On} & Br. 
number  \\\hline
    $v_1,v_2\in V'_n$ &           $(2-\delta,3-\delta-\alpha, 2-\delta,3-\delta)$& $<1.8463$&           $(1+\alpha-\delta,2-\delta)$& $<1.6635$ \\
    $v_1\in V'_n$, $v_2\in V'_d$ & $(2-\delta,2-\alpha+\beta,2-\delta,2+\beta)$& $<1.8236$& $(1+\alpha,1+\beta)$& $<1.5855$\\
    $v_1,v_2\in V'_d$ &           $(1+\beta,1+2\beta,1+\beta,1+2\beta)$ & $<1.7785$&           $(\beta+\alpha,2\beta+\alpha)$ & $<1.5817$ \\
\end{tabular}}
\end{center}

\noindent\underline{Reduction Rule~\ref{branching:dom-one-udo-nb-with-no-On-nbs}} We look at $u\in V'_d$, $\{v\}= V'_n\cap N(u)$ and the neighbors of $v$; we know that $N(v)$ contains at least $u,v_1,v_2\notin O$; differentiating between $v_i\in V_n'$ or $v_i\in V_d'$ (as before), we arrive at three cases never leading to a branching number worse than $1.78$.

Details of case distinctions for Reduction Rule~\ref{branching:dom-one-udo-nb-with-no-On-nbs}:
\begin{center}
\begin{tabular}{l|l|l}
Condition & Branching vector & Branching number  \\\hline
    $v_1,v_2\in V'_n$ &           $(\beta,\beta+3-2\alpha,1+\beta)$& $< 1.7796$ \\
    $v_1\in V'_n$, $v_2\in V'_d$ & $(\beta,2\beta+2-\alpha,1+\beta)$& $<1.7729$\\
    $v_1,v_2\in V'_d$ &           $(\beta,3\beta+1,1+\beta)$ & $<1.7665$\\
\end{tabular}
\end{center}

\noindent\underline{Branching Rule~\ref{branching:dom-2nbs-udo-common-nb}} 
Let $u\in V'_d$ be with $\deg_{V'\cup O_n}(u)=2$, such that $v_1,v_2\in N_{V'_n}(u)$ and $N_{V'}(v_1)\cap N_{V'}(v_2)$ contains a vertex $y\in V'$ different from~$u$. We differentiate $y\in V_n'$ or $y\in V_d'$ and arrive at a branching number not worse than $1.9403$.

Details of the analysis of Branching Rule~\ref
{branching:dom-2nbs-udo-common-nb}:
\begin{center}
  \begin{tabular}{l|l|l}
        Condition & Branching vector & Branching number\\\hline
        $y\in V'_n$ & $(\beta,2,3,3+\beta-\alpha)$ & $<1.9403$ \\
        $y\in V'_d$ & $(\beta,1+\beta,2+\beta,2+2\beta)$& $<1.9398$\\
        \end{tabular}
\end{center}
Notice that in the case where $y\in V'_n$, in the second and third branches, $y$ is put in $V'_d$ and the measure associated to $y$ decreases by $(1-\beta)$.

\noindent\underline{Reduction Rule~\ref{branching:dom-2nbs-udo-nb-y}} Let $u\in V'_d$ be with $\deg_{V'\cup O_n}(u)=2$, such that $v_1,v_2\in N_{V'_n}(u)$ and  $ N_{V'}(v_1)\setminus \{u,v_2\}= \{y\}$. We distinguish $y\in V_n'$ and $y\in V_d'$; the first case leads to another worst-case branching for our algorithm.
In the last branch of each vector, $\min(\beta,1-\alpha)$ corresponds to whether $z\in N(v_2)\setminus\{u,v_1\}$ belongs to $V_d'$ or to $V_n'$. 

\begin{center}
  \begin{tabular}{l|l|l}
        Condition & Branching vector & Branching number\\\hline
        $y\in V'_n$ & $(\beta,1+\beta,2+\beta,3+\beta-\alpha,3+\beta-\delta+\min(\beta,1-\alpha))$& $<1.9896$\\
        $y\in V'_d$ & $(\beta,1+\beta,2+\beta,2+2\beta,2+2\beta+\min(\beta,1-\alpha))$ & $< 1.9813$ \\
        \end{tabular}
\end{center}

\noindent\underline{Reduction Rule~\ref{branching:dom-2nbs-with-many-nbs}}
Let $u\in V'_d$, $N_{V'_n}(u)=\{v_1,v_2\}$, $N_{V'_d}(u)=N_{O_n}(u)=\emptyset$. We further assume that $\left| N_{V'}(v_1)\setminus \{u,v_2\} \right|\geq 2$, as well as $\left| N_{V'}(v_2)\setminus \{u,v_1\} \right|\geq 2$.
The only thing to discuss here is whether the vertices of  $N_{V'}(v_1)\setminus \{u,v_2\} $ 
$ N_{V'}(v_2)\setminus \{u,v_1\}$ are in $V'_n$ or $V'_d$. This leads to nine subcases, which never produce a branching vector worse than $1.9453$.

We will denote by $n_1 =\left|N_{V'_n}(v_1)\setminus \{u,v_2\}\right|$ and $n_2 =\left|N_{V'_n}(v_2)\setminus \{u,v_1\}\right|$.
In the worst case, $\left|N_{V'}(v_1)\setminus \{u,v_2\}\right| =\left|N_{V'}(v_2)\setminus \{u,v_1\}\right| = 2$. Thus, we obtain the following branching vectors (where $n_1,n_2\in \{0,1,2\}$):

\begin{center}
  \begin{tabular}{ll|lllll|l}
        \multicolumn{2}{l}{Condition} &\multicolumn{4}{l}{Branching vector} & & Branching number\\\hline 
        $n_1=0$,& $n_2=0$ & $(\beta$ ,&$1+\beta,$&$2+\beta$,&$2+3\beta,$&$2+3\beta )$ & $<1.9430$\\
        $n_1 = 0$,& $n_2 =1$ & $(\beta,$&$1+\beta,$&$ 3,$&$ 2+3\beta,$&$ 3+2\beta -\alpha$)& $<1.9440$\\
        $n_1 = 0$, &$n_2 =2$ & $(\beta,$&$1+\beta,$&$ 4-\beta,$&$ 2+3\beta,$&$ 4+\beta -2\alpha$)&$<1.9453$\\
        
        $n_1 = 1$, &$n_2 =0$ & $(\beta,$&$2,$&$ 2+\beta,$&$ 3+2\beta-\alpha,$&$ 2+3\beta$)& $<1.9426$\\
        $n_1 = 1$, &$n_2 =1$ & $(\beta,$&$2,$&$ 3,$&$        3+2\beta -\alpha,$&$ 3+2\beta -\alpha$)&$<1.9437$\\
        $n_1 = 1$, &$n_2 =2$ & $(\beta,$&$2,$&$ 4-\beta,$&$  3+2\beta -\alpha,$&$ 4+\beta -2\alpha$)& $< 1.9449$\\
        
        $n_1 = 2$, &$n_2 =0$ & $(\beta,$&$3-\beta,$&$ 2+\beta,$&$ 4+\beta -2\alpha,$&$ 2+3\beta$)&$<1.9425$\\
        $n_1 = 2$, &$n_2 =1$ & $(\beta,$&$3-\beta,$&$ 3,$&$ 4+\beta -2\alpha,$&$ 3+2\beta-\alpha$)&$<1.9435$\\
        $n_1 = 2$, &$n_2 =2$ & $(\beta,$&$3-\beta,$&$ 4-\beta,$&$ 4+\beta -2\alpha,$&$ 4+\beta -2\alpha$)&$<1.9448$\\
    \end{tabular}

\end{center}

\noindent\underline{Reduction Rule~\ref{branching:dom-general}} When this branching rule applies, the sets in which the different vertices reside are all already known: as it is applied after
Reduction Rule~\ref{branching:dom-nbs-On} and Reduction Rule~\ref{branching:dom-nbs-diff-compos}, all the neighbors of $x\in V'_d$ are in $V'_n$, moreover $\deg_{V'_n}(x) = \deg_{V'\cup O_n}(x) \geq 3$ as it is applied after every other rule. The branching vector  $(\beta, 3-2\beta )$ gives a branching number $<1.9896$, which is the last worst-case branching.

\begin{theorem}
\textsc{Connected Dominating Set Enumeration} can be solved in time $\mathcal{O}(\genbasis^n)$ on graphs of order~$n$, using polynomial space only.
\end{theorem}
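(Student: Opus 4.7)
The plan is to reuse the branching/reduction scheme of Section~\ref{sec-simpleCDSalgo} verbatim, but re-analyze each rule under the refined measure $\mu(I)=|V'_n|+\alpha\cdot |O_n|+\beta\cdot |V'_d|+\delta\cdot c$ with $\alpha=0.110901$, $\beta=0.984405$, $\delta=0.143516$. Correctness of the individual rules and the fact that they cover every situation were established in Lemmas~\ref{lem:branchingrules}, \ref{lem:reductionrules} and~\ref{lem:covers-all}; none of those arguments uses 2-degeneracy except as a guarantee that a small-degree vertex exists. For the general case this guarantee fails, which is exactly why Branching Rule~\ref{branching:dom-general} is needed as a catch-all: after all preceding rules are inapplicable, every vertex of $V'_n\cup O_n$ has degree at least $3$ in $G'$, and if any $V'_d$ vertex $x$ remains we branch on it via Rule~\ref{branching:dom-general}; otherwise $V'_d=\emptyset$ and the branch can be pruned (as shown in the last paragraph of the proof of Lemma~\ref{lem:covers-all}). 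Thus the rule set still covers all cases.

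Next I would verify that $\mu$ is a legitimate measure for our purposes, namely that $\mu(I)\le n$ at the root of the recursion (which holds since initially $V'_n=V$ and $O_n=V'_d=\emptyset$, $c=0$) and that $\mu(I)\geq 0$ throughout (immediate since $\alpha,\beta,\delta\geq 0$ and we are summing cardinalities). Furthermore the reduction rules do not increase $\mu$ (by Lemma~\ref{lem:reductionrules}; none of the arguments there depend on the particular values of the weights, so the verification transfers verbatim to the new measure once one checks the individual bookkeeping, e.g.\ that moving a vertex from $V'_n$ to $V'_d$ only decreases the measure by $1-\beta>0$).

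The core of the proof is then to go through each branching rule and show that its branching number under $\mu$ is at most $\genbasis$. The tables immediately preceding the theorem collect the branching vectors for Rules~\ref{branching:dom-nbs-On}--\ref{branching:dom-general}, split according to whether the neighbors under consideration lie in $V'_n$ or $V'_d$ (and, in Rule~\ref{branching:dom-2nbs-with-many-nbs}, how many of the two outside neighbors of each of $v_1,v_2$ are of each type). Each reduction of the measure in these vectors is obtained by the same case analysis as in Section~\ref{sec-simpleCDSalgo}, but now charging $\beta$ (resp.\ $1$) per vertex moved out of $V'_d$ (resp.\ $V'_n$), charging $\alpha$ per vertex moved out of $O_n$, charging $1-\beta$ per vertex moved from $V'_n$ to $V'_d$, and $\delta$ per newly created connected component of $G[S]$. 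Plugging the stated values of $\alpha,\beta,\delta$ into each branching vector and solving the associated characteristic polynomial gives branching numbers all strictly below $\genbasis$, with the tight cases being Rules~\ref{branching:dom-many-udo-nbs-with-one-On-nb}, \ref{branching:dom-nbs-diff-compos}, \ref{branching:dom-2nbs-udo-nb-y} (subcase $y\in V'_n$) and \ref{branching:dom-general}, which all hit the claimed bound.

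The main obstacle is simply the bookkeeping of the many subcases, particularly for Rule~\ref{branching:dom-2nbs-with-many-nbs} with its nine subcases and Rule~\ref{branching:dom-2nbs-udo-nb-y} where a $\min(\beta,1-\alpha)$ term reflects the ambiguity of the outside neighbor's type; the delicate choice of $\alpha,\beta,\delta$ is exactly what balances these worst-case branching numbers simultaneously. Once the branching numbers are verified, the running time is $\mathcal{O}(\genbasis^n)$ by standard branching-analysis (Fomin--Kratsch~\cite{FomKra2010}). For the space bound, note that each branch is binary (in the sense described at the end of the proof of the 2-degenerate theorem, with a small sequentialization trick for Rule~\ref{branching:dom-2nbs-with-many-nbs}) and the depth of the recursion tree is linear in $n$, so only polynomial working space is required. \qed
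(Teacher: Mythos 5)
Your proposal is correct and takes essentially the same route as the paper: the theorem's proof there is precisely the re-analysis in Section~\ref{sec-algobeta} of the rules from Section~\ref{sec-simpleCDSalgo} under the refined measure with the stated weights, including the same identification of the worst-case branchings (Rules~\ref{branching:dom-many-udo-nbs-with-one-On-nb}, \ref{branching:dom-nbs-diff-compos}, \ref{branching:dom-2nbs-udo-nb-y} with $y\in V'_n$, and~\ref{branching:dom-general}) and the same carry-over of correctness and the polynomial-space argument from the 2-degenerate case. Nothing essential is missing.
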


\begin{corollary}
There are no more than $\mathcal{O}(\genbasis^n)$ many minimal connected dominating sets in a graph of order $n$.
\end{corollary}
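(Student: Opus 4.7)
The plan is to argue exactly in parallel with the analogous corollary proved in Section~\ref{sec-simpleCDSalgo} for the 2-degenerate case. The preceding theorem furnishes an algorithm that, given a graph~$G$ of order~$n$, enumerates every inclusion-wise minimal connected dominating set of~$G$ within time $\Oh(\genbasis^n)$. Since writing down a single output takes at least one time step, the running-time upper bound is automatically an upper bound on the total number of objects produced, hence on the number of minimal connected dominating sets in~$G$.

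The only point that genuinely needs to be checked is that the algorithm enumerates every minimal CDS \emph{exactly once}; otherwise, the running-time bound would bound only the length of the output stream and not the cardinality of the underlying family. For completeness of the enumeration, I would appeal to Lemma~\ref{lem:covers-all} together with the soundness of the branching and reduction rules (Lemmas~\ref{lem:branchingrules} and~\ref{lem:reductionrules}): together these guarantee that for each minimal CDS~$M$ of~$G$ there is at least one leaf of the search tree at which $S=M$ is output. Uniqueness follows from the essentially binary structure of the branching: each rule refines the partition of~$V$ by moving one or a few vertices from~$V'$ into either~$S$ or~$O$, so distinct leaves of the recursion tree correspond to distinct partitions $V = S \uplus O$ and therefore to distinct candidate sets~$S$. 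The minor subtlety concerning the fifth branch of Rule~\ref{branching:dom-2nbs-with-many-nbs}, where the same third-to-last branch is revisited, was already dealt with in the proof of the 2-degenerate theorem by threading an auxiliary flag through the recursion to forbid repeating an already-explored configuration.

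There is really no serious obstacle here: once the preceding theorem is granted, the corollary drops out from the simple observation that an enumeration algorithm cannot output more distinct objects than its running time allows. Consequently, the maximum number of minimal connected dominating sets in an $n$-vertex graph is also in $\Oh(\genbasis^n)$, which is the claim.
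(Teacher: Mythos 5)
Your proposal is correct and follows essentially the same route as the paper: the preceding theorem's enumeration algorithm outputs every minimal CDS of an $n$-vertex graph within time $\Oh(\genbasis^n)$, so the running-time bound immediately bounds their number. (Your extra worry about duplicate outputs is not actually needed for this counting corollary—completeness alone suffices, since duplicates only lengthen the output stream—but it does no harm.)
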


\section{Achieving polynomial delay is not easy}\label{sec-PolyDelay}
Unfortunately, we cannot guarantee that the time elapsed between producing two subsequent minimal solutions is polynomial only.
One reason for this is the fact that the corresponding extension problem is \NP-complete, as we can show by a reduction from \textsc{3-Sat}.
Namely, if \textsc{Connected Dominating Set Extension} would be solvable in polynomial time, then we might cut search tree branches whenever it is clear that no solution is to be expected beyond a certain node of the search tree, as we can associate to such a node also a set of vertices~$U$ that is decided to go into the solution. More precisely speaking, we know a bit more in each node of the search tree, so that we are actually looking at a more general question. But if \textsc{Connected Dominating Set Extension} is \NP-hard, then we cannot expect a polynomial-time algorithm for this more general question, either.
On the positive side, it has been recently exemplified with the enumeration problem of minimal Roman domination functions in~\cite{AbuFerMan2022} that a polynomial-time algorithm for the corresponding extension problem can be adapted so that polynomial delay can be achieved for this type of enumeration problem.

\begin{theorem}
The \textsc{Connected Dominating Set Extension} problem is \NP-complete, even when restricted to 2-degenerate graphs.
\end{theorem}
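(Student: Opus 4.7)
The plan is to prove membership in \NP{} by the standard guess-and-check: a nondeterministically guessed set $D$ can be verified to be a CDS containing $U$ in polynomial time, and minimality is checked by testing, for each $v \in D$, that $D \setminus \{v\}$ fails to be a CDS (either loses a dominator or disconnects). For hardness, I would reduce from \textsc{3-Sat}: given a formula $\varphi = C_1 \wedge \cdots \wedge C_m$ over variables $x_1, \ldots, x_n$, build a 2-degenerate graph $G_\varphi$ and a forced set $U \subseteq V(G_\varphi)$ so that $(G_\varphi, U)$ has a positive answer iff $\varphi$ is satisfiable.

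The construction I have in mind has three kinds of gadgets. A \emph{variable gadget} for $x_i$ consists of a literal pair $t_i, f_i$ together with a small anchor sub-structure containing vertices placed in $U$; the gadget is wired so that any minimal CDS extending $U$ must include exactly one of $t_i$ and $f_i$ (the other literal will be redundant for domination and for connectivity, violating minimality if retained). A \emph{clause gadget} for $C_j$ contains a vertex $c_j \in U$ whose only non-$U$ neighbors are precisely the literal vertices corresponding to the literals of $C_j$; since $c_j \in U$ must be connected to the rest of the solution and the only routes out of $c_j$ pass through its literal neighbors, at least one such literal must be included in the CDS, which encodes satisfaction of $C_j$. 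Finally, a \emph{backbone} — say a path through $U$-vertices linking the anchors of all variable gadgets — guarantees that the selected literals form a connected extension of $U$. The forced members of $U$ carry private neighbors (pendant witnesses) to justify their presence under minimality without forcing extra vertices.

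The main obstacle is maintaining 2-degeneracy while still encoding the NP-hard constraints: literal vertices can appear in many clauses, so their degrees may be large, and clause-vertices have degree at least three from their three literal neighbors alone. To fix this, I would subdivide each clause–literal edge with a fresh degree-two subdivision vertex and attach pendant (degree-one) private witnesses only where needed; this lets me exhibit an elimination order that peels off pendants first (degree $1$), then subdivision vertices and clause vertices (degree $\le 2$ at the time of removal), then literal vertices, then the backbone, checking at each step that the current minimum degree is at most $2$. With this in hand, correctness is a routine two-direction argument: a satisfying assignment $\tau$ yields a minimal CDS consisting of $U$ together with the chosen literals (and the mandatory subdivision vertices along satisfying edges), with minimality witnessed by private neighbors or by cut-vertex status in $G[D]$; conversely, any minimal CDS $D \supseteq U$ chooses exactly one literal per variable (by the variable gadget's forcing) and must pick a satisfying literal per clause (by $c_j$'s connectivity requirement), so reading off $\tau$ from $D$ gives a satisfying assignment for $\varphi$.
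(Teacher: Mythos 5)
Your overall architecture (guess-and-check for \NP-membership, a reduction from \textsc{3-Sat} with anchor vertices in $U$, clause vertices whose need for connectivity forces a literal into the solution, and an elimination order to certify 2-degeneracy) is the same strategy as the paper's, but your sketch leaves the crux unspecified, and the justification you give for it is incorrect as stated. The critical claim is that any minimal CDS $D\supseteq U$ contains \emph{exactly} one of $t_i,f_i$ because ``the other literal will be redundant for domination and for connectivity, violating minimality if retained.'' This does not follow: literal vertices also serve the clause gadgets. If both $t_i$ and $f_i$ lie in $D$ --- say $t_i$ connects the variable anchor while $f_i$ is the only vertex connecting (and dominating parts of) the gadget of some clause containing $\neg x_i$ --- then neither literal is individually removable, so $D$ can be perfectly minimal while containing both. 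In that case reading an assignment off $D$ is no longer consistent: a clause may be ``satisfied'' only through $f_i$ although you declared $x_i$ true. Since the whole reverse direction rests on this consistency, this is a genuine gap in the reduction, not a routine detail to be filled in later.

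The paper closes exactly this gap by wiring the variable gadget so that, when both literals are present, minimality is violated at the \emph{anchor} $w_i\in U$ rather than at a literal: $w_i$ is adjacent only to buffer vertices $y_i,\overline{y}_i$, which attach to $v_i,\overline{v}_i$ respectively, and the edge $v_i\overline{v}_i$ is present. If $v_i,\overline{v}_i\in D$, every walk through $w_i$ can be rerouted via the edge $v_i\overline{v}_i$, and everything $w_i$ helps dominate is already dominated by $\{v_i,\overline{v}_i,y_i\}$ (or the symmetric set), so $D\setminus\{w_i\}$ is still a CDS --- contradicting minimality, since $w_i\in U\subseteq D$. Your plan needs an analogous mechanism, and you must check that it survives your other design choices (subdividing clause--literal edges, and a backbone that must not create shortcuts letting a superfluous literal stay non-redundant). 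The rest of your outline is sound and close to the paper: the paper's clause gadget forces $a_j$ via the pendant $c_j$ and then forces a (true) literal through connectivity of $a_j$ or $b_j$, its single hub $s$ plays the role of your backbone (notably kept outside $U$ and added to $D$ only when needed), and 2-degeneracy is likewise certified by peeling low-degree vertices in a suitable order; but without a concrete variable gadget enforcing consistency, the proposed reduction cannot be declared correct.
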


\begin{proof}
By ``guess-and-check,'' we know that $\textsc{Connected Dominating Set}$\linebreak[4]   $\textsc{Extension}\in\NP$. For the \NP-hardness, we will use \textsc{3-Sat}. Let $X=\{ x_1,\ldots,x_n\}$ be the set of the variables and $C= \{ C_1,\ldots, C_m\}$ be the set of clauses of an instance of \textsc{3-Sat}. This implies that for each $C_j\in C$, there exist $l_{j,1},l_{j,2},l_{j,3}\in \{ x_1, \neg x_1, \ldots, x_n,\neg x_n\}$ with $C_j= \{l_{j,1},l_{j,2},l_{j,3}\}$. 

Define the graph $G=(V,E)$ with
\begin{equation*}
    \begin{split}
        V = &\ \{ s\} \cup \{w_i, v_i, \overline{v}_i, y_i, \overline{y}_i \mid i\in \lbrace 1,\ldots,n\rbrace\}\cup \{a_j,b_j,c_j\mid j\in \lbrace 1,\ldots,m\rbrace\},\\
        E = & \left(\bigcup_{i=1}^n \{ \{v_i,s\},\{\overline{v}_i,s\}, \{ v_i, \overline{v}_i\}, \{v_i,y_i\},\{\overline{v}_i,\overline{y}_i\}, \{w_i,y_i\},\{w_i,\overline{y}_i\} \}\right) \cup \\
        & \Biggl( \bigcup_{i=1}^n \{ \{b_j,a_j\},\{a_j,c_j\} \} \cup \{\{ b_j, v_i\}\mid  x_i \in \{l_{j,1},l_{j,2}\}\} 
        \\
        & \cup \{\{ b_j, \overline{v}_i\}\mid \neg x_i \in \{l_{j,1},l_{j,2}\}\} \cup \left\{\{ a_j, v_i\} \mid  x_i = l_{j,3}\right\} \\
        & \cup \{\{ a_j, \overline{v}_i\} \mid \neg x_i = l_{j,3}\}\Biggr)
    \end{split}
\end{equation*}
and the sets $U:= \{w_1,\ldots,w_n\}$, $V':=\{v_1, \overline{v}_1, \ldots, v_n, \overline{v}_n\}$. Clearly, this graph can be produced in polynomial time, since there are $5n + 3m + 1$ vertices and $7n + 5m$ edges. 

Before proving the correction of the reduction itself, let us first focus on the graph property claimed about the produced graph.

\begin{claim}
$G$ is 2-degenerate.
\end{claim}

To prove this claim, we will show that there exists a sequence of $6n + 3m + 1$ many vertices  $( v_1,\ldots,v_{6n + 3m + 1})\in V^{6n + 3m + 1}$ such that $v_k$ has at most degree~2 on the induced graphs $G[\{v_k,\ldots,v_{6n+3m+1}\}]$ for each $k$ from $6n+3m+1$ down to~$1$. 

Since $\deg(c_j) = 1$, $\deg(w_i) = \deg(y_i) = \deg(\overline{y}_i) = 2$ holds for each $i \in \{ 1, \ldots, n\}$ and $j\in \{1,\ldots, m\}$, these vertices always fulfil the condition of the sequence. Therefore, we define $M_1:= V\setminus \{c_j\mid j\in \{1,\ldots,m\} \} \cup \{ w_i, y_i,\overline{y}_i\mid i\in \{1,\ldots,n\}\}$.  For each $j\in \{1,\ldots, m\}$, $N(a_j) \subseteq \{ b_j,c_j\}\cup V'$ holds with $\vert N(a_j)\cap V'\vert =1$. Therefore, $a_j$ has degree~2 on $G[M_1]$. For each $j \in \{1,\ldots,m\}$, $\vert N(b_j)\setminus \{a_j\} \vert = 2$ holds. Define $M_2:= M_1\setminus \{a_1,b_1,\ldots,a_m,b_m\}$. This implies $N[v_i]\setminus M_2=N[\overline{v}_i]\setminus M_2= \{v_i,\overline{v}_i,s\}$
for each $i\in \{1,\ldots,n\}$. This leaves us with~$s$ finally. Thus, there exists a sequence that satisfies the claim and $G$ is 2-degenerate.  \hfill$\Diamond$

Now, we turn towards the correctness of the reduction. The main ingredients in terms of gadgets are displayed in \autoref{fig:ext-cds:basic-block}.

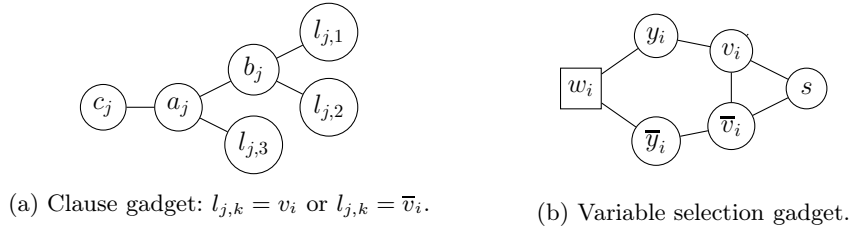
\begin{figure}[tb]
\centering
\begin{subfigure}[l]{.5\textwidth}\centering
    \begin{tikzpicture}[->,
scale=1, every node/.style={anchor=center, scale=0.7},node distance=1cm,main node/.style={circle,fill=white!20,draw,font=\sffamily\Large\bfseries}]
\node [main node] (c_j) at (0,0) {$c_j$};
\node [main node] (b_j) at (2,0.5) {$b_j$};
\node [main node] (a_j) at (1,0) {$a_j$};
\node  [main node] (l_j1) at (3,1) {$l_{j,1}$};
\node  [main node] (l_j2) at (3,0) {$l_{j,2}$};
\node  [main node] (l_j3) at (2,-0.5) {$l_{j,3}$};
\begin{scope}[-]
\draw [thin] (c_j) -- (a_j);
\draw [thin] (a_j) -- (b_j);
\draw [thin] (b_j) -- (l_j1);
\draw [thin] (b_j) -- (l_j2);
\draw [thin] (a_j) -- (l_j3);

\end{scope}

\end{tikzpicture}
    \subcaption{Clause gadget: $l_{j,k}=v_i$ or $l_{j,k}=\overline{v}_i$.}
\end{subfigure}
\qquad 
\begin{subfigure}[l]{.4\textwidth}\centering
\begin{tikzpicture}[->,
scale=1, every node/.style={anchor=center, scale=0.7},node distance=1cm,main node/.style={circle,fill=white!20,draw,font=\sffamily\Large\bfseries},u node/.style={rectangle,fill=white!20,draw,font=\sffamily\Large\bfseries},minimum height=22pt, minimum width = 22pt]
\node [u node] (w_i) at (0,0) {$w_i$};
\node [main node] (x_i) at (2,0.5) {$v_i$};
\node [main node] (y_i) at (1,0.7) {$y_i$};
\node [main node] (-y_i) at (1,-0.7) {$\overline{y}_i$};
\node [main node] (-x_i) at (2,-0.5) {$\overline{v}_i$};
\node [main node] (s) at (3,0) {$s$};
\node (nx_i) at (2.5,1) {};
\node  (n-x_i) at (2.5,-1) {};
\begin{scope}[-]
\draw [thin] (w_i) -- (-y_i);
\draw [thin] (w_i) -- (y_i);
\draw [thin] (x_i) -- (y_i);
\draw [thin] (-x_i) -- (-y_i);
\draw [thin] (s) -- (x_i);
\draw [thin] (-x_i) -- (x_i);
\draw [thin] (s) -- (-x_i);
\draw [thin] (x_i) -- (nx_i);
\draw [thin] (-x_i) -- (n-x_i);
\end{scope}

\end{tikzpicture}
    \subcaption{Variable selection gadget.}
\end{subfigure}

\caption{Gadgets for $G$: vertices from~$U$ are drawn as squares}
\label{fig:ext-cds:basic-block}
\end{figure}

\begin{claim}
$(X,C)$ is a yes-instance of the \textsc{3-Sat} if and only if $(G,U)$ is a yes-instance of the \textsc{Connected Dominating Set Extension} problem.
\end{claim}

Let $D \subseteq V$ be a minimal CDS with $U \subseteq D$. Assume there exists an $i\in\{1,\ldots,n\}$ such that $v_i,\overline{v}_i\in D$. Since $D$ is connected and $w_i\in U \subseteq D$, $y_i$ or $\overline{y}_i$ are in $D$ (w.l.o.g., $y_i \in D$). Because $N[w_i]\setminus N[\{ v_i,\overline{v}_i,y_i\}] = \emptyset$ holds, $w_i$ has to be in $D$ for connectivity reasons. Therefore, $\overline{y}_i\in D$. Assume $p=(p_1,\ldots,p_l)$ is a walk with $p_k = w$ for $k\notin{1,n}$. Thus, $\{p_{k-1},p_k+1\}=\{ y_i,\overline{y}_i\}$. W.l.o.g., let $p_{k-1}= y_i$ and $p_{k+1}= \overline{y}_i$. Then also the walk $q=(p_1,\ldots, p_{k-1},v_i,\overline{v}_i,p_{k+1},\ldots,p_l)$ connects the same two vertices. Therefore, $D\setminus\{ w_i\}$ is also a CDS. Hence, for each $i\in \{1,\ldots,n\}$, there is either $v_i\in D$ or $\overline{v}_i\in D$ (at least one has to be in~$D$, otherwise $D$ is not connected because of~$w_i$). Define the interpretation $\phi= (\phi_1, \ldots, \phi_n)$ with 
$$\forall i\in\{1,\ldots,n\}:\, \phi_i = \begin{cases}
1, & v_i\in D\\
0, & \overline{v}_i \in D
\end{cases}.$$
 
Since for each $j\in \{1,\ldots,m\}$, $c_j$ has to be dominated and $N(c_j)=\{a_j\}$ holds, $a_j\in D$. Hence, there exists an $a_j$-$s$-path in $D \cup \{s\}$. Therefore, $\{t\}=N(a_j)\cap V'\subseteq D$ or $b_j\in D$. In the first case, $C_j$ is satisfied by $x_i$ if $t = v_i$ or $\neg x_i$ if $t= \overline{v}_i$ (for a $i\in\{1,\ldots,n\}$). If there exists $b_j\in D$, because of connectivity reasons, there exists  $i\in \{ 1,\ldots,n\}$ such that $v_i\in N(b_j)\cap D$ or $\overline{v}_i\in N(b_j)\cap D$. This implies that $C_j$ is satisfied by $x_i$ or $\neg x_i$. Thus, $\phi$ satisfies the given instance. Therefore, $(X,C)$ is a yes-instance.

Assume there exists an interpretation $\phi= (\phi_1,\ldots,\phi_n)$ that satisfies $(X,C)$. Then for each $j \in \{1,\ldots,m\}$, there exists  a $k_j \in \{1,2,3\}$ such that $l_{j,k_j}$ satisfies $C_j$. Define 
$$D= \left(\bigcup_{i=1}^n \{ w_i\} \cup V_i \right) \cup \{b_j \mid  l_{j,3}\text{ is false}, j\in \{1,\ldots,m\} \} \cup \{a_1, \ldots, a_m\}$$
with
$$V_i = \begin{cases} \{v_i,y_i\}, &\text{if }\phi_i=1\\
\{\overline{v}_i,\overline{y}_i\}, &\text{if } \phi_i=0 
\end{cases}$$ for each $i\in\{ 1, \ldots, n\}$. If $D$ is not connected (this can be checked in polynomial time), add $s$ to~$D$. Clearly, $U\subseteq D$. For each $i\in \{1,\ldots,n\}$ and $j\in \{1,\ldots,m\}$, $\{v_i,\overline{v}_i,w_i, y_i,\overline{y}_i\}\subseteq N[V_i]$, $\{a_j,b_j,c_j\}\subseteq N[a_j]$ hold. 
 This implies that $D$ is a dominating set.
 
 For each $i\in\{1, \ldots,n\}$ with $v_i\in D$, $(w_i, y_i, v_i, s)$ is a $w_i$-$s$-path (or $(w_i,\overline{y}_i, \overline{v}_i, s)$ if $\overline{v}_i\in D$) on $G[D\cup\{s\}]$. Furthermore, $(a_j,b_j,t_j, s)$ (if $l_{j,3}$ is false) or $(a_j,t_j, s)$ (if $l_{j,3}$ is true) is an $a_j$-$s$-path on $G[D\cup\{s\}]$ for $l_{j,k_j} \in \{x_{r_{k_j}}, \neg x_{r_{k_j}}\}$ and $\{t_j\}= D\cap \{v_{r_{k_j}},\overline{v}_{r_{k_j}}\}$. Since these paths include each element in~$D$ and end in~$s$, $D\cup \{s\}$ is connected. Therefore, $D\cup \{s\}$ is a CDS (and $D$ itself also if $D$ was connected anyways).
 
 Since for each $j\in\{ 1, \ldots, m\}$, $N(c_j)\cap D = \{a_j\}$ holds, $D\setminus\{a_j\}$ is not a dominating set. If $l_{j,k_j}$ is false, $a_j$ has degree~0 on $G[D\setminus \{b_j\}]$. Therefore, $b_j$ has to be in~$D$.
 Let $v_i\in D$ (or $\overline{v}_i$, respectively) the $N[\overline{y}_i]\cap D=w_i$ (or $N[y_i]\cap D=w_i$, respectively). Thus, $w_i$ has to be in~$D$. Since $w_i$ has degree~1 in $G[D]$, its neighbor $y_i$ (or $\overline{y}_i$, respectively) has also to be in~$D$ to form a CDS. On $G[D\setminus \{v_i\}]$ (or $G[D\setminus \{\overline{v}_i\}]$, respectively) $\{w_i, y_i\}$ (or $\{w_i, \overline{y}_i\}$, respectively) is an isolated edge. Thus, $D\setminus \{ v_i\}$ (or $D\setminus\{\overline{v_i}\}$, respectively) is not connected, since there is no $w_i$-$a_m$-path. As mentioned at the beginning, $D$ contains $s$ only if $D\setminus \{s\}$ is not connected. Hence, $D$ is a minimal CDS.\hfill$\Diamond$
 
 These two claims finish our argument.
\qed \end{proof}

\noindent
Due to the parsimonious nature of the reduction, we can also conclude the following result.

\begin{corollary}
Assuming that the Exponential Time Hypothesis\footnote{For a discussion of this hypothesis, we refer to \cite{ImpPatZan2001,LokMarSau2011b}.} holds, there is no algorithm that solves the \textsc{Connected Dominating Set Extension} problem in time $\Oh(2^{o(n)})$ on (2-degenerate) graphs of order~$n$.
\end{corollary}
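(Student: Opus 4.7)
The plan is to combine the reduction from the preceding theorem with the Sparsification Lemma. The reduction takes an arbitrary instance of \textsc{3-Sat} with $n$ variables and $m$ clauses and produces in polynomial time a 2-degenerate graph $G$ on $N = 5n + 3m + 1$ vertices, together with a vertex set $U$, such that $(G,U)$ is a yes-instance of \textsc{Connected Dominating Set Extension} if and only if the original formula is satisfiable. The key structural feature I will rely on is that $N$ is linear in $n+m$; this is immediate from the explicit vertex and edge counts $5n+3m+1$ and $7n+5m$ given in the previous proof.

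Suppose, for the sake of contradiction, that \textsc{Connected Dominating Set Extension} could be solved in time $\Oh(2^{o(N)})$ on 2-degenerate graphs of order $N$. Composing this hypothetical algorithm with the reduction would decide \textsc{3-Sat} in time $\Oh(2^{o(n+m)})$. Invoking the Sparsification Lemma of Impagliazzo, Paturi and Zane, one may, at the cost of only a constant-factor blow-up in the number of variables, restrict attention to \textsc{3-Sat} instances in which $m = O(n)$. On such instances, $2^{o(n+m)} = 2^{o(n)}$, contradicting ETH.

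The restriction to 2-degenerate graphs comes for free, because the graph $G$ produced by the reduction was already shown to be 2-degenerate in the proof of the previous theorem. I do not foresee any real obstacle: the whole argument is a chain of standard observations, and the only point that deserves explicit mention is that the wording ``parsimonious'' in the corollary statement is here shorthand for ``size-linear in $n+m$,'' which is precisely the property that allows the Sparsification Lemma to collapse $2^{o(N)}$ into $2^{o(n)}$.
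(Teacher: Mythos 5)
Your proposal is correct and follows essentially the same route the paper intends: the corollary rests on the fact that the 3-SAT reduction of the preceding theorem produces a 2-degenerate instance of size linear in $n+m$ (this is what the authors' word ``parsimonious'' is doing there), so a $2^{o(N)}$ algorithm would yield a $2^{o(n+m)}$ algorithm for \textsc{3-Sat}, which contradicts ETH via the Sparsification Lemma. One small wording slip: the Sparsification Lemma does not blow up the number of variables by a constant factor; it expresses a formula as a disjunction of $2^{\epsilon n}$ sparse formulas on the same variables, which is what lets you pass from ``no $2^{o(n)}$'' to ``no $2^{o(n+m)}$'' --- but this does not affect the validity of your argument.
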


Hence, even any subexponential delay seems to be hard to achieve.

Furthermore, the \textsc{Connected Dominating Set Extension} with the standard parameterization $\vert U\vert$, where $U$ is the given subset of $V$, is even \textsf{W}[3]-hard on split graphs. To show this, we need the \textsf{W}[3]-completeness of \textsc{Hitting Set Extension} \cite{BlaFLMS2019,BlaFLMS2022,BlaFriSch2022,CasFGMS2022} with standard  parameterization; in this problem, the input consists of a hypergraph $(X,\mathcal{S})$, with $\mathcal{S}\subseteq 2^X$, and a set $U\subseteq X$, and the question if there exists a minimal hitting set $H$ (this means that $H\cap S\neq\emptyset$ for all $S\in\mathcal{S}$) that extends $U$, \emph{i.e.}, $U\subseteq H$.  

\begin{theorem}\label{thm:Ext-CDS-Wthreehard}
\textsc{Connected Dominating Set Extension} (with standard parameterization) is \textsf{W}[3]-hard, even on split graphs.
\end{theorem}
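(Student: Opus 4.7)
The plan is to give a parameterized reduction from \textsc{Hitting Set Extension}, parameterized by $|U|$, which is \textsf{W}[3]-complete by the cited work, to \textsc{Connected Dominating Set Extension} on split graphs, parameterized by $|U'|$. Given an instance $((X,\mathcal{S}),U)$, I will build a split graph $G=(V,E)$ whose clique side is $K=X\cup\{u\}$ (with all edges of a clique) and whose independent side is $I=\{v_S : S\in\mathcal{S}\}\cup\{w\}$, where $v_S$ is adjacent to exactly the vertices in $S\subseteq X$, and $w$ is adjacent only to $u$. I set $U':=U\cup\{u\}$, so $|U'|=|U|+1$, making the reduction parameter-preserving.

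In the easy direction, starting from a minimal hitting set $H\supseteq U$, I claim that $D:=H\cup\{u\}$ is a minimal CDS containing $U'$. Connectivity is automatic since $D\subseteq K$ sits inside a clique; $u$ dominates $K$ and $w$; and each $v_S$ is dominated because $H$ hits $S$. Minimality of $D$ as a CDS follows from the fact that every vertex in $D$ has a private neighbor outside $D$: the vertex $u$ has the pendant $w$ as its private neighbor, and each $x\in H$ has $v_{S_x}$ as its private neighbor, where $S_x$ is a set uniquely hit by $x$ in $H$ (such $S_x$ exists by minimality of $H$).

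The delicate direction is to show that a minimal CDS $D$ extending $U'$ yields a minimal hitting set extending $U$. The key structural claim is $D\cap I=\emptyset$. For $w$: since $u\in U'\subseteq D$, the vertex $w$ is already dominated and, being a leaf attached to $u$, is never needed for connectivity; so $w\in D$ would contradict minimality. For any $v_S$: its only neighbors are in $S\subseteq K$, so if $v_S\in D$, connectivity of $G[D]$ forces $S\cap D\neq\emptyset$, and $v_S$ is a leaf (its neighbors in $D$ all lie in the clique $K$), hence a non-cut vertex; moreover $v_S$ cannot be the unique dominator in $D$ of any vertex, because the clique structure plus the presence of $u\in D$ ensures every $K$-vertex is dominated by somebody else. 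Thus $D\setminus\{v_S\}$ is still a CDS, contradicting minimality. Consequently $D\subseteq K$, and setting $H:=D\cap X=D\setminus\{u\}$ gives a hitting set (since every $v_S$ is dominated by some $x\in S\cap D$), with $U\subseteq H$. Minimality of $H$ follows by contraposition: if $H'\subsetneq H$ hit all sets, then $H'\cup\{u\}\subsetneq D$ would be a CDS, contradicting minimality of $D$.

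The main obstacle is precisely the argument above that no $v_S$ can be smuggled into a minimal CDS; once this is established, both directions of the correspondence, including minimality, transfer cleanly. The verification that $G$ is a split graph is immediate from the construction ($K$ is a clique and $I$ is independent), and $|U'|=|U|+1$ together with the polynomial-size construction makes this a valid parameterized reduction, yielding \textsf{W}[3]-hardness of \textsc{Connected Dominating Set Extension} on split graphs.
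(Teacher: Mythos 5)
Your reduction is correct and is essentially the paper's own construction: a parameterized reduction from \textsc{Hitting Set Extension} to a split graph whose clique side is the ground set and whose independent side consists of simplicial vertices representing the hyperedges, so that domination of these vertices encodes hitting and minimality transfers in both directions via the observation that a minimal CDS avoids the independent side. The only difference is your extra forced clique vertex $u$ with pendant private neighbor $w$ and $U'=U\cup\{u\}$, which merely replaces the paper's direct correspondence $H\leftrightarrow H$ (under the harmless assumption that $\emptyset$ is not a hitting set) by $H\leftrightarrow H\cup\{u\}$ and cleanly handles the degenerate cases.
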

\begin{proof}
Let $X=\lbrace x_1,\ldots,x_n\rbrace$ be a set and $U\coloneqq\lbrace u_1,\ldots, u_k\rbrace\subseteq X$, $S\coloneqq \lbrace S_1,\ldots,S_m\rbrace\subseteq 2^X$ such that $\emptyset$ is not a hitting set on $S$.
 Let $G=\left(V,E\right)$ be a graph with \begin{equation*}
\begin{split}
V\coloneqq& \lbrace s_1,\ldots,s_m\rbrace\cup X,  \\
E\coloneqq& \{ \{x_i,x_j\}\mid i,j\in \{1,\ldots,n\}, i<j\}\cup \bigcup_{i=1}^m \left(\lbrace \lbrace x_i,s_j\rbrace\vert x_i\in S_j\rbrace \right).
\end{split}
\end{equation*}
Clearly, $X$ is a clique on $G$ and for each $s_j\in S' = \{s_1,\ldots, s_m\}$, $N(s_j) \subseteq X$. Thus, $S'$ is a independent set of simplicial vertices and $G$ is a chordal graph (more particularly, a split graph).
\begin{claim}
$(X,S,U)$ is a yes-instance of  \textsc{Hitting Set Extension} if and only if $(G,U)$ is a yes-instance of  \textsc{Connected Dominating Set Extension}.
\end{claim}
Let $H = \{x_{h_1},\ldots,x_{h_l}\}\subseteq X$ be a minimal hitting set with $ U\subseteq H$. We show that $H$ is a minimal CDS. $H$ is connected in $G$, as $H\subseteq X$ is a clique. Since $H$ is a hitting set, for each $S_j\in S$ there exists some $x_i\in H\cap S_j$. Thus, for each $s_j \in S'$ is dominated by $x_i\in H\cap N(s_j)$. Furthermore, $H$ is not empty. Therefore, $X\cap H\neq \emptyset$ holds and $H$ is a CDS. Assume there exists a $x_{h_i}\in H$ such that $H\setminus \{x_{h_i}\}$ is also a CDS. As $H$ is a clique in $G$, deleting one vertex would not change the connectivity. This implies that for each $s_j\in N(x_{h_i})\cap S'$, there exists a $x_{d_j}\in (H\setminus \{x_{h_i}\})\cap N(s_j)$. Hence, for each $S_j\in S$ with $x_{h_i}\in S_j$, there exists a $x_{d_j}\in (H\setminus \{x_{h_i}\})\cap S_j$. This contradicts the minimality of $H$ (as a hitting set). Thus, $H$ is a minimal CDS in $G$ with $U\subseteq H$.

Let $D$ be a minimal CDS in $G$ with $U\subseteq D$. Assume there exists a $s_j\in D\cap S'$. If $D=\{s_j\}$, $S'=\{s_j\}$ holds, as $S'$ is an independent set. Hence, for each $x_i\in X$, $\{x_i\}$ is a CDS. Let $D\setminus \{s_j\}\neq\emptyset$.  Since $s_j$ is simplicial, for each $v\in N(s_j)\subseteq X$, $N[s_j]\subseteq N[v]$. As $D$ has elements other than $s_j$, $D\cap N(s_j)\neq \emptyset$ holds for connectivity reasons. This implies $D \setminus \{ s_j \}$ is also a CDS, as it is connected (every walk on $G[D]$ that include $s_j$ is also a walk $G[D\setminus \{s_j\}]$ if you remove $s_j$, since its neighbors are neighbors themselves) and is a dominating set.
Therefore, we assume $D = \{x_{d_1}, \ldots, x_{d_l}\} \subseteq X$. Let $S_j\in S$. Since $s_j$ is dominated by a $x_i\in D$, there is $x_i\in S_j \cap D$. Thus, $D$ is a hitting set. Assume there exists a $x_{d_i}\in D$ such that $D\setminus \{ x_{d_i}\}$ is a hitting set. This implies that for each $S_j\in S$ with $x_{d_i}\in S_j$, there exists a $x_{z_j}\in (D \setminus \{x_{d_i}\}) \cap S_j$. Thus, for all $s_j\in N(x_{d_i})$, there is some $x_{z_j}\in (D\setminus \{x_{d_i}\})\cap N(s_j)$. This contradicts the minimality of $D$:  $x_{d_i}$ cannot be in $D$ for connectivity reasons, as $D$ is a clique. Hence, $D$ is a minimal hitting set.
\qed \end{proof}

\begin{remark}
For split graphs, also the reverse of the reduction from \autoref{thm:Ext-CDS-Wthreehard} is true: a minimal CDS in a split graph can be interpreted as a minimal hitting set in the naturally related \textsc{Hitting Set} instance; more details below, also view \autoref{fig:ext-cds:basic-block1}. Therefore, we even have \textsf{W}[3]-completeness of \textsc{Connected Dominating Set Extension} (with standard parameterization) on split graphs. For general graphs, \textsf{W}[3]-membership is still open.
\end{remark}

\begin{theorem}\label{thm:Ext-CDS-Wthreecomplete-split}
\textsc{Connected Dominating Set Extension} (with standard parameterization) is \textsf{W}[3]-complete on split graphs.
\end{theorem}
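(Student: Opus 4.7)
The \textsf{W}[3]-hardness is already established by \autoref{thm:Ext-CDS-Wthreehard}, so the plan only needs to address \textsf{W}[3]-membership. The idea is to exhibit a parameterized reduction, with parameter $k=|U|$, from \textsc{Connected Dominating Set Extension} on split graphs to \textsc{Hitting Set Extension}, which is known to belong to \textsf{W}[3] by \cite{BlaFLMS2019,BlaFLMS2022}.

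Let $(G,U)$ be the given instance, with $G=(V,E)$ a split graph whose vertex set decomposes as $V=C\cup I$, where $C$ is a clique and $I$ an independent set. The first step is a structural observation: any minimal CDS $D$ of $G$ with $|D|\geq 2$ satisfies $D\subseteq C$. Indeed, every $s\in I$ is simplicial, so if $s\in D$ and $|D|\geq 2$, then by connectivity some $v\in N(s)\cap D\subseteq C$ exists, and $N[s]\subseteq N[v]$ allows one to remove $s$ without losing either domination or connectivity, contradicting minimality.

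Using this, I would split into cases handled in polynomial time:
\begin{itemize}
\item If $U\cap I\neq\emptyset$, accept iff $U=\{v\}$ for a universal vertex $v\in I$ (equivalently $N[v]=V$); all other situations are infeasible by the observation above.
\item If $U\subseteq C$ and $I=\emptyset$, then $G$ is a clique; accept iff $|U|\leq 1$.
\item Otherwise $U\subseteq C$ and $I\neq\emptyset$; construct the hitting set instance $(X,\mathcal S,U)$ with $X:=C$ and $\mathcal S:=\{N_G(v)\mid v\in I\}$, and invoke \textsc{Hitting Set Extension} on it with the same parameter $|U|$.
\end{itemize}
This is clearly an FPT-reduction (actually polynomial time, parameter preserving). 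The correctness of the last bullet is the technical heart: I would show that $(G,U)$ is a yes-instance iff $(X,\mathcal S,U)$ is. For the forward direction, a minimal CDS $D\supseteq U$ of size $\geq 2$ lies in $C$ by the structural observation and hits every $N(v)\in\mathcal S$ because $v\in I$ is dominated by $D$; minimality of $D$ as a hitting set follows because any proper subset $D'\subsetneq D$ that still hits $\mathcal S$ is itself a CDS (nonempty subset of the clique $C$, hence connected, dominating $I$ via $\mathcal S$ and $C$ automatically). If $|D|=1$, then $D=\{v\}$ with $v\in C$ universal, and since $I\neq\emptyset$ forces $\mathcal S\neq\emptyset$, $\{v\}$ is also a minimal hitting set. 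Conversely, any minimal hitting set $H\supseteq U$ of $\mathcal S$ satisfies $H\subseteq C$ nonempty, therefore $G[H]$ is a connected clique that dominates $C$ automatically and dominates $I$ by the hitting property, so $H$ is a CDS, and minimality as a hitting set transfers to minimality as a CDS by the same observation used in reverse.

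The main obstacle I anticipate is bookkeeping the degenerate situations (empty $I$, singleton $U$ landing in $I$, size-one minimal CDS coming from a universal vertex), since the canonical correspondence ``minimal CDS in $C$ $\leftrightarrow$ minimal hitting set of $\{N(v)\mid v\in I\}$'' breaks down when $\mathcal S=\emptyset$ or when $U$ forces a size-one solution. All these cases are disposed of by direct polynomial-time checks before the reduction is invoked, so they do not affect the \textsf{W}[3] upper bound. Combined with \autoref{thm:Ext-CDS-Wthreehard}, this yields \textsf{W}[3]-completeness on split graphs.
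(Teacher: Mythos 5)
Your proposal is correct and follows essentially the same route as the paper: it establishes \textsf{W}[3]-membership by a parameter-preserving reduction from \textsc{Connected Dominating Set Extension} on a split graph $(C\cup I,E)$ to \textsc{Hitting Set Extension} on $(C,\{N(v)\mid v\in I\},U)$, using the simpliciality of $I$-vertices to show that (non-singleton) minimal CDS avoid $I$, exactly as in the paper's proof of Theorem~\ref{thm:Ext-CDS-Wthreecomplete-split}. The only difference is that you spell out the degenerate cases (a universal vertex in $I$, $I=\emptyset$, size-one solutions) which the paper passes over with the blanket statement that $U\cap I\neq\emptyset$ yields a no-instance; your explicit case analysis is a harmless (indeed slightly more careful) refinement of the same argument.
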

\begin{proof}
For the completeness, we will again use the \textsc{Hitting Set Extension} with standard parameterization.

Let $G=(V,E)$ a split graph such that there is a clique~$C$ and an independent set $I$ with $V=I \cup C$. By the argumentation from the proof of \autoref{thm:Ext-CDS-Wthreehard}, each minimal CDS~$D$ on~$G$ fulfills $D\cap I=\emptyset.$  

\begin{claim}
$(G,U)$ with $U\subseteq V$ a yes-instance of \textsc{Connected Dominating Set Extension} if and only if $(C,\{ N(i)\mid i\in I\},U)$ is a yes-instance of the \textsc{Hitting Set Extension}.
\end{claim}
As mentioned before, if $U\cap I\neq \emptyset$ holds, this is a no-instance. Therefore, we assume  $U\cap I= \emptyset$. The proof of this claim is the same as the proof of the claim in \autoref{thm:Ext-CDS-Wthreehard}, since the reduction (of \autoref{thm:Ext-CDS-Wthreehard}) would lead to a graph equivalent to~$G$, if we apply it to $(C,\{ N(i)\mid i\in I\},U)$ ($U'$ would be the the image of $U$ by the reduction). 
\qed \end{proof}

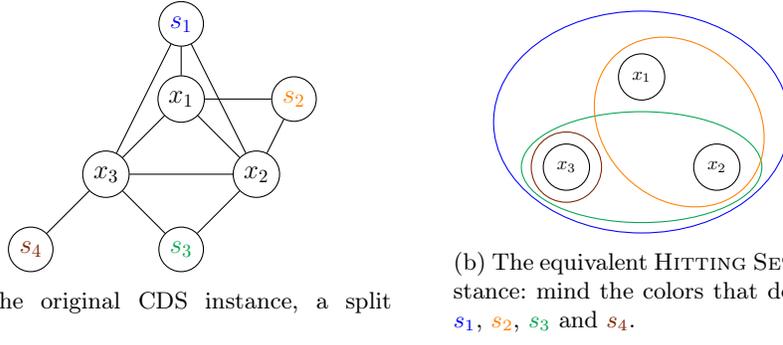
\begin{figure}[tb]
\centering
\begin{subfigure}[l]{.5\textwidth}\centering
    \begin{tikzpicture}[->,
scale=1, every node/.style={anchor=center, scale=0.7},node distance=1cm,main node/.style={circle,fill=white!20,draw,font=\sffamily\Large\bfseries}]
\node [main node] (a_1) at (0,0) {$x_1$};
\node [main node] (a_2) at (1,-1) {$x_2$};
\node [main node] (a_3) at (-1,-1) {$x_3$};
\node  [main node] (t_1) at (0,1) {\color{blue}$s_1$};
\node  [main node] (t_2) at (1.5,0) {\color{orange}$s_2$};
\node  [main node] (t_3) at (0,-2) {\color{Green}$s_3$};
\node  [main node] (t_4) at (-2,-2) {\color{Brown}$s_4$};
\begin{scope}[-]
\draw [thin] (a_1) -- (a_2);
\draw [thin] (a_2) -- (a_3);
\draw [thin] (a_3) -- (a_1);
\draw [thin] (t_1) -- (a_1);
\draw [thin] (t_1) -- (a_2);
\draw [thin] (t_1) -- (a_3);
\draw [thin] (t_2) -- (a_1);
\draw [thin] (t_2) -- (a_2);
\draw [thin] (t_3) -- (a_2);
\draw [thin] (t_3) -- (a_3);
\draw [thin] (t_4) -- (a_3);
\end{scope}

\end{tikzpicture}
    \subcaption{The original CDS instance, a split graph.}
\end{subfigure}
\qquad 
\begin{subfigure}[l]{.41\textwidth}\centering
\begin{tikzpicture}[->,
scale=1, every node/.style={anchor=center, scale=0.7},node distance=2cm,main node/.style={circle,fill=white!20,draw,font=\sffamily\Large\bfseries},u node/.style={rectangle,fill=white!20,draw,font=\sffamily\Large\bfseries},minimum height=25pt, minimum width = 25pt]
\node [circle,draw] (x_1) at (0,1.2) {$x_1$};
\node [circle,draw] (x_2) at (1,0) {$x_2$};
\node [circle,draw] (x_3) at (-1,0) {$x_3$};
\node [ellipse,draw,fit=(x_1)(x_2)(x_3),color=blue,minimum height=12em, minimum width = 16em]{};
\node [ellipse,draw,fit=(x_2)(x_3),color=Green,minimum height=6em, minimum width = 13em]{};
\node [ellipse,draw,fit=(x_2)(x_1),color=orange,minimum height=2em, minimum width = 10em,rotate=-45]{};
\node [ellipse,draw,fit=(x_3),color=Brown,minimum height=3.8em, minimum width = 3.8em]{};
\end{tikzpicture}
    \subcaption{The equivalent \textsc{Hitting Set} instance: mind the colors that define {\color{blue}$s_1$}, {\color{orange}$s_2$}, {\color{Green}$s_3$} and~{\color{Brown}$s_4$}.}
\end{subfigure}

\caption{Reduction between \textsc{Connected Dominating Set Extension} and \textsc{Hitting Set Extension}; the set $U$ is not specified.}
\label{fig:ext-cds:basic-block1}
\end{figure}

There is yet another nearly final stroke against hopes for obtaining a polynomial-delay enumeration algorithm for minimal CDS. Namely, we could see the reduction presented for \autoref{thm:Ext-CDS-Wthreehard} also as a reduction that proves the following statement, which connects our enumeration problem to \textsc{Hitting Set Transversal}, a problem notoriously open for decades.

\begin{theorem}
If there was an algorithm for enumerating minimal CDS with polynomial delay in split graphs, then there would be an algorithm for enumerating minimal hitting sets in hypergraphs with polynomial delay.
\end{theorem}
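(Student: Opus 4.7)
The plan is to recycle the split-graph reduction from the proof of \autoref{thm:Ext-CDS-Wthreehard}, but applied to the \emph{enumeration} problem and without any extension constraint. Given an instance $(X,\mathcal{S})$ of \textsc{Hitting Set}, I would first construct in linear time the split graph $G$ on vertex set $X\cup\{s_S\mid S\in\mathcal{S}\}$ in which~$X$ forms a clique and each vertex~$s_S$ is adjacent exactly to the elements of~$S$. The high-level strategy is then to run the hypothetical polynomial-delay minimal-CDS enumerator on~$G$ and to emit a suitably filtered stream of its outputs as minimal hitting sets.

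The second step is to make precise the correspondence that justifies this filtering. The arguments already used inside the proof of \autoref{thm:Ext-CDS-Wthreehard} show that, for any $H\subseteq X$, $H$ is a minimal hitting set of $(X,\mathcal{S})$ if and only if $H$ is a minimal CDS of~$G$: connectivity of~$H$ is automatic because~$X$ is a clique, $s_S$ is dominated by~$H$ precisely when $H\cap S\neq\emptyset$, and minimality transfers verbatim. To close the bijection I would argue that no minimal CDS of~$G$ can properly contain a vertex of $I:=\{s_S\mid S\in\mathcal{S}\}$: if $s_S\in D$ with $|D|>1$, connectivity forces some $v\in D\cap N(s_S)\subseteq X$, and since $s_S$ is simplicial with $N[s_S]\subseteq N[v]$, the smaller set $D\setminus\{s_S\}$ is still a connected dominating set, contradicting minimality. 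The only remaining case is that a minimal CDS is a singleton $\{s_S\}$, which forces $S=X$ and $\mathcal{S}=\{X\}$, a degenerate situation that can be detected and handled outside the enumerator in polynomial time.

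Putting it together, I would simulate the assumed polynomial-delay CDS enumerator on~$G$, test each produced set~$D$ for inclusion in~$X$ in polynomial time, and emit~$D$ as a minimal hitting set whenever this test succeeds. The main obstacle to worry about is that filtering could in principle degrade the delay by skipping many consecutive outputs; here, however, the analysis in the previous paragraph shows that at most one output of the CDS enumerator is ever skipped across the whole run, so each emitted hitting set follows within polynomial time of its predecessor, matching the standard definition of polynomial delay. Combined with the bijection, this yields the desired polynomial-delay enumeration algorithm for minimal hitting sets of arbitrary hypergraphs.
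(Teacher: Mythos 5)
Your proposal is correct and follows essentially the same route as the paper, which proves this theorem precisely by reusing the split-graph construction of \autoref{thm:Ext-CDS-Wthreehard} together with the correspondence (established there and in \autoref{thm:Ext-CDS-Wthreecomplete-split}) between minimal hitting sets of $(X,\mathcal{S})$ and minimal CDS of the split graph. Your additional bookkeeping—observing that in the non-degenerate case no enumerated CDS is ever filtered out, and handling the single degenerate case $\mathcal{S}=\{X\}$ separately—just makes explicit why the polynomial delay is preserved.
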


\section{Lower bounds}\label{sec-lowerbounds}
 
Several attempts to construct lower bound examples are known from the literature, leading to $3^{(n-2)/3}\in\Omega(\oldcdslowerbound^n)$ many minimal connected dominating sets in $n$-vertex graphs~\cite{GolHegKra2016,Say2019}. The construction of Sayadi~\cite{Say2019} is interesting insofar, as the resulting graphs are not only convex bipartite, but also 3-degenerate.

We now present an improved lower bound on the maximum number of minimal connected dominating sets in a graph. Further details about this construction can be found in \cite{Abu2021}.
Given arbitrary positive integers $k,t$, we construct a graph $G_t^k$ of order $n = k(2t+1) + 1$ as follows. The main building blocks of $G_t^k$ consist of $k$ copies of a base-graph $G_{t}$, of order $2t-1$. The vertex set of $G_{t}$ consists of three layers. The first one is a set  $X=\{x_{1},\ldots, x_{t}\}$ that induces a clique. The second one is an independent set $Y=\{y_{1},\ldots, y_{t}\}$, while the third layer consists of a singleton $\{z\}$.
Each vertex $x_{i}\in X$ has exactly $t-1$ neighbors in $Y$: $N(x_{i}) = \{y_{j}\in Y: i\neq j\}$. Hence, $X\cup Y$ induces a copy of $K_{t,t}$ minus a perfect matching. Finally the vertex $z$ is adjacent to all vertices in $Y$. Figure \ref{basic-block} shows the graph $G_t$ for $t=4$.
To finally construct the graph $G_t^k$, we introduce a final vertex~$s$ that is connected to all vertices of each set~$X$ of each copy of the base-graph.

\begin{lemma}
\label{Gi}
For each $t>0$, the graph $G_t$ has exactly $\frac{t^3+t^2}{2}-t$ minimal connected dominating sets that have non-empty intersection with the set $X$.
\end{lemma}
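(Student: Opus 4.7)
The plan is to classify minimal CDS $D$ of $G_t$ with $D\cap X\neq\emptyset$ by $a:=|D\cap X|$, and then to sum the contributions to obtain $\binom{t}{2}(t+2)=\frac{t^3+t^2}{2}-t$. First I would establish that $a\le 2$: if three $X$-vertices $x_{i_1},x_{i_2},x_{i_3}$ all lie in $D$, then deleting $x_{i_1}$ cannot disconnect $G[D]$ (the clique edge $x_{i_2}x_{i_3}$ survives, and each $y_j\in D$ has at least two of the three as neighbors because $x_l\sim y_j$ iff $l\ne j$) nor undominate any vertex (every neighbor of $x_{i_1}$ lies in $X\cup Y$ and is still dominated by $x_{i_2}$ or $x_{i_3}$). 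Hence $D$ would not be minimal.

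For the case $a=2$, fix $D\cap X=\{x_{i_1},x_{i_2}\}$, let $Y_0:=D\cap Y$ and $\varepsilon:=[z\in D]$. The plan is to first rule out $\varepsilon=1$. When $z\in D$, the vertex $z$ dominates $y_{i_1},y_{i_2}$, so the essentiality of $x_{i_k}$ reduces to connectivity; tracing the edges of $G[D\setminus\{x_{i_1}\}]$ shows that $x_{i_2}$ is isolated only when $Y_0=\{y_{i_2}\}$, so $x_{i_1}$ essential forces $Y_0=\{y_{i_2}\}$, while symmetrically $x_{i_2}$ essential forces $Y_0=\{y_{i_1}\}$, which is impossible. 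With $\varepsilon=0$, $z$ must be dominated by some $y_j\in Y_0$, and since removing any $y_j$ from $D$ keeps $G[D]$ connected (via the clique edge and the remaining $y_l$'s) and keeps all of $X\cup Y$ dominated, $y_j$ is essential only as the private dominator of $z$, forcing $Y_0=\{y_j\}$. A direct verification shows that $\{x_{i_1},x_{i_2},y_j\}$ is indeed minimal for every $j\in\{1,\dots,t\}$: removing $x_{i_1}$ either disconnects (when $j=i_2$) or leaves $y_{i_2}$ undominated, and removing $y_j$ leaves $z$ undominated. This gives $\binom{t}{2}\cdot t$ minimal CDS.

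For the case $a=1$, fix $D\cap X=\{x_i\}$. If $z\notin D$, then $y_i$ is either missing from $D$ and undominated (its only neighbors are $X\setminus\{x_i\}$ and $z$, none of which lie in $D$) or belongs to $D$ and is isolated in $G[D]$ since $x_i\not\sim y_i$; either way $D$ is not a CDS. Hence $z\in D$ and $D=\{x_i,z\}\cup Y_0$ with $Y_0\setminus\{y_i\}\neq\emptyset$ for connectivity. Now $G[D\setminus\{x_i\}]$ is a star centered at $z$ with leaves $Y_0$, so it stays connected; thus $x_i$ can only be essential through a private neighbor. The only candidate is an $x_l$ with $l\neq i$ satisfying $N[x_l]\cap D=\{x_i\}$, and since $N(x_l)\supseteq Y\setminus\{y_l\}$ this forces $Y_0\subseteq\{y_l\}$ and hence $Y_0=\{y_l\}$ with $l\neq i$. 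Verifying minimality of $\{x_i,y_l,z\}$ is direct (deleting $x_i$ leaves $x_l$ undominated, deleting $y_l$ disconnects $x_i$ from $z$, deleting $z$ leaves $y_i$ undominated), contributing $t(t-1)$ sets.

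Summing, $\binom{t}{2}\cdot t+t(t-1)=\frac{t^2(t-1)}{2}+t(t-1)=\frac{t(t-1)(t+2)}{2}=\frac{t^3+t^2}{2}-t$, as claimed. The main technical obstacle is the sub-case $a=2$ with $z\in D$, where one must carefully argue that the presence of $z$ neutralizes the natural private-neighbor roles of $y_{i_1}$ and $y_{i_2}$, reducing the essentiality of each $x_{i_k}$ to a pure connectivity role and producing contradictory constraints on $Y_0$.
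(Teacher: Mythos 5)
Your proof is correct and follows essentially the same route as the paper: bound $|D\cap X|\le 2$, then identify the minimal CDS meeting $X$ in one vertex as the $t(t-1)$ sets $\{x_i,z,y_l\}$ with $l\neq i$, and those meeting $X$ in two vertices as the $\binom{t}{2}\cdot t$ sets $\{x_{i_1},x_{i_2},y_j\}$, summing to $\frac{t^3+t^2}{2}-t$. The paper's argument is simply a terser version of your case analysis (it does not spell out ruling out $z$ in the two-$X$-vertex case or the uniqueness of the $Y$-part, which you verify explicitly and correctly).
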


\begin{proof}
The set $X$ cannot have more than two vertices in common with any minimal CDS, since any two elements of~$X$ dominate $X\cup Y$. Any minimal CDS that contains exactly one vertex~$x_i$ of~$X$ must contain the vertex $z$, to dominate $y_i$, and one of the $t-1$ neighbors of~$x_i$ (to be connected). There are $t(t-1)$ sets of this type.
Moreover, each pair of elements of~$X$ dominates~$Y$. So a minimal CDS can be formed by (any) two elements of $X$ and any of the elements of $Y$ (to dominate $z$). There are $t\frac{t(t-1)}{2}$ such sets. 
\qed \end{proof}

The hub-vertex $s$ in $G_t^k$ must be in any CDS, being a cut-vertex. Therefore, there is no need for the set~$X$ in $G_t$ to induce a clique, being always dominated by~$s$. In other words, the counting used in the proof above still holds if each copy of $G_t$ is replaced by $G_{t}-E(X)$ in $G_t^k$. Here, $E(X)$ denotes the set of edges in $G_t[X]$.
\autoref{G3-3} shows $G_3^3$ without the edges between pairs of element of~$X$ in each copy of~$G_3$. With the help of \autoref{Gi}, we can show:

\begin{figure}[tb]
\centering
\begin{subfigure}[b]{.28\textwidth}
\centering
\scalebox{.78}{\begin{tikzpicture}[->,
scale=0.2, every node/.style={anchor=center, scale=0.65},node distance=1cm,main node/.style={circle,fill=white!20,draw,font=\sffamily\Large\bfseries}]

\node [main node] (1) at (-2,0) {\(~z~\)};
\node[main node] (2) at (-12,-4) {\(y_1\)};
\node [main node] (3) at (-6,-4) {\(y_2\)};
\node [main node] (4) at (2,-4) {\(y_3\)};
\node [main node] (5) at (8,-4) {\(y_3\)};
\node[main node] (6) at (-12,-12) {\(x_1\)};
\node [main node] (7) at (-6,-12) {\(x_2\)};
\node [main node] (8) at (2,-12) {\(x_3\)};
\node [main node] (9) at (8,-12) {\(x_4\)};

\begin{scope}[-]

\draw [thin] (1) -- (2);
\draw [thin] (1) -- (3);
\draw [thin] (1) -- (4);
\draw [thin] (1) -- (5);
\draw [thin] (2) -- (7);
\draw [thin] (2) -- (8);
\draw [thin] (2) -- (9);
\draw [thin] (3) -- (6);
\draw [thin] (3) -- (8);
\draw [thin] (3) -- (9);
\draw [thin] (4) -- (6);
\draw [thin] (4) -- (7);
\draw [thin] (4) -- (9);
\draw [thin] (5) -- (6);
\draw [thin] (5) -- (7);
\draw [thin] (5) -- (8);
\draw [thin] (6) -- (7);
\draw [thin] (6) to[bend right] (8);
\draw [thin] (6) to[bend right] (9);
\draw [thin] (7) -- (8);
\draw [thin] (8) -- (9);
\draw [thin] (7) to[bend right] (9);

\end{scope}

\end{tikzpicture}}

\subcaption{The graph $G_4$ }
\label{basic-block}
\end{subfigure}
\qquad 
\begin{subfigure}[b]{.6\textwidth}
\centering
\scalebox{0.78}{\begin{tikzpicture}[->,
scale=0.2, every node/.style={anchor=center, scale=0.6},node distance=1cm,main node/.style={circle,fill=white!20,draw,font=\sffamily\Large\bfseries}]

\node [main node] (1) at (-17,0) {\(~z_1~\)};
\node[main node] (2) at (-22,-4) {\(y_{12}\)};
\node [main node] (3) at (-17,-4) {\(y_{13}\)};
\node [main node] (4) at (-12,-4) {\(y_{11}\)};
\node[main node] (5) at (-22,-8) {\(x_{11}\)};
\node [main node] (6) at (-17,-8) {\(x_{12}\)};
\node [main node] (7) at (-12,-8) {\(x_{13}\)};

\node [main node] (8) at (-2,0) {\(~z_2~\)};
\node[main node] (9) at (-7,-4) {\(y_{22}\)};
\node [main node] (10) at (-2,-4) {\(y_{23}\)};
\node [main node] (11) at (3,-4) {\(y_{21}\)};
\node[main node] (12) at (-7,-8) {\(x_{21}\)};
\node [main node] (13) at (-2,-8) {\(x_{22}\)};
\node [main node] (14) at (3,-8) {\(x_{23}\)};

\node [main node] (15) at (13,0) {\(~z_3~\)};
\node[main node] (16) at (8,-4) {\(y_{32}\)};
\node [main node] (17) at (13,-4) {\(y_{33}\)};
\node [main node] (18) at (18,-4) {\(y_{31}\)};
\node[main node] (19) at (8,-8) {\(x_{31}\)};
\node [main node] (20) at (13,-8) {\(x_{32}\)};
\node [main node] (21) at (18,-8) {\(x_{33}\)};

\node [main node] (22) at (-2,-13) {\(~s~\)};

\node[] (50) at (-17,2) {};

\begin{scope}[-]

\draw [thin] (8) -- (9);
\draw [thin] (8) -- (10);
\draw [thin] (8) -- (11);
\draw [thin] (12) -- (9);
\draw [thin] (12) -- (10);
\draw [thin] (13) -- (10);
\draw [thin] (13) -- (11);
\draw [thin] (14) -- (11);

\draw [thin] (22) to[bend left] (5);
\draw [thin] (22) to[bend left] (6);
\draw [thin] (22) -- (7);
\draw [thin] (22) -- (12);
\draw [thin] (22) -- (13);
\draw [thin] (22) -- (14);
\draw [thin] (22) -- (19);
\draw [thin] (22) to[bend right] (20);
\draw [thin] (22) to[bend right] (21);

\draw [thin] (1) -- (2);
\draw [thin] (1) -- (3);
\draw [thin] (1) -- (4);
\draw [thin] (5) -- (2);
\draw [thin] (5) -- (3);
\draw [thin] (6) -- (3);
\draw [thin] (6) -- (4);
\draw [thin] (7) -- (4);
\draw (2) .. controls (-17,8) and (-7,-2) .. (7);
\draw (9) .. controls (-2,8) and (8,-2) .. (14);

\draw (16) .. controls (13,8) and (23,-2) .. (21);

\draw [thin] (15) -- (16);
\draw [thin] (15) -- (17);
\draw [thin] (15) -- (18);
\draw [thin] (19) -- (16);
\draw [thin] (19) -- (17);
\draw [thin] (20) -- (17);
\draw [thin] (20) -- (18);
\draw [thin] (21) -- (18);

\end{scope}

\end{tikzpicture}}

\subcaption{The graph $G_3^3$ }
\label{G3-3}
\end{subfigure}
\caption{How our lower bound examples are composed.}
\end{figure}
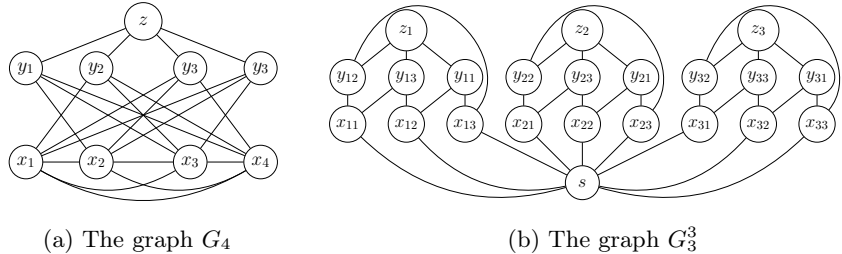

\begin{theorem}
\label{lowerbound}
The maximum number of minimal CDS in a  connected graph of order~$n$ is in $\Omega(\newcdslowerbound^n)$; an example family of graphs is $(G_4^k)$.
\end{theorem}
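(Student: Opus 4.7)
The plan is to show that the family $(G_4^k)$ has at least $36^k$ minimal CDS. With $t=4$ in Lemma~\ref{Gi}, the base graph $G_4$ admits exactly $\tfrac{t(t-1)(t+2)}{2}=36$ minimal CDS meeting~$X$, namely Type~1 sets $\{x_a,z,y_b\}$ with $b\neq a$ and Type~2 sets $\{x_a,x_b,y_c\}$ with $a<b$ and $c$ arbitrary. Since $G_4^k$ has $n=k(2t+1)+1=9k+1$ vertices and $36^{1/9}>1.4890$, a count of at least $36^k$ minimal CDS immediately implies $\Omega(\newcdslowerbound^n)$.

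First I would observe that $s$ is a cut vertex separating the $k$ copies in $G_t^k$, hence lies in every CDS. Then, for any tuple $(D_1,\ldots,D_k)$ where each $D_i$ is one of the $36$ candidate sub-CDS placed inside the $i$-th copy, I would show that $D:=\{s\}\cup\bigcup_{i=1}^k D_i$ is a minimal CDS of $G_4^k$. Domination of every $X$-vertex in every copy is by $s$ directly (which is precisely why the excerpt could delete $E(X_i)$ from each copy without affecting the count of Lemma~\ref{Gi}), and $D_i$ dominates $Y_i\cup\{z_i\}$ exactly as in the proof of Lemma~\ref{Gi}. Connectivity follows because every $D_i$ contains at least one $X$-vertex adjacent to~$s$, and the internal edges used by Lemma~\ref{Gi}'s sub-CDS are of the form $xy$ (with $y\in Y_i$) or $yz_i$, none of which lie in $E(X_i)$; in the Type~2 case with $c\notin\{a,b\}$, $y_c$ bridges $x_a$ and $x_b$ directly, whereas if $c\in\{a,b\}$ the bridging is via the path $x_a$–$s$–$x_b$.

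For minimality I would mirror the case distinction from Lemma~\ref{Gi} inside each copy. Removing $s$ disconnects the copies. For a Type~1 $D_i=\{x_a,z_i,y_b\}$, dropping $y_b$ leaves $x_a$ and $z_i$ in different components (no edge $x_az_i$), dropping $x_a$ isolates $\{z_i,y_b\}$ from $s$ (neither $z_i$ nor $y_b$ is adjacent to $s$), and dropping $z_i$ leaves $y_a$ undominated (only $z_i$ could dominate $y_a$ within $D_i$, since $x_a\not\sim y_a$). For a Type~2 $D_i=\{x_a,x_b,y_c\}$, dropping $y_c$ leaves $z_i$ undominated, while dropping $x_a$ (resp. $x_b$) leaves $y_b$ (resp. $y_a$) undominated when $c\notin\{a,b\}$, and disconnects the surviving $\{x_\star,y_c\}$ from $s$ when $c\in\{a,b\}$. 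Crucially, $s$ dominates only $X$-vertices, so none of these failures can be repaired by $s$, and removals in different copies cannot compensate for each other.

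Since the $k$ choices are independent and yield pairwise distinct sets~$D$, we obtain $36^k$ minimal CDS of $G_4^k$, hence $\Omega(\newcdslowerbound^n)$. The main obstacle I expect is the minimality check for Type~2 sub-CDS when $c\in\{a,b\}$, where $D_i$ is already disconnected inside the copy and relies on $s$ for bridging $x_a$ and $x_b$; the case analysis above handles this by showing that removing either $x_a$ or $x_b$ invariably leaves a $Y_i$-vertex undominated, and removing $y_c$ kills domination of $z_i$, so minimality is preserved copy by copy.
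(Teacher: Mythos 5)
Your proposal is correct and takes essentially the same route as the paper: Lemma~\ref{Gi} supplies the $36$ per-copy sets, every choice of one such set in each of the $k$ copies together with the cut vertex~$s$ gives a minimal CDS, and $36^{(n-1)/9}\in\Omega(\newcdslowerbound^n)$ follows; you merely spell out the lifting and per-copy minimality checks that the paper's proof leaves implicit. One harmless slip in your Type~2 subcase $c\in\{a,b\}$: deleting the $x$-vertex whose index equals $c$ fails because $y_b$ (resp.\ $y_a$) becomes undominated, not by disconnection, while deleting the other $x$-vertex leaves $y_c$ isolated (so only $y_c$, not the surviving pair, is cut off from~$s$) --- every removal still destroys the CDS property, so the argument stands.
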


\begin{proof}
By \autoref{Gi}, each copy of the graph $G_{t}$ has $\frac{t^3+t^2}{2}-t$ minimal CDS. There are $k$ such graphs in $G_t^k$, in addition to the vertex~$s$ that connects them all. 
Every minimal CDS must contain~$s$ and at least one element from $N(s)$ in each $G_{t}$.
Therefore, the total number of minimal CDS in $G_t^k$ is $(\frac{t^3+t^2}{2}-t)^k = (\frac{t^3+t^2}{2}-t)^{\frac{n-1}{2t+1}}$. The claimed lower bound is achieved when $t=4$, which gives a total of $36^{\frac{n-1}{9}} \in \Omega(\newcdslowerbound^n)$.
\qed \end{proof}

We note that $G_t^k$ is a $t$-degenerate graph that is also bipartite (since the set $X$ in each copy of $G_{t}$ can be an independent set). Furthermore, $G_3^k$ is planar, as intentionally drawn in  \autoref{G3-3}. Our general formula (\autoref{Gi}) yields that $G_3^k$ 
has $15^{n/7}=\Omega(\newcdsthreedegenlowerbound^n)$ minimal CDS, hence improving on the previously mentioned construction for 3-degenerate graphs in~\cite{Say2019}. 

\begin{corollary}
The maximum number of minimal connected dominating sets in a 3-degenerate bipartite planar graph of order~$n$ is in $\Omega(\newcdsthreedegenlowerbound^n)$.
\end{corollary}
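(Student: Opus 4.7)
The plan is to specialize the construction $G_t^k$ from the preceding theorem to $t=3$ and then verify the three graph-class properties on top of a lower-bound computation analogous to the one carried out for $t=4$.

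First, I would invoke Lemma~\ref{Gi} with $t=3$, which yields $\tfrac{3^3+3^2}{2}-3=15$ minimal connected dominating sets of $G_3$ having non-empty intersection with the clique~$X$. Combining $k$ copies of $G_3$ via the hub~$s$ exactly as in the proof of Theorem~\ref{lowerbound}, and using that every minimal CDS must contain $s$ and pick one of the $15$ local choices independently in every copy, gives $15^k$ minimal connected dominating sets in $G_3^k$, which is a graph of order $n=k(2\cdot 3+1)+1=7k+1$. Therefore the count is $15^{(n-1)/7}\in\Omega(\newcdsthreedegenlowerbound^n)$, since $15^{1/7}>\newcdsthreedegenlowerbound$.

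Next I would verify the three structural properties. As noted right before Lemma~\ref{Gi}, the edges inside each set~$X$ are not required for the counting argument because~$s$ dominates~$X$ anyway; hence we may replace each copy of $G_3$ by $G_3-E(X)$ without losing any minimal CDS, and the resulting graph is bipartite with parts $\{s\}\cup\bigcup_i Y_i$ and $\bigcup_i(X_i\cup\{z_i\})$. For $3$-degeneracy, I would exhibit a degeneracy ordering by peeling vertices in the order $z_1,y_{11},y_{12},y_{13},x_{11},x_{12},x_{13}$ for the first copy (each having degree at most~$3$ when removed), then the same pattern for each subsequent copy, and finally~$s$; every vertex has degree at most $3$ at the moment of its removal. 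For planarity, I would refer to the explicit planar drawing of $G_3^3$ already given in \autoref{G3-3}, which generalizes to arbitrary~$k$ by arranging the $k$ base copies around~$s$.

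The only step that requires any care is the degeneracy ordering: one must check that inside a single $G_3-E(X)$ block, every $x_i$ has exactly two $Y$-neighbors plus~$s$, every $y_j$ has $z$ plus the two $x_i$ with $i\neq j$, and $z$ has degree~$3$ in the block, so all intra-block vertices are of degree at most $3$ once~$s$ is postponed to the end; the hub~$s$ itself is removed last with degree~$0$. This confirms $3$-degeneracy, completes the case analysis, and yields the claimed bound.
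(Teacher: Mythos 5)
Your proposal is correct and follows essentially the same route as the paper, which obtains the corollary by specializing the $G_t^k$ construction to $t=3$ (with the edges $E(X)$ removed), applying Lemma~\ref{Gi} to get $15^{(n-1)/7}$ minimal CDS, and observing $3$-degeneracy, bipartiteness, and planarity as in \autoref{G3-3}. The only difference is that you spell out the degeneracy ordering and the bipartition explicitly, which the paper merely asserts.
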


Finally, $G_2^k$ is a 2-degenerate graph of order $n$ with $4^{n/5}=\Omega(\newcdstwodegenlowerbound^n)$ many minimal CDS, incidentally matching the best known lower bound 
in cobipartite graphs \cite{CouHHK2012}.

\section{Conclusions and Open Problems}

In our paper, we focused on developing an input-sensitive enumeration algorithm for minimal CDS. We achieved some notable progress both on the running time for such enumeration algorithms and with respect to the lower bound examples. However, the gap between lower and upper bound is still quite big, and the natural question to ask here is to bring lower and upper bounds closer; in an optimal setting, both would match. We are working on a further refined version that will bring down the upper bound a bit, but not decisively.
This question of non-matching upper and lower bounds is also open for most special graph classes. 

One particular such graph class that is studied in this paper is the class of 2-degenerate graphs. We like to suggest to study this graph class also for other enumeration problems, or, more generally, for problems that involve a measure-and-conquer analysis of branching algorithms, because this was the key to break the $2^n$-barrier significantly for enumerating minimal CDS with measure-and-conquer, something that seemed to be impossible with other more standard approaches, like putting weights to low-degree vertices.

As we also proved that the extension problem associated to CDS is computationally intractable even on 2-degenerate graphs, it is not that straightforward to analyze our enumeration algorithm with the eyes of output-sensitive analysis. Conversely, should it be possible to find an efficient algorithm for an extension problem, also on special graph classes, then usually polynomial-delay algorithms can be shown; as a recent example in the realm of domination problems, we refer to the enumeration of minimal Roman  functions described in~\cite{AbuFerMan2022}. So, in the context of our problem, we can ask: 
Can we achieve polynomial delay for any enumeration algorithm for minimal CDS? Can we combine this analysis with a good input-sensitive enumeration approach?
Notice that the corresponding questions for enumerating minimal dominating sets are an open question for decades. This is also known as the \textsc{Hitting Set Transversal}  problem; see \cite{CreKPSV2019,EitGot95,GaiVer2017,KanLMN2014}. We therefore also presented relations between polynomial-delay enumeration of minimal dominating sets and that of minimal CDS, again explaining the difficulty of the latter question. Finally, we also briefly discussed the possibility of subexponential delay. We propose to discuss this question further also for other enumeration problems when polynomial delay is not achievable, as it might well be a practical solution to know that the delay time is substantially smaller than the time needed to enumerate all solutions, but not polynomial time. 

We also discussed parameterized complexity aspects of \textsc{Connected Dominating Set Extension}, leaving \textsf{W}[3]-membership an an open question in the general case.


\end{document}